\renewcommand\epsilon{\varepsilon}
\renewcommand\hat{\widehat}
\newcommand\F{\mathbb{F}}
\newcommand\Fq{\mathbb{F}_q}
\newcommand\PP{\mathbb{P}}
\newcommand\Z{\mathbb{Z}}
\newcommand\N{\mathbb{N}}
\newcommand\floor[1]{\left\lfloor #1 \right\rfloor}
\newcommand\ceil[1]{\left\lceil #1 \right\rceil}
\newcommand\gen[1]{\left\langle #1 \right\rangle}
\newcommand\fold[1]{\textsf{\textbf{Fold}}\left[#1\right]}
\newcommand\poles[1]{\div_\infty\left( #1 \right)}
\newcommand\zeroes[1]{\div_0\left( #1 \right)}
\renewcommand{\hat}{\widehat}
\renewcommand{\bar}{\overline}
\DeclareMathOperator{\polylog}{polylog}
\DeclareMathOperator{\Supp}{Supp}
\DeclareMathOperator{\Aut}{Aut}
\DeclareMathOperator{\Div}{Div}
\let\div\relax\DeclareMathOperator{\div}{div}
\def\rest#1#2{{{#1}_{|#2}}}
\newcommand\RS[1]{\mathsf{RS}\left[\F,#1\right]}
\newcommand\prover{\mathsf{P}}
\newcommand\verifier{\mathsf{V}}
\renewcommand{\vec}[1]{\boldsymbol{#1}}
\newcommand{\set}[1]{\left\{#1\right\}}
\newcommand{\size}[1]{\left|#1\right|}
\newcommand{\range}[1]{\set{0,\dots,#1-1}}
\newcommand{\Range}[2]{\set{#1,\dots,#2}}
\newcommand\bfz{{\vec{z}}}
\newcommand\calY{\mathcal{Y}}
\newcommand\calX{\mathcal{X}}
\newcommand\calH{\mathcal{H}}
\newcommand\calF{\mathcal{F}}
\newcommand\calP{\mathcal{P}}
\newcommand\calT{\mathcal{T}}
\newcommand\calG{\mathcal{G}}
\newcommand\seqG{\vec{\calG}}
\newcommand\XGseq{(\calX, \vec{\calG})}
\newcommand\xpi[1]{#1^{(i)}}
\newcommand\xpip[1]{#1^{(i+1)}}
\newcommand{\err}{\textsf{err}}
\newcommand\mydef{\coloneqq}
\newcommand{\cd}{\cdot}
\newcommand{\ii}{i_{\rm{max}}}
\newtheorem{notation}{Notation}
\title{Interactive Oracle Proofs of Proximity to Algebraic Geometry Codes}
\author{Sarah {Bordage}}%
{LIX, CNRS UMR 7161, Ecole Polytechnique, Institut Polytechnique de Paris \and Inria, Palaiseau, France}% 
{sarah.bordage@lix.polytechnique.fr}%
{}%
{Supported by the Chair ``Blockchain \& B2B Platforms'', led by l'\emph{X -- {\'E}cole Polytechnique} and the \emph{Fondation de l'{\'E}cole Polytechnique}, sponsored by \emph{Capgemini}, \emph{NomadicLabs} and \emph{Caisse des Dépôts}.}
\author{Mathieu {Lhotel}}%
{Laboratoire de Mathématiques de Besançon, UMR 6623 CNRS, Université de Bourgogne Franche-Comté, France}% 
{mathieu.lhotel@univ-fcomte.fr}%
{}%Orcid
{}%Fundings
\author{Jade {Nardi}}%
{Univ Rennes, CNRS, IRMAR - UMR 6625, F-35000 Rennes, France}% 
{jade.nardi@univ-rennes1.fr}%
{https://orcid.org/0000-0003-0901-7266}%Orcid
{Suported by the French government ``Investissements d’Avenir'' program ANR-11-LABX-0020-01.}%Fundings
\author{Hugues {Randriam}}%
{ANSSI, Paris, France \and Institut Polytechnique de Paris, Télécom Paris, Palaiseau, France}% 
{randriam@telecom-paris.fr}%
{}%Orcid
{}%Fundings
\authorrunning{S. Bordage, M. Lhotel, J. Nardi and H. Randriam}
\keywords{Algebraic geometry codes, Interactive oracle proofs of proximity, Proximity testing} %%TODO mandatory; please add comma-separated list of keywords
\begin{document}
	\nolinenumbers

\maketitle

\begin{abstract}
In this work, we initiate the study of proximity testing to Algebraic Geometry (AG) codes. An AG code $C = C(\calX, \calP, D)$ over an algebraic curve $\calX$ is a vector space associated to evaluations on $\calP \subseteq \calX$ of functions in the Riemann-Roch space $L_\calX(D)$. The problem of testing proximity to an error-correcting code $C$ consists in distinguishing between the case where an input word, given as an oracle, belongs to $C$ and the one where it is far from every codeword of $C$. AG codes are good candidates to construct probabilistic proof systems, but there exists no efficient proximity tests for them. We aim to fill this gap.

We construct an Interactive Oracle Proof of Proximity (IOPP) for some families of AG codes by generalizing an IOPP for Reed-Solomon codes, known as the \textsf{FRI} protocol \cite{BBHR18a}. We identify suitable requirements for designing efficient IOPP systems for AG codes.    
Our approach relies on a neat decomposition of the Riemann-Roch space of any invariant divisor under a group action on a curve into several explicit Riemann-Roch spaces on the quotient curve. We provide sufficient conditions on an AG code $C$ that allow to reduce a proximity testing problem for $C$ to a membership problem for a significantly smaller code $C'$.

As concrete instantiations, we study AG codes on Kummer curves and curves in the Hermitian tower. The latter can be defined over polylogarithmic-size alphabet. We specialize the generic AG-IOPP construction to reach linear prover running time and logarithmic verification on Kummer curves, and quasilinear prover time with polylogarithmic verification on the Hermitian tower.

\end{abstract}

%%% MAIN %%%%

\section{Introduction}
Let $C \subset \Sigma^S$ be an evaluation code with evaluation domain $S$ of size $n$ and alphabet $\Sigma$.
For $u \in \Sigma^S$, if $\Delta(u, C) > \delta$, we say that $u$ is $\delta$-far from $C$ and $\delta$-close otherwise. We address the problem of proximity testing to a code $C$, \textit{i.e.} given a code $C$ and assuming a verifier has oracle access to a function $f : S \rightarrow \Sigma$, distinguish between the case where $f \in C$ and $f$ is $\delta$-far from $C$. In this paper, we focus on the case where $C$ is an AG code. An algebraic geometry (AG) code $C = C(\calX, \calP, D)$ is a vector space formed by evaluations on a set $\calP\subset \calX$ of functions in the Riemann-Roch space $L_\calX(D)$. We address this problem in the Interactive Oracle Proof model \cite{BCS16}, which has demonstrated to be particularly promising for the design of proof systems in the past few years.

\subparagraph*{Context of this work.}
Under the generic term of \emph{arithmetization} \cite{LFKN90}, algebraic techniques for constructing proof systems using properties of low-degree polynomials have emerged from the study of interactive proofs \cite{Bab85,GMR85}. Arithmetization techniques have been enhanced and fruitfully applied to other broad families of proof systems since then, including probabilistically checkable proofs (PCPs, \cite{BFLS91, AS92, ALM+98}). Loosely speaking, in a probabilistic proof system for a binary relation $\mathcal R$, the arithmetization process transforms any instance-witness pair $(x, w)$ into a word that belongs to a certain error-correcting code $C$ if $(x, w) \in \mathcal R$, and is very far from $C$ otherwise.

Since the seminal works of Kilian \cite{Kil92} and Micali \cite{Mic95}, a lot of efforts have been put into making PCPs efficient enough to obtain \emph{practical} sublinear non-interactive arguments for delegating computation. In search of reducing the work required to generate such probabilistic proofs, as well as the communication complexity of succinct arguments based on them, Interactive Oracle Proofs (IOPs, \cite{BCS16,RRR16}) have been introduced as a common generalization of PCPs, IPs and IPCPs \cite{KR08}

Considering for the first time univariate polynomials instead of multivariate ones, \cite{BS08, Dinur07} constructed a PCP with quasilinear proof length and constant query complexity. Since then, efficient transparent and zero-knowledge non-interactive arguments have been designed by relying on Reed-Solomon (RS) codes, including \cite{AHIV17,BBHR19,BCRSVW19,BCGGRS19,KPV19,COS20,ZXZS20}. At some point, aforementioned sublinear arguments require a proximity test for RS codes. 

As a solution, one can use an IOP of Proximity for Reed-Solomon codes (an IOP of Proximity \cite{BCGRS17} is the natural extension to the IOP model of a PCP of Proximity). In an IOP of Proximity (IOPP) for an error-correcting code, a verifier is given as input a code, has oracle access to a function (a purported codeword) and interacts with a prover. After the interaction, the verifier accepts if the function is indeed a codeword, and rejects with high probability if the function is far from any codeword. We defer the formal definition of an IOPP to Section \ref{subsec:def-iopp}.

The \textsf{FRI} protocol is a prover-efficient IOP of Proximity for testing proximity to Reed-Solomon codes evaluated over well-chosen evaluation points (introduced by \cite{BBHR18a} and further improved in \cite{BKS18}, \cite{BGKS20}, \cite{BCIKS20}). It admits linear prover time, logarithmic verifier time and logarithmic query complexity. While being sub-optimal for some parameters, the \textsf{FRI} protocol is highly-efficient in practice and is a crucial tool in systems deployed in the real-world. For instance, \cite{BCGRS17,RR20} proposed IOPP constructions for RS codes with constant query complexity whereas the \textsf{FRI} protocol has logarithmic query complexity. 

The main drawback of RS codes is that they must have an alphabet size larger than their length. AG codes \cite{Goppa77}, as evaluations of a set of functions at some designated rational points on a given curve, extend the notion of Reed-Solomon codes and inherit many of their interesting properties. Therefore, replacing RS codes with AG codes is not only natural but has also led to improvements in the past. Examples of cryptographic applications of AG codes include public key cryptography, distributed storage, secret sharing and multi-party computation. A feature for a family of codes that facilitates arithmetization is a multiplication property \cite{Mei13}, namely the fact that the component-wise multiplication of two codewords results in codewords in a code whose minimum distance is still good. This multiplication property actually emulates multiplication of low-degree polynomials. Algebraic geometry codes not only feature this multiplication property but may also have arbitrary large length given a fixed finite field $\F$, unlike RS codes.

\subparagraph*{Limitations of Reed-Solomon codes.}
We identify two limitations of using RS codes in IOPs.

As mentioned earlier, RS codes are the simplest case of AG codes, but possess an inherent limitation: the alphabet size must be larger than the block length of the code. Therefore, practical IOP-based succinct arguments are designed over \emph{large fields}.

The second limitation is related to the algebraic structure of the field. RS-IOPPs \cite{BS08, BBHR18a} require the set $D \subset \F$ of evaluation points to have a special structure. Concretely,
the field must contain a subgroup of \emph{large smooth order}, typically a power of 2 which is larger than the size of the non-deterministic computation to be verified. Depending on the applications of succinct non-interactive arguments, a base field might already be imposed. This is for instance the case for standard digital signature schemes. For computations whose size exceeds the order of the largest smooth subgroup of the field, RS-IOPPs known to date can no longer be used.    

We observe that lowering the size of field elements may not significantly shorten the length of IOP-based succinct non-interactive arguments (see \cite{BCS16}). There are, however, other reasonable motivations to replace RS codes with AG ones. We explain below how AG codes could circumvent the limitations of RS codes.

\subparagraph*{Why can AG codes be useful?}
First, working over smaller fields lowers the cost of field operations\footnote{Consider the application of checking the correct execution of a size $n$ computation. Then an RS-based IOP for this problem will work over a field of size $\Omega(n)$. This means that a single addition of two field elements will cost $\Omega(\log n)$ operations. If the IOP is instead based on a code with polylogarithmic-size alphabet, the cost of a single addition is only $\Omega(\log(\polylog(n))$.}. For concrete efficiency, complexity measures such as prover time and verifier time are closely examined. Reducing significantly the size of the alphabet would have a direct impact on the binary cost of arithmetic operations.   
Smaller fields could enhance efficiency of proof systems since arithmetization of general circuits would be more efficient. Moreover, on the prover side, the bit complexity of encoding codewords might be smaller.

A popular belief is that encoding with AG codes is an heavy task. It is surely true in general, but there are explicit families of AG codes for which there are quasilinear time encoding algorithms \cite{BRS20}. We discuss more about encoding in Section \ref{subsec:ag-encoding}. On another note, putting forward applications of AG codes can motivate the study of fast encoding algorithms for AG codes, in particular in the computer algebra community.

One may be concerned by the overhead of reducing the alphabet size when targeting a soundness error less than $2^{-\kappa}$. Notice that it is possible to sample enough bits of randomness from an extension field when needed, or to repeat only some parts of the protocol (see \cite{ethSTARK,BBHR18a}). For instance, the soundness error of our IOPP is bounded from below by $\size \F^{-1}$. Reaching the targeting soundness requires to repeat the interactive phase of the IOPP $s$ times, inducing a factor $s \simeq \frac{\kappa}{\log \size \F}$ multiplicative overhead for the prover. A rough estimation of bit complexities does not show evidence of a significant overhead. Overall, a proof system supporting small fields might be more efficient: any part of the protocol which does not contribute to the soundness error could benefit from cheaper field operations.

In addition, AG codes offer more flexibility on the choice of the field. For computations of size $n$, we propose AG codes for which the field is not required to admit an $n$-th root of unity (unlike RS-IOPP on a prime field). Specifically, for AG codes over Kummer curves, the base field needs only to have a $N$-th root of unity, where $N$ divides $n$. For AG codes over curves in a Hermitian tower (which admits a polylogarithmic-size alphabet), \emph{our IOPP does not involve any assumption on the alphabet}, except that it must be a degree-2 extension of a field $\Fq$, where $q$ is any prime power.

Finally, the question of whether there exist concretely efficient IOPPs for AG codes is motivated by both a theoretical and practical perspective. 

\subsection{Definition of an IOPP for a code}\label{subsec:def-iopp}

We are specifically interested in public-coin IOP of Proximity (IOPP) for a family of evaluation codes $\mathscr{C}$, thereby we specify our definition for this particular setting. An IOPP  $(\prover, \verifier)$ for a code $C$ is a pair of randomized algorithms, where both $\prover$ (the prover) and $\verifier$ (the verifier) receive as explicit input the specification of a code $C \in \mathscr{C}$, $C \subseteq \Sigma^S$. We define the input size to be $n = \size{S}$. Furthermore, a purported codeword $f : S \rightarrow \Sigma$ is given as explicit input to $\prover$ and as an oracle to $\verifier$. The prover and the verifier interact over at most $\mathsf{r}(n)$ rounds. During this conversation, $\prover$ seeks to convince $\verifier$ that the purported codeword $f$ belongs to the code $C$. 

At each round, the verifier sends a message chosen uniformly and independently at random, and the prover answers with an oracle. Verifier's queries to the prover's messages are generated by public randomness and performed after the end of the interaction with the prover. Thus, such an IOPP is in particular a \emph{public-coin} protocol (or Arthur-Merlin \cite{Bab85}). 

Let us denote $\langle \prover \leftrightarrow \verifier \rangle \in \set{\textsf{accept}, \textsf{reject}}$ the output of $\verifier$ after interacting with $\prover$. The notation $\verifier^f$ means that $f$ is given as an oracle input to $\verifier$. We say that a pair of randomized algorithms $(\prover, \verifier)$ is an IOPP system for the code $C \subseteq \Sigma^S$ with \emph{soundness error} $s : (0,1] \rightarrow [0,1]$, if the following conditions hold:

\begin{description}
	\item[Perfect completeness:] If $f \in C$, then $\Pr[\langle \prover (C, f)\leftrightarrow \verifier^{f}(C) \rangle = \textsf{accept}] = 1$.
	\item[Soundness:] For any function $f \in \Sigma^S$ such that $\delta \mydef \Delta(f, C) > 0$ and any unbounded malicious prover $ \prover^*$, $\Pr[\langle \prover^* \leftrightarrow \verifier^{f}(C) \rangle = \textsf{accept}] \leq s(\delta).$
\end{description}

The length of any prover message is expressed in the number of symbols of an alphabet $\mathsf{a}(n)$. The sum of lengths of prover's messages defines the proof length $\mathsf{l}(n)$ of the IOPP. The query complexity $\mathsf{q}(n)$ is the total number of queries made by the verifier to both the purported codeword $f$ and the oracle sent by the prover during the interaction. The prover complexity $t_p(n)$ is the time needed to generate prover messages during the interaction (which does not include the input function $f$). The verifier complexity $t_v(n)$ is the time spent by the verifier to make her decision when queries and query-answers are given as inputs.

\subsection{Our results}
In this section, we provide and overview of the three contributions of this paper. In all this work, we state complexities in field elements and field operations, where the field is the alphabet of the considered code. Asymptotic complexities are relative to the length of the code.

\begin{itemize}
	\item The first one is a clear criterion for constructing IOPPs with linear proof length and sublinear query complexity for AG codes. Our hope with this result is to open up new possibilities for designing efficient probabilistic proof systems based on families of AG codes with constant rate and constant distance.
	\item The second contribution is a concrete instantiation for AG codes defined over Kummer-type curves. This IOPP has strictly linear prover time and strictly logarithmic verification (counted in field operations). Thus, we give a strict generalization of the \textsf{FRI} protocol for codes of length $n$ over an alphabet of size roughly $n^{2/3}$.
	\item The third one is a concrete instantiation for AG codes defined over a tower of Hermitian curves. Considering recursive towers enables to construct an IOPP for AG codes with \emph{polylogarithmic-size} alphabet. For those codes, we give an IOPP with quasilinear prover time and polylogarithmic verification (counted in field operations).
\end{itemize}

Efficiency of our two AG-IOPP instantiations leverages the fact that proximity testing for these families of AG code can be reduced to a proximity test for a small RS code.

\paragraph*{Generic criterion for constructing AG-IOPPs}
Let $\calX$ be a curve defined over a finite field $\F$, $D$ a divisor on the curve $\calX$ and $\calP \subset \calX(\F)$. This defines an AG code $C=C(\calX, \calP, D)$. We construct a sequence of curves
\begin{center}
    \begin{tikzcd}
    \calX \mydef \calX_0 \arrow[r, "\pi_0"] & \calX_1 \arrow[r, "\pi_1"] & \calX_2 \arrow[r, "\pi_2"] & \cdots \arrow[r, "\pi_{r-1}"] & \calX_r,
    \end{tikzcd}
\end{center}
so that $\calX_{i+1}$ arises as the quotient of the curve $\calX_i$ by some automorphism subgroup $\Gamma_i$ under the quotient map $\pi_i$. 

Using these consecutive projection maps, we construct a sequence of AG codes $(C_i)_{0 \le i \le r}$ of decreasing length to turn the proximity test of the function $f^{(0)}=f$ to $C_0$ into a membership test of a function $f^{(r)}$ in $C_r$. We show that such a procedure is possible if there is a large enough (with respect to the length of the code $C_0$) group $\calG$ in the automorphism group of $\calX$ and under some hypotheses on the divisor $D$ (overviewed in Section \ref{subsec:AG-IOPP-hyp}, detailed in Section \ref{sec:foldable}). An AG code fulfilling all the required conditions is called \emph{foldable}.

Assuming that an AG code $C(\calX, \calP, D)$ of blocklength $n$ is foldable, we show that there is an $O(\log n)$-round IOPP for it, with linear proof length, sublinear query complexity and constant soundness (see Theorem \ref{thm:properties}).

In general, we observe that the larger is the group $\calG$ acting on $\calX_0$ compared to $n$, the smaller are the query complexity and the verifier decision complexity of the protocol. 

However, we notice that the hypothesis on the size of $\calG$ is not a necessary condition for constructing an IOPP with sublinear verification. For instance, if the curve $\calX_r$ is isomorphic to the projective line $\PP^1$, we can continue to recurse in order to reduce even more the size of the proximity testing problem. We propose two interesting families of AG codes for which it is the case.

\paragraph*{Concrete IOPP for AG codes on Kummer curves}
When $\calX$ is a Kummer curve of the form $y^N = f(x)$, we show how to choose $\calP$ and $D$ to make the AG code $C = C(\calX, \calP, D)$ foldable. We benefit from the action of the group $\Z / N \Z$ on $\calX$ that yields a quotient curve $\calX / (\Z/N\Z)$ isomorphic to the \emph{projective line}. This enables us to define a sequence of codes $(C_i)_{0 \le i \le s}$ such that $C_0 = C$ and the code $C_s$ is a \textit{Reed-Solomon code} of dimension $(\deg{D})/N + 1$.%, which is itself a foldable AG code. 
%Moreover, if $\calX_0$ is a Kummer curve of the form $y^N = f(x)$, the action of $\Z / N \Z$ makes the code $C_0$ foldable if $N$ is large enough compared to $n$. This defines a sequence of codes $(C_i)$ such that the last code $C_r$ is a Reed-Solomon code of dimension $(\deg{D_0})/N + 1$. Thus, we reduce the AG proximity problem on Kummer curves to the one of testing membership in a small RS code.

\begin{theorem}[Informal, see Theorem \ref{thm:kummer-properties}]\label{thm:intro-kummer} Let ${C = C(\calX, \calP, D)} \subset \F^{\calP}$ be a foldable AG code defined over a Kummer curve $\calX$  of equation $\calX : y^N = f(x)$ such that $\deg f= N \ell -1$ for some integer $\ell >0$ and $N$ is a smooth integer, coprime with $\size{\F}$.    Assume $\F$ contains a primitive $N$-th root of unity. 
	The block length $n \mydef \size{\calP}$ is a multiple of $N$ and satisfies $n < \ell N^2\size{\F}^{1/2}$.
	Let $f : \calP \rightarrow \F$ be a purported codeword. For every proximity parameter $\delta \in (0, 1)$ and soundness $\epsilon \in (0,1)$, there exists a public-coin IOPP system $(\prover, \verifier)$ for $C$ with perfect completeness and the following properties:
	
	$
	%\mathsf{IOPP}\left[ 
	\begin{array}{lll}
		%\text{alphabet size} &a(n) &= \Omega(n^{2/3}),\\
		\text{rounds} &\mathsf{r}(n) &< \log n,\\
		\text{proof length} &\mathsf{l}(n) &= O(n),\\
		\text{query complexity} &\mathsf{q}(n) &= O(\log n),\\
		\text{prover complexity} &\mathsf{t_p}(n) &= O(n),\\
		\text{verifier decision complexity} & \mathsf{t_v}(n) &= O(\log n).
	\end{array}%\right].
	$
\end{theorem}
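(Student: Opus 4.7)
The plan is to invoke the generic foldable-AG IOPP of Theorem~\ref{thm:properties} to reduce proximity testing for $C$ to membership testing in a small Reed--Solomon code on $\PP^1$, and then to continue the recursion with the \textsf{FRI} protocol \cite{BBHR18a,BCIKS20}. The Kummer curve $\calX : y^N = f(x)$ carries the action of $\Z/N\Z$ given by $(x,y) \mapsto (x, \zeta y)$, where $\zeta$ is a primitive $N$-th root of unity in $\F$; this action is free away from the branch locus $\{y = 0\}$, and the quotient $\calX / (\Z/N\Z)$ is isomorphic to $\PP^1$ via the $x$-coordinate. Using the smoothness of $N$, I would factor $N = p_0 p_1 \cdots p_{s-1}$ with $s = O(\log N)$ and take $\Gamma_i$ to be the unique subgroup of order $p_i$ of the residual action at level $i$, producing a tower $\calX = \calX_0 \to \calX_1 \to \cdots \to \calX_s \simeq \PP^1$. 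Foldability of $C$ then lets Theorem~\ref{thm:properties} output an IOPP with $s$ rounds that reduces the proximity test for $C$ to a membership test for a code $C_s$ on $\calX_s$, and the Riemann--Roch decomposition behind foldability (Section~\ref{sec:foldable}) identifies $C_s$ with a Reed--Solomon code of dimension $\deg(D)/N + 1$ evaluated on $x(\calP) \subset \F$, a set of size $n/N$ inheriting a smooth multiplicative-coset structure from the Kummer hypothesis.

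The second phase applies \textsf{FRI} to $C_s$: since $x(\calP)$ is a coset of a multiplicative subgroup of $\F^\times$ of smooth order dividing $n/N$, the standard \textsf{FRI} construction adds $O(\log(n/N))$ further rounds, at the end of which the verifier reads a constant-size oracle directly. Perfect completeness follows from completeness of both phases. For soundness, I would chain the per-round bound of Theorem~\ref{thm:properties} at each quotient $\pi_i$ with the sharp \textsf{FRI} soundness guarantee of \cite{BCIKS20}, obtaining an overall bound of the claimed form $s(\delta) < 1 - \epsilon$; enough parallel repetitions to reach the target soundness $\epsilon$ cost only a multiplicative constant in all complexity measures.

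The complexity bounds follow by routine bookkeeping. The AG-fold oracle at round $i$ has length $n/(p_0 \cdots p_{i+1})$, the \textsf{FRI} oracles decrease geometrically from $n/N$, and the telescoping sum gives a total proof length $\mathsf{l}(n) = O(n)$. The verifier issues $O(1)$ queries per round over $O(\log n)$ rounds, yielding $\mathsf{q}(n) = O(\log n)$ and likewise $\mathsf{t_v}(n) = O(\log n)$ once the evaluation at $\calX_s$ is amortised. The prover computes each folded oracle by explicit evaluation of the Riemann--Roch summands at $\calP_{i+1}$, which costs $O(\size{\calP_i})$ field operations per level and thus $\mathsf{t_p}(n) = O(n)$ overall. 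The main obstacle I expect is the soundness analysis across the two regimes: one must carefully compose the AG-fold soundness, which depends on the geometry of each quotient $\pi_i$ and on the minimum distance of the intermediate codes $C_i$, with the tight but delicate FRI soundness of \cite{BCIKS20}, while showing that the proximity parameter $\delta$ is preserved with enough slack at each reduction so that the final bound does not degrade below the claimed value.
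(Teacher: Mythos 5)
Your proposal follows essentially the same route as the paper: fold down the Kummer tower $\calX = \calX_0 \to \cdots \to \calX_s \simeq \PP^1$ via the generic foldable-code IOPP of Theorem \ref{thm:properties}, identify $C_s$ with a Reed--Solomon code of dimension $\deg(D)/N+1$ on $x(\calP)$, then keep folding that RS code down to dimension $1$ exactly as in Section \ref{subsec:fast-iopp}, with the same bookkeeping for rounds, proof length, queries, and the per-point Lagrange-interpolation cost that gives $O_\eta(n)$ prover time. The only deviations are minor: the paper analyses both phases uniformly through \cite{ABN21} and Corollary \ref{coro:pre-soundness} (a weighted variant of \cite{BKS18}) rather than splicing in the \cite{BCIKS20} \textsf{FRI} soundness bound, and the smooth multiplicative structure of $x(\calP)$ needed for the second phase is an explicit hypothesis of Theorem \ref{thm:kummer-properties} ($n/N$ a power of two) rather than something inherited automatically from the Kummer equation.
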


It is worth noting that the Hermitian curve defined over $\F_{q^2}$ by $y^{q+1}=x^q+x$ satisfies the hypotheses of the previous theorem. It is well known to be \emph{maximal}, i.e. it has the maximum number of rational points with respect to its geometry. Besides, we recall that Hermitian codes over alphabet $\Sigma$ support block length up to $\size \Sigma^{3/2}$.

\paragraph*{Concrete IOPP for AG codes on towers of Hermitian curves}
We recall that a tower of curves consists of an infinite sequence of curves 
\[\PP^1 = \calX_0\gets\calX_1\gets\ \ldots\gets\calX_n\gets\ldots\]
such that the number of rational points of the $n^{th}$ curve tends to infinity as $n$ tends to infinity. Towers of curves play a prominent role in the history of AG codes as they define codes with outstanding length and correction capacity \cite{TVZ82,BBGS14}. 

The Hermitian tower is an example of the widely studied Artin-Scheier extensions \cite{L92,S08}. In this case, the curve $\calX_i$ arises as the quotient of the curve $\calX_{i+1}$ above modulo the action of a group of order $q$ of the finite field $\F_{q^2}$, the curve $\calX_0$ being isomorphic to the projective $\PP^1$. Therefore, one can test proximity to an AG code from one of the curves $\calX_n$ by going down along the tower and then testing proximity to a RS code, whose degree can be expressed explicitly in terms of the initial AG code. 

Beyond supporting polylogarithmic-size alphabet, AG codes over the Hermitian tower happen to be more naturally ``foldable''. In particular, no additional assumptions on the alphabet are required.

We write $\polylog(n)$ for functions that are in $O\left(\log^k(n)\right)$ for some $k$.

\begin{theorem}[Informal, see Theorem \ref{thm:tower-properties}]\label{thm:intro-tower} 
	Let ${C = C(\calX, \calP, D)} \subset \F^{\calP}$ be a foldable AG code over an alphabet $\F$ of size $\size \F = \Omega(\log^k(n))$ for some constant $k$.
	We denote $n = \size{\calP}$. Let $f : \calP \rightarrow \F$ be a purported codeword. For every proximity parameter $\delta \in (0, 1)$ and soundness $\epsilon \in (0,1)$, there exists a public-coin IOPP system $(\prover, \verifier)$ for $C$ with perfect completeness and the following properties:
	
	$
	%\mathsf{IOPP}\left[ 
	\begin{array}{lll}
		%\text{alphabet size} &a(n) &= \Omega(\polylog(n)),\\
		\text{rounds} &\mathsf{r}(n) &< \log n,\\
		\text{proof length} &\mathsf{l}(n) &= O(n),\\
		\text{query complexity} &\mathsf{q}(n) &= \polylog(n),\\
		\text{prover complexity} &\mathsf{t_p}(n) &= \widetilde{O}(n),\\
		\text{verifier decision complexity} & \mathsf{t_v}(n) &= \polylog(n).
	\end{array}%\right].
	$
\end{theorem}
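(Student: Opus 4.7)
The plan is to apply the generic foldable-AG-IOPP construction of Theorem \ref{thm:properties} to the tower $\PP^1 = \calX_0 \gets \calX_1 \gets \cdots \gets \calX_r = \calX$, where each quotient $\pi_i \colon \calX_{i+1} \to \calX_i$ is the Galois cover of degree $q$ arising from the Artin-Schreier action on $\calX_{i+1}$. By the foldability hypothesis, this chain yields a sequence of AG codes $(C_i)$ of length $\size{\calP_i} = n/q^i$, terminating in a Reed-Solomon code $C_0$ on the line of length at most $\size{\F} = q^2 = O(\polylog(n))$. The whole recursion therefore has $r = \log_q n = O(\log n / \log\log n) < \log n$ levels.

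For the base case, since the oracle for $C_0$ has only $\polylog(n)$ symbols, the verifier may query $f^{(0)}$ entirely and check membership in $C_0$ directly by interpolation, adding only $\polylog(n)$ queries and $\polylog(n)$ operations to the decision. Completeness and the soundness bound $s(\delta)$ are inherited from Theorem \ref{thm:properties} together with this exact base-case check.

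The stated complexities follow by aggregating the per-round bounds of Theorem \ref{thm:properties}. The round count is $r + O(1) < \log n$; the proof length is $\sum_{i=0}^{r-1} O(n/q^i) = O(n)$ by geometric summation. At each round the verifier issues the $q$ queries along one $\pi_i$-fiber to check fold consistency and performs $O(q)$ field operations, so the aggregate query and decision complexities are $qr = \polylog(n)$ each, plus the $\polylog(n)$ contribution from the base case.

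The main obstacle will be the quasilinear prover complexity. Computing each folded oracle $f^{(i+1)}$ from $f^{(i)}$ via the explicit Riemann-Roch decomposition of Section \ref{sec:foldable} amounts to an encoding in an AG code of length $n/q^i$, and a naive encoder of AG codes is superlinear. The bound $\widetilde O(n)$ is achievable only by invoking the fast encoders of \cite{BRS20}, which apply specifically to the Hermitian tower, and by checking that the fold decomposition produces encodings compatible with that framework; summing $\widetilde O(n/q^i)$ over $i$ then gives $\widetilde O(n)$ overall. In contrast to the Kummer case of Theorem \ref{thm:intro-kummer}, no root-of-unity assumption on $\F$ is needed because the Artin-Schreier structure of the tower supplies closed-form bases for each Riemann-Roch space in terms of monomials in the tower generators.
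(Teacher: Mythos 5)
The aggregation of rounds, proof length, query and verifier complexity is fine and matches the paper's route (the paper's Theorem \ref{thm:tower-properties} additionally keeps folding the final RS code down to dimension 1 as in Section \ref{subsec:fast-iopp}, but reading the whole length-$O(q^2)=\polylog(n)$ RS oracle as you do also yields the claimed polylogarithmic bounds). The genuine gap is in your treatment of prover complexity. Computing $f^{(i+1)}=\fold{f^{(i)},\bfz^{(i)}}$ is \emph{not} an AG-encoding task: by the local computability of the folding (Lemma \ref{lem:locality} and the description of the \textsf{COMMIT} phase), the value of $\fold{f^{(i)},\bfz^{(i)}}$ at a point $P\in\calP_{i+1}$ is obtained by interpolating the univariate polynomial $I_{f^{(i)},P}$ of degree $<q$ from the $q$ values of $f^{(i)}$ on the fiber $\pi_i^{-1}(\set{P})$, combined with precomputed evaluations of the balancing functions $\nu_{i+1,j}$. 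Each point costs $\widetilde{O}(q)$ field operations (the paper uses $\mathsf{M}_\F(q)\log q+O(\mathsf{M}_\F(q))$ via interpolation on an arithmetic progression), so one round costs $\size{\calP_{i+1}}\cdot\widetilde{O}(q)$ and summing the geometric series gives $\mathsf{t_p}(n)=\widetilde{O}(n)$ directly, since $q=\polylog(n)$. No Riemann--Roch basis is ever evaluated by the prover during the protocol; the paper stresses exactly this point in Section \ref{subsec:ag-encoding}.

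By contrast, making the $\widetilde{O}(n)$ bound contingent on the fast encoders of \cite{BRS20} leaves your argument unsound as written: \cite{BRS20} is proved for Kummer-type (in particular Hermitian) curves, and the paper only states it is ``very likely'' that codes along the Hermitian tower admit quasilinear encoding --- it is not an established result you can invoke, and in any case it concerns the arithmetization/encoding step of an outer IOP, not the IOPP prover. Replace that paragraph by the local-interpolation count above (this is precisely how the proofs of Theorems \ref{thm:kummer-properties} and \ref{thm:tower-properties} proceed, the latter substituting $q$ for $\eta$), and also note that your ``$O(q)$ field operations per round'' for the verifier should be $\widetilde{O}(q)$ to account for the degree-$<q$ interpolation in each round consistency test; the polylogarithmic conclusion is unaffected.
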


\subsection{More on the practicality of AG codes}\label{subsec:ag-encoding}
When constructing a proximity test for a code, it is assumed that the purported codeword is given as input to the prover. Thus, the prover complexity is computed thereof. While we heavily rely on the group of automorphisms of the curve for proving the existence of an efficient IOPP for ``foldable'' AG codes, we emphasize that the work of the prover and the verifier during the protocol is essentially to perform some univariate polynomial interpolation tasks, with very small degree. In particular, neither the prover nor the verifier of the IOPP system needs to run an encoding algorithm for AG codes.

However, keeping applications to code-based IOP constructions in mind, the running time of the IOP prover is bounded from below by the time needed to encode codewords during arithmetization. 
Fast encoding for AG codes is not the most widely studied computational task, and is often a concern when suggesting constructions based on AG codes\footnote{In general, the asymptotic cost of the task of encoding an arbitrary linear code of length $n$ is $O(n^2)$ (using a generator matrix for the code).}.

This is a reason why we focus our study on families of AG codes that are particularly likely to lead to practical implementations, as we argue next. Specifically, our study includes the two following subfamilies of one-point AG codes over small alphabets with constant rate and distance.

\begin{itemize}
	\item The first family includes one-point AG codes over Kummer-type curves, and in particular the notorious Hermitian curve. \cite{BRS20} proposed an encoding algorithm with quasilinear complexity $\widetilde{O}(n)$. Roughly speaking, \cite{BRS20} method consists in translating the encoding task into a bivariate polynomial multipoint-evaluation problem. Assuming that the evaluation points are well-structured, they view a bivariate polynomial in $\F[X, Y]$ as a polynomial in $\F[X][Y]$ in order to evaluate it thanks to two univariate multipoint evaluations. It is the same idea as the one for computing $m$-dimensional FFT from $m$ (univariate) FFTs.
	\item The second family of one-point AG codes arises from curves on the Hermitian tower and has an alphabet size polylogarithmic in the block length of the code. It is very likely that those codes could also be encoded in quasilinear time, by iteratively applying the encoding method proposed by \cite{BRS20}.
\end{itemize}

We also point out that bases for Riemann-Roch spaces related to these codes are explicitly known.

\subsection{Related work}
We discuss works related to AG-based proximity testing. We emphasize that the motivation behind existing works was only theoretical. In particular, the PCP techniques being used are too complex to be implemented for verifying meaningful computations.

In 2013, \cite{BKKMS13} constructed a PCP with linear proof length and sublinear query complexity for boolean circuit satisfiability by relying on AG codes. More precisely, for any $\epsilon > 0$ and instances of size $n$, their PCP has length $2^{O(1/\epsilon)}n$ and query complexity $n^\epsilon$. When aiming at optimal proof length and query complexity as small as possible, this result remains the state-of-the-art PCP construction. By using AG codes, the authors of \cite{BKKMS13} reduced the field size to a constant, which avoids a logarithmic blowup in proof bit-length (occurring e.g. in \cite{BS08} when using univariate polynomials of degree $m$ to encode binary strings of length $m$). 
In \cite{BKKMS13}, the authors pointed out that they are not able to apply proof composition \cite{AS92} to reduce the query complexity of their PCP because the decision complexity of the PCP verifier is too large (polynomial in the query complexity). 

Improving on \cite{BKKMS13}, \cite{BCGRS17} proposed an IOP for boolean circuit satisfiability with linear proof length and constant query complexity. The IOP of \cite{BCGRS17} invoked the sumcheck protocol \cite{LFKN90} on $O(1)$-wise tensor product of AG codes, which exponentially deteriorates the rate of the base code. Then, they use Mie's PCP of Proximity for non-deterministic languages \cite{Mie09} to test proximity to the tensored code. Both constructions benefit from the use of AG codes to get constant size alphabet and linear proof bit-lengths. 

A recent work of \cite{RR20} constructed an IOPP for any deterministic language which can be decided in time $\text{poly}(n)$ and space $n^{o(1)}$. In particular, \cite[Corollary~3.6]{RR20} can be applied to test proximity to AG codes. This IOPP outperforms our construction on some parameters: it has constant round and query complexities, and proof length is slightly less than $n$. However, it is unlikely that \cite{RR20}'s IOPP leads to a concrete implementation, which is a motivation for our work. Indeed, prover running time is polynomial, and the inner IOPP used for achieving constant query complexity via proof composition is the heavy PCPP of \cite{Mie09}. Mie's PCPP is a theoretical and complex tool used to achieve constant query (e.g in \cite{BCGRS17,BCGGRS19,BCL20}), but it is seen as impractical\footnote{Proposing an alternative to \cite{Mie09} which does not involve heavy PCP machinery would allow to narrow the gap between the best constructions known in theory and the most efficient ones used in practice.}.

By contrast, we exhibit two explicit families of AG codes for which we are able to construct a proximity test with linear prover running time and logarithmic verification (for the first one) and quasilinear prover time with polylogarithmic verification (for the second one). The main point however, is that our construction is undoubtedly much simpler to implement: the most complex task of the prover and the verifier is simply to perform univariate interpolations (with very small degrees). The technical difficulties are in analyzing the conditions allowing the construction, but the protocol itself is very similar to the \textsf{FRI} protocol. IOPP inspired by the \textsf{FRI} protocol have the inherent barrier of logarithmic query complexity. However, in practice, it is still the most efficient proximity test for Reed-Solomon codes known to date.

\section{Technical Overview}

Our IOPP construction relies on the generalization of the \textsf{FRI} protocol to AG codes. Let us first recall some ideas behind the construction of \textsf{FRI} protocol (see e.g. \cite{BKS18} for a detailed presentation). Then we shall describe how we tailor these ideas and which difficulties arise to construct our IOPP for AG codes over curves of positive genus.

\subsection{The \textsf{FRI} protocol for RS proximity testing}

Let $k$ be a positive integer and $\rho \in (0,1)$ such that $\rho = 2^{-k}$. The \textsf{FRI} protocol allows to check proximity to the Reed-Solomon code \[\RS{\calP,\rho} \mydef \set{f \in \F^\calP \mid \deg f < \rho \cd \size{\calP}}\] by testing proximity to $\RS{\calP',\rho}$ with $\size{\calP'}<\size{\calP}$. The \textsf{FRI} protocol considers a family of linear maps $\F^{\calP} \rightarrow \F^{\calP'}$ which randomly ``fold'' any function in $\F^{\calP}$ into a function in $\F^{\calP'}$. We present in a simplified way three key ingredients that enable the \textsf{FRI} protocol to work.

\begin{enumerate}[a)]
	\item\label{it:split} \emph{Splitting of polynomials.} For any polynomial $f$ of degree $\deg f < \rho n$, there exist two polynomials $g,  h$ of degree $< \frac{1}{2} \rho n$ such that 
	\begin{equation}\label{eq:intro-decomp}
		f(x) = g\left(x^2\right) + x \cdot h\left(x^2\right).
	\end{equation}
	One may view such a decomposition as the result of the splitting of the space of polynomials of degree less than $\rho n$ into two copies of the space of polynomials of degree less than $\rho n/2$. 
	
	\item\label{it:fold} \emph{Randomized folding.}
	Choose $\calP$ to be a multiplicative group of order $2^r$ generated by $\omega \in \F$. Then, define $\calP' = \langle \omega^2 \rangle = \{x^2 \mid x \in \calP\}$. Set $\pi : \F \rightarrow \F$ to be the map defined by $\pi(x) = x^2$, observe that $\pi(\calP) = \calP'$. Moreover, $\size{\calP'} = \size{\calP}/2$. The structure of the evaluation domain will allow to reduce the problem of proximity to one of half the size at each round of interaction.
	
	Based on the decomposition \eqref{eq:intro-decomp}, define a \emph{folding operator} $\fold{\cdot,z} : \F^\calP \rightarrow \F^{\calP'}$ for any $z \in \F$ as follows:
	\[\fold{f,z} \mydef g + zh.\]
	If $\deg f < \rho n$, both functions $g : \calP' \rightarrow \F$ and $h : \calP' \rightarrow \F$ belong to $\RS{\calP', \rho}$. Then for any random challenge $z\in \Fq$, the operator $\fold{\cdot, z}$ maps $\RS{\calP,\rho}$ into $\RS{\calP',\rho}$.
	
	\item\label{it:dist} \emph{Distance preservation after folding.} Except with small probability over $z$, we have that if $\Delta(f, \RS{\calP, \rho}) \ge \delta$, then \[\Delta\left(\fold{f, z}, \RS{\calP', \rho}\right) \ge (1 - o(1))\delta.\]
\end{enumerate}

The protocol then goes as follows: the verifier sends a random challenge $z \in \F$ and the prover answers with an oracle function $f' : \calP' \rightarrow \F$, which is expected to be equal to $\fold{f,z} : \calP' \rightarrow \F$. At the next round, $f'$ becomes the function to be ``folded'', and the process is repeated for $r$ rounds. Each round reduces the problem by half, eventually leading to a function $f^{(r)}$ evaluated over a small enough evaluation domain. This induces a sequence of Reed-Solomon codes of strictly decreasing length. The code rate remains unchanged, and so does the relative minimum distance. The final test consists in testing that $f^{(r)}$ belongs to the last RS code. 

Perfect completeness follows from Item \ref{it:fold}. Prover and verifier efficiencies of the \textsf{FRI} protocol come from the possibility of determining any value of $\fold{f,z}$ at a point $y \in \calP'$ with exactly two values of $f$, namely on the set $\pi^{-1}(\set{y})$. Consequently, a single test of consistency between $f$ and $f'$ requires only two queries to $f$ and one query to $f'$.

Soundness of the protocol relies notably on Item \ref{it:dist}. It is proved using results about distance preservation under random linear combinations, that could be roughly stated as follows:
``\emph{Let $V \subset \Fq^{n}$ be a linear code and $g, h \in \Fq^n$. As long as $\delta$ is small enough, if we have $\Delta(g + z h, V) \le \delta$ for enough values $z \in \Fq$, then both $g$ and $h$ are $\delta'$-close to $V$, where $\delta' = (1 - o(1))\delta$}.'' (see \cite{BBHR18a, BKS18, BGKS20, BCIKS20}). Based on that, one can deduce that if $\fold{f, z} = g + zh$ is $\delta$-close to $V$ for enough values of $z$, then both $g$ and $h$ are $\delta'$-close from $V$. The idea of the proof of Item \ref{it:dist} is to exhibit a codeword which is $\delta$-close from $f$, based on the decomposition of Item \ref{it:split}. 

\begin{remark}\label{rmk:even-RS}
	We point out that Item \ref{it:dist} holds because the functions $g$ and $h$ appearing in the decomposition \eqref{eq:intro-decomp} have \emph{exactly} the same degree. This arises from the crucial fact that the \textsf{FRI} protocol considers only RS code of dimension a power of 2. This means that the RS code is defined by polynomials of degree at most an \emph{odd} bound.
	
	Let us give a glimpse of what happens when $f$ is expected to have degree at most an even integer, say $2d$. The degrees of the functions $g$ and $h$ appearing in the decomposition \eqref{eq:intro-decomp} of $f$ are respectively $\deg g \le d$ and $\deg h \le d-1$. 
	Therefore, if $\deg f \le 2d$, then $g + z h$ corresponds to a polynomial of degree $\le d$. However, knowing that $g + z h$ is a polynomial of degree $\le d$ with high probability on $z$ only tells us that both $g$ and $h$ are of degree $\le d$, which is not enough to deduce that $f$ has degree $\le 2d$ and not $2d + 1$. 
	It is worth noting that words corresponding to a polynomial of degree $2d+1$ are among the \emph{farthest} words from the RS code of degree $\le 2d$.       
	In the univariate case, one can overcome this obstacle by supposing not only $\deg g,\deg h \leq d$ but also $\deg (\nu h) \le d$ for a degree-1 polynomial function $\nu$. This implies that $\deg h < d$, hence $\deg f \leq 2d$.
\end{remark}

\subsection{Our IOPP for AG proximity testing}\label{subsec:AG-IOPP-hyp}

Let $\calX$ be a curve defined over a finite field $\F$ and $C=C(\calX,\calP,D)$ be an AG code. We aim to adapt the three ingredients of the \textsf{FRI} protocol to the AG context.

\subparagraph*{Group actions and Riemann-Roch spaces.}
The splitting of the polynomial $f$ into an even and an odd part in \eqref{eq:intro-decomp} comes from the action of a multiplicative group of order $2$ on the evaluation set $\calP$. This observation is also true with the actual \textsf{FRI} protocol, which sets $\pi$ to be an affine subspace polynomial. This phenomenon is likely to occur in a more general framework.

As soon as a group $\Gamma$ of order $p$ acts on the curve $\calX$, its action naturally extends onto the functions on $\calX$. Let us denote by $\pi$ the canonical projection $\pi : \calX \rightarrow \calX / \Gamma$. If we are able to write the Riemann-Roch space associated to $D$ as following
\begin{equation}\tag{$\star$}\label{eq:intro-Kani}
	f= \sum_{j=0}^{p-1} \mu^j f_j \circ \pi \text{ with } f_j \in L_{\calX / \Gamma}(E_j),
\end{equation}

for some function $\mu$ on the curve $\calX$ and some divisors $E_j$ on the quotient curve that are explicitly expressed in terms of the divisor $D$, then we can mimic the decomposition \eqref{eq:intro-decomp} used in the \textsf{FRI} protocol.

Now assume that no point of $\calP$ is fixed by $\Gamma$, \textit{i.e.} for every $P \in \calP$ and $\gamma \in \Gamma$, $\gamma \cdot P \neq P$. Set $\calP'=\pi(\calP)$.
Polynomial interpolation enables the determination of $f_j(P)$ for any point $P \in \calP'$ with exactly $p$ values of $f$, namely on the set $\pi^{-1}(\set{P})$. This means that the decomposition \eqref{eq:intro-Kani} can be written for any function in $\F^\calP$, not only for elements of $L_\calX(D)$.

\subparagraph*{Folding operators.}
From the decomposition \eqref{eq:intro-Kani} above, we want to define a family of folding operators $\left(\fold{\cdot,z}\right)_{z \in \F}$ from $\F^\calP$ to $\F^{\calP'}$ and a code $C'=C(\calX/\Gamma,\calP',D')$ such that ${\fold{\cdot,z}(C) \subseteq C'}$.

\medskip

In a first approach, one could choose to define the folding operators similarly to the \textsf{FRI} protocol by setting for $z \in \F$, \[\fold{f,z}=\sum_{j=0}^{p-1} z^j f_j\] where the functions $f_j$ come from the decomposition \eqref{eq:intro-Kani} of $f \in \F^\calP$. With this definition, the code $C'$ has to be associated to a divisor $D'$ on $\calX/\Gamma$ such that each Riemann-Roch space $L_{\calX / \Gamma}(E_j)$ can be embedded into $L_{\calX / \Gamma}(D')$.

The best scenario is when the divisor $D$ yields a decomposition of $L_\calX(D)$ as $p$ ``copies'' of the same Riemann-Roch space, as it is the case with Reed-Solomon codes of even dimension.  Unfortunately, to the best of our knowledge, it is unlikely that all divisors $E_j$ involved in the decomposition \eqref{eq:intro-Kani} of $f$ are the same (or even equivalent) \emph{if $\calX$ is not the projective line}. We are then facing an issue analogous to the one described in Remark \ref{rmk:even-RS} on $\PP^1$.

Therefore, such a choice of the folding operators does not guarantee the soundness of our protocol.     We thus aim to adapt the idea at the end of Remark \ref{rmk:even-RS} to the AG setting. We introduce some \emph{balancing} functions $\nu_j$ such that, for every $f_j \in C'$, if the product $\nu_j f_j$ also lies in $C'$, then the function $f_j$ belongs to the desired Riemann-Roch space $L_{\calX / \Gamma}(E_j)$. Defining such a balancing function $\nu_j$ is tantamount to specify its pole order at the points supporting the divisor $D'$. The existence of all the functions $\nu_j$ thus depends on the \emph{Weierstrass semigroup} of these points (see \cite[Section 6.6]{HKT13} for definition) and does not hold for any divisor $D'$. If such balancing functions exist for a divisor $D'$, we say that $D'$ is \emph{compatible} with $D$. Finding a convenient divisor $D'$ compatible with a given divisor $D$ is definitely the trickiest part in defining the folding operators properly.

Once we have a divisor $D'$ that is $D$-compatible, we shall embed additional terms in the folding operators to take account of the balancing functions. We shall use more randomness so as not to double the degree in $z$ to avoid a loss in soundness. For $(z_1,z_2) \in \F^2$, we set
\[\fold{f,(z_1,z_2)}=\sum_{j=0}^{p-1} z_1^j f_j+\sum_{j=0}^{p-1} z_2^{j+1} \nu_j f_j.\]
We prove that $\fold{\cdot,(z_1,z_2)}(C) \subseteq C'$, the function $\fold{f,(z_1,z_2)} \in \F^{\calP'}$ can be locally computed from $p$ values of $f$, and $\fold{\cdot,(z_1,z_2)}$ preserves the distance to the code.

\subparagraph*{Finding a decomposition \eqref{eq:intro-Kani}.}

Such a decomposition exists for a cyclic group $\Gamma=\gen{\gamma}$ whose order is coprime with the characteristic. A result of Kani \cite{Kani}  states that, in this case, there exists a function $\mu$ on $\calX$ satisfying \eqref{eq:intro-Kani} such that $\gamma\cdot \mu = \zeta \mu$ where $\zeta$ is a primitive root of unity of order $\size{\calG}$. Moreover, the divisor $E_{j}$ can be explicitly written in terms of the divisor $D$ and the function $\mu$.

Alternatively, if a basis of $L_\calX(D)$ is explicitly known, we may also be able to exhibit such a decomposition without invoking Kani's theorem for some cyclic group of order divisible by the characteristic. This is exactly the strategy we use to design an IOPP for AG codes along the Hermitian tower.

\subparagraph*{Soundness preservation.}

The soundness of the protocol depends on the relative minimum distance of the codes $C$ and $C'$. Ideally , we would like the rates of the codes $C$ and $C'$ to be roughly equal to prevent the relative minimum distance from dropping. In particular, we need $L_{\calX/\Gamma}(D')$ to be not too large with respect to the components $L_{\calX/\Gamma}(E_j)$.

A natural idea would be to choose $D'$ as the divisor $E_j$ with the largest Riemann-Roch space. However, balancing functions only exists for some well-chosen divisors $D'$, whose degree can be significantly larger than the degree of the divisor $E_{j}$ in \eqref{eq:intro-Kani}. Therefore, the divisors $D$ and $D'$ have to be carefully chosen to prevent the minimum distance from collapsing.

\subparagraph*{Sequence of ``foldable'' AG codes.} With the goal of iterating the folding process in mind, we assume that the base curve $\calX_0 \mydef \calX$ is endowed with a \emph{suitable} acting group $\calG$ that we decompose into smaller groups $\Gamma_0, \Gamma_1, \dots, \Gamma_{r-1}$ to fragment its action and create intermediary quotients
\begin{center}
	\begin{tikzcd}
		\calX_0 \arrow[r, "\pi_0"] & \calX_1 \arrow[r, "\pi_1"] & \calX_2 \arrow[r, "\pi_2"] & \cdots \arrow[r, "\pi_{r-1}"] & \calX_r,
	\end{tikzcd}
\end{center}
where the morphism $\pi_i:\calX_i \rightarrow \calX_{i+1}$ is the quotient map by $\Gamma_i$.
For instance, a sufficient condition on the group $\calG$ to have such a sequence is the \emph{solvability}.

A code $C = C(\calX, \calP, D)$ is said to be a \emph{foldable AG code} (Definition \ref{def:good_properties}) if we are able to construct a sequence of AG codes $C_i\mydef C(\calX_i,\calP_i,D_i)$ that support a family of randomized folding operators $\fold{\cdot, \bfz} : \F^{\calP_i} \rightarrow \F^{\calP_{i+1}}$ with the desirable properties for our IOPP (i.e. $\fold{\cdot, \bfz}(C_i) = (C_{i+1})$, local computability, distance preservation to the code). Moreover, to ensure that the last code $C_r$ has sufficiently small length and to obtain an IOPP with sublinear query complexity, we require the size of $\calG$ to be greater than $\size{\calP}^{e}$ for a certain $e \in (0,1)$. Details are provided in Section \ref{sec:foldable}.

\section{Preliminaries}\label{sec:preli}

We start with some reminders on important terms and notations related to the theory of AG codes. We refer readers to \cite{TNV07,S08} for further details on these notions. We will always use $\F$ to denote a finite field.

\subsection{Functions and divisors on algebraic curves}
Let $\calX$ be an algebraic curve defined over a field $\F$. Let $\bar{\F}$ be an algebraic closure of the field $\F$. We denote by $\calX(\F)$ the set of its $\F$-rational points and $\Aut(\calX)$ its automorphism group.

A \emph{divisor} $D$ on $\calX$ is a formal sum of points $D=\sum n_P P$. We say that the divisor $D$ is \emph{effective} if $n_P \geq 0$ for every point $P$. The \emph{support} of $D$, denoted by $\Supp(D)$, is the set of points $P$ for which the coefficient $n_P$ is non zero. We will always consider \textit{rational} divisors, whose support only consists in $\Fq$-rational points. We define the \emph{degree} of $D$ equals $\deg D \mydef \sum n_P$.

The set of divisors on the curve $\calX$ forms an additive group, denoted by $\Div(\calX)$. It is endowed with a partial order relation $\leq$ such that $D \leq D'$ if $D'-D$ is effective. An element $f$ of the function field $F\mydef \F(\calX)$ of the curve $\calX$ defines a divisor 
\[\div_{F}(f) = \sum_{P\in \calX} v_P(f) P\]
where $v_P(f)$ is the valuation of the function $f$ at the point $P$. The index $F$ will be omitted when the context is clear.

We denote by $\zeroes{f}$ (respectively $\poles{f}$) the positive (respectively negative) part of the principal divisor $\div(f)$, \textit{i.e.}
\[ \zeroes{f}=\sum_{\substack{P \in \calX\\ v_P(f) >0 }} v_P(f) P \quad \text{  and  } \quad \poles{f}=\sum_{\substack{P \in \calX\\ v_P(f) <0 }} v_P(f) P\]
so that $\div(f)=\zeroes{f} - \poles{f}$. The divisors $\zeroes{f}$ and $\poles{f}$ correspond to the loci of zeroes and poles respectively.

Let $\phi: \calX \rightarrow \calX'$ be a map between two algebraic curves. It induces a \emph{pull-back} map ${\phi^*:\F(\calX') \rightarrow \F(\calX)}$ defined by $\phi^* f =f \circ \phi$ for $f \in \F(\calX')$. For $D=\sum_P n_p P \in \Div(\calX)$, the \emph{push-forward} of $D$ is the divisor on $\calX'$ defined by $\pi_*(D)=\sum_P n_P \phi(P)$.

The \emph{Riemann-Roch space} of a divisor $D \in \Div(\calX)$ is the $\F$-vector space defined by
\[L_\calX(D)=\{f \in \F(\calX) \mid \div_{F}(f) + D \geq 0\}\cup\{0\}.\]
The subscript specifying the curve in $L_\calX(D)$ is omitted when it is clear from the context. If $D' \leq D'$, then $L_\calX(D) \subseteq L_\calX(D')$.

As usual, given a real number $\alpha$, $\floor{\alpha}$ denotes the biggest integer less than or equal to $\alpha$ and $\ceil{\alpha}$ the smallest integer bigger than or equal to $\alpha$.
\begin{definition}\label{def:floor_div}
	Let $D = \sum n_P P \in \Div(\calX)$. For any positive integer $n$, we denote by ${\floor{\frac{1}{n}D}\in \Div(\calX)}$ the divisor defined by 
	\[\floor{\frac{1}{n}D}:=\sum \floor{\frac{n_P}{n}} P.\]
\end{definition}

\subsection{Algebraic geometry codes}
Throughout this paper, the term $\emph{code}$ will refer to a $\emph{linear code}$, i.e. a linear subspace of $\F^n$, where $n$ is the length of the code.

Take $D \in \Div(\calX)$ and $\calP \subset \calX(\F)$ of size $n\mydef\size{\calP}$ such that $\Supp(D) \cap \calP = \emptyset$. The \emph{Algebraic Geometry (AG) code} $C=C(\calX,\calP,D)$ is defined as the image under the evaluation map
\[\text{ev} : L(D) \rightarrow \F^n.\]
The integer $n$ is called the \emph{length} of $C$. The \emph{dimension} of $C$ is defined as its dimension as $\F$-vector space. We denote by $\Delta(C)$ the relative minimum distance of $C$, i.e.
\[\Delta(C) = \min \set{\Delta(c, c') \mid c, c' \in C \text{ and } c \neq c'}.\]

In particular, AG codes on $\calX = \PP^1$ correspond to Reed-Solomon codes. 
The AG code $C$ is said to be \emph{one-point} if the support of $D$ consists in a single point.

By the {Riemann-Roch} theorem, if $\deg D \geq 2g-1$ where $g$ is the genus of the curve $\calX$, then $\dim L_\calX(D)=\deg D -g +1$. Moreover, if $\deg D< n$, the evaluation map is injective and the {Riemann-Roch} theorem gives the dimension of the associated AG code. In this case, the minimum distance is bounded from below by $n - \deg D$.

The divisor $D$ will always be chosen so that the map $\text{ev}$ is injective. Therefore, the elements of $\F^n$ will be regarded as functions in $\F^\calP$, and elements of $C$ simply as functions in the {Riemann-Roch} space $L(D)$.

\subsection{Group and action}
A finite group $\calG$  is said to be \emph{solvable} if  there exists a sequence of subgroups of $\calG$ 
\[\calG=\calG_0 \vartriangleright \calG_1 \vartriangleright \cdots \vartriangleright \calG_r = 1, \] 
such that $\calG_{i+1}$ is a normal subgroup of $\calG_i$ and each factor group $\calG_i / \calG_{i+1}$ is abelian. Such a sequence is called a \emph{normal series}. If $\calG$ is solvable, its cardinality equals the product of the sizes of the factor groups. 
Additionally, a normal series of $\calG$ is a \emph{composition series} of $\calG$ if each $\calG_{i+1}$ is a maximal proper normal subgroup of $\calG_{i}$. In the case where $\calG$ is a finite group, we have that $\calG$ is solvable if and only if $\calG$ has a composition series whose factor groups are cyclic groups of prime order.

Let $\calX$ be an algebraic curve. A group $\Gamma$ is said to \emph{act on the curve $\calX$} if $\Gamma$ is a subgroup of the automorphism group $\Aut(\calX)$. The \emph{stabilizer} of a point $P \in \calX$ is the subgroup \[\Gamma_P=\{\gamma \in \Gamma \mid \gamma\cdot P = P\} \subset \Gamma.\] 
A divisor $D=\sum_P n_P P \in \Div(\calX)$ is said to be \emph{$\Gamma$-invariant} is $n_P=n_{\gamma\cdot P}$ for all $P\in \calX$ and $\gamma \in \Gamma$.

The action of $\Gamma$ on $\calX$ gives a projection $\pi: \calX \rightarrow \calX/\Gamma$ onto the quotient curve $\calX/\Gamma$. A point $Q \in \calX/\Gamma$ is called a \emph{ramification point} if the number of preimages of $Q$ by $\pi$ is not equal to $\size{\Gamma}$. Equivalently, $Q$ is a ramification point if one of its preimages has a non-trivial stabilizer. 

\section{Setting of AG codes compatible with proximity test}\label{sec:foldable}

In this section, we display a workable setting for the construction of an IOPP system $(\prover, \verifier)$ to test whether a given function $f : \calP \rightarrow \F$ is close to the evaluation of a function in a given Riemann-Roch space. As the idea is to iteratively reduce the problem of testing proximity to $C(\calX, \calP, D)$ to testing proximity to a smaller AG code, we introduce a sequence of suitable AG codes of decreasing length.

\subsection{Sequence of curves}\label{subsec:seq-curves}
Fix a curve $\calX$ defined over $\F$, a finite solvable group $\calG \subseteq \Aut(\calX)$ and a sequence \[\seqG \mydef (\calG_0, \calG_1, \dots, \calG_r)\] such that 
\begin{equation}\label{diag:seq_groups}
	\calG=\calG_0 \vartriangleright \calG_1 \vartriangleright \cdots \vartriangleright \calG_r = 1,
\end{equation}
is a normal series for the group $\calG$.

For $i \in \range{r}$, we denote by $\Gamma_i$ the (abelian) factor group $\Gamma_i \mydef \calG_i /\calG_{i+1}$ and by $p_i$ the order of $\Gamma_i$. We have that the cardinality of $\calG$ equals \[\size{\calG}=\prod_{i=0}^{r-1} \size {\Gamma_i} =\prod_{i=0}^{r-1} p_i.\]

The group $\Gamma_0$ acts on $\calX_0\mydef \calX$, as a factor group of $\calG$. We thus define the quotient curve $\calX_1 \mydef \calX_0 / \Gamma_0$. The group $\Gamma_1$ acts trivially on the orbits under $\Gamma_0$. Repeating the process for every ${i \in \range{r}}$ defines a sequence of curves recursively  as follows:
\[\calX_0:=\calX \text{ and } \calX_{i+1} \mydef\calX_i/\Gamma_i.\]
We set $F_i \mydef \F(\calX_i)$ and we denote by $\pi_i: \calX_i \rightarrow \calX_{i+1}$ the canonical projection modulo the action of $\Gamma_i$.
Even if the sequence of curves \eqref{diag:seq_curves} depends on the derived series \eqref{diag:seq_groups} of $\calG$, the last curve $\calX_r$ is always isomorphic to the quotient $\calX / \calG$: \begin{equation}\label{diag:seq_curves}
	\begin{tikzcd}
		\calX_0\arrow[loop, distance=14, out=130, in=100,"\Gamma_0"] \arrow[r, two heads, "\pi_0"] & \calX_1\arrow[loop, distance=14, out=130, in=100,"\Gamma_1"] \arrow[r, two heads, "\pi_1"] & \cdots  \arrow[r, two heads, "\pi_i"] & \calX_i\arrow[loop, distance=14, out=130, in=100,"\Gamma_i"] \arrow[r, two heads, "\pi_{i+1}"]& \calX_{i+1}\arrow[loop, distance=14, out=130, in=100,"\Gamma_{i+1}"] \arrow[r, two heads] & \cdots \arrow[r, two heads,"\pi_{r-1}"] & \calX_{r}.
	\end{tikzcd}
\end{equation}

\begin{definition}
	A sequence of curves constructed as above will be called a \emph{$\XGseq$-sequence}.
\end{definition}

\subsection{Sequence of codes}\label{subsubsec:def-codes}

Let $(\calX_i)$ be a $\XGseq$-sequence. For any $i \in \range{r}$, the factor group $\Gamma_i$ which acts on the curve $\calX_i$ is abelian of order $p_i$.

For $i \in \set{0, \dots, r}$, we aim to define an AG code $C_i \subset \F^{\calP_i}$ associated to a divisor $D_i \in \Div(\calX_i)$ and an evaluation set $\calP_i$. 
The rest of this subsection is dedicated to the choice of the sets $\calP_i$ and the divisors $D_i$.

\subsubsection{Evaluation points} From a set $\calP_0 \subset \calX(\F)$, we want to recursively define a sequence of set of points $(\calP_i)$ such that $P_i \subseteq \calX_i(\F)$ and $\calP_{i+1}=\pi_i(\calP_{i})$.

For our protocol, we need for each $i \in \range{r}$ that every point in $\calP_{i+1}$ admits exactly $p_i$ preimages under $\pi_i$. Since the last curve $\calX_r$ is isomorphic to the quotient $\calX / \calG$, it is necessary and sufficient that the first set $\calP_0 \subset \calX_0$ is a union of $\calG$-orbits of size $\size{\calG}$, \emph{i.e.} that $\calG$ acts freely on $\calP_0$.

\subsubsection{Divisors} Fix a divisor $D_0 \in \Div(\calX_0)$ that is globally $\Gamma_0$-invariant. This way, the support of $D_0$ does not meet the set $\calP_0$. For the sake of simplicity, we will assume that $D_0$ is in fact supported by $\Gamma_0$-fixed points. To make our protocol complete and sound, we need the sequence of divisors $(D_i)$ to have the following properties:
\begin{itemize}
    \item the divisor $D_i$ is supported by $\Gamma_i$-invariant points;
    \item for each divisor $D_i$, its associated Riemann-Roch space admits a nice decomposition (given in \eqref{eq:decomposition} below);
    \item at each step, a divisor $D_{i+1}$ needs to be compatible with $D_i$ and the decomposition of $L_{\calX_i}(D_i)$ in the sense of Definition \ref{def-div_compatible} below.
\end{itemize}

\begin{definition}\label{def:mu-decomp}
	Let $i \in \range{r}$. Fix a divisor $D_i \in \Div(\calX_i)$ and a function $\mu_i \in F_i$ . We say that $\mu_i$ \emph{partitions} $L_{\calX_i}(D_i)$ (with respect to the action of $\Gamma_i$) if
	\begin{equation}\label{eq:decomposition}
		L_{\calX_i}(D_i)= \bigoplus_{j=0}^{p_i-1} \mu_i^j \pi_i^* L_{\calX_{i+1}}(E_{i,j})
	\end{equation}
	with
	\begin{equation}\label{eq:def-Eij}
		E_{i,j}\mydef \floor{\frac{1}{p_i}{\pi_i}_*( D_i + j\div_{F_i}(\mu_i))} \in \Div(\calX_{i+1}) \text{ for } j \in \range{p_i},
	\end{equation}
	where the floor function of a divisor is given in Definition \ref{def:floor_div}.
\end{definition}

\begin{definition}\label{def-div_compatible}
	Let $i \in \range{r}$. Fix a divisor $D_i \in \Div(\calX_i)$ and a function $\mu_i \in F_i$ such that $\mu_i$ partitions $L_{\calX_i}(D_i)$. A divisor $D_{i+1} \in \Div(\calX_{i+1})$ is said to be \emph{compatible with $(D_i,\mu_i)$} if both assertions hold.
	\begin{enumerate}
		\item for every $j \in \range{p_i}$, $E_{i,j} \leq D_{i+1}$,
		\item  for every $j \in \range{p_i}$, there exists a function $\nu_{i+1,j}\in \F(\calX_{i+1})$ such that
		\begin{equation}\label{eq:def-nu_i,j}
			\poles{\nu_{i+1,j}}=D_{i+1} - E_{i,j}. 
		\end{equation}
	\end{enumerate}
	The functions $\nu_{i+1,j}$ are called \emph{balancing functions}.
\end{definition}

In \autoref{def-div_compatible}, the first requirement implies that $L(E_{i,j}) \subseteq L(D_{i+1})$. The second one means that a function $f_j$ belongs to $L(E_{i,j})$ if and only if the product $\nu_{i+1, j}f_j$ lies in $L(D_{i+1})$.

We have now described all the key components to formally define the notion of foldable codes.

\begin{definition}[Foldable AG codes]\label{def:good_properties}
    Let $C =C(\calX,\calP,D)$ be an AG-code. This code is said to be \emph{foldable} if the following conditions are satisfied.
    \begin{enumerate}
        \item\label{it:gpe} There exists a finite solvable group $\calG \in \Aut(\calX)$ that acts freely on $\calP$ : a normal series of $\calG$ $\eqref{diag:seq_groups}$ provides a $\XGseq$-sequence of curves $(\calX_i)$;
        \item\label{it:card} There exists $e \in (0,1)$ such that $\size{\calG} > \size{\calP}^e$;
        \item\label{it:cdt} There exist some sequences of functions $(\mu_i)$ and divisors $(D_i)$ where $\mu_i \in F_i, D_0=D$ and  $D_i \in \Div(\calX_i)$ such that, for every $i \in \range{r}$, all the following properties hold:
        \begin{enumerate}
            \item the divisors $D_i$ are supported by $\Gamma_i$-fixed points,
            \item the function $\mu_i$ partitions $L_{\calX_i}(D_i)$ (Definition \ref{def:mu-decomp}),
            \item\label{sit:dif-values} the function $\mu_i$ maps distinct points in the same $\Gamma_i$-orbit to distinct values: for every $\gamma_i \in \Gamma_i$ and $P \in \calP_i$, we have $\mu_i(P)= \mu_i(\gamma_i(P))$ if and only if $\gamma_i$ is the identity map,
            \item  $D_{i+1}$ is $(D_i,\mu_i)$-compatible (Definition \ref{def-div_compatible}).			
        \end{enumerate}
        
    \end{enumerate}
\end{definition}

The second requirement given in Definition \ref{def-div_compatible} is definitely compelling and requires some geometric knowledge about the curves $\calX_i$. Indeed, on a general curve, not every effective divisor is the poles locus of a function. Characterizing which effective divisors arise this way is at the heart of the Weierstrass gaps theory. Nonetheless, the existence of the balancing functions $\nu_{i+1,j}$ happens to be the main ingredient in Lemma \ref{lem:why_nuij}, which will take a prominent role in the design of our IOPP.

To prevent the relative minimum distance of the code $C_{i+1}$ from collapsing and thence ensure a good soundness of the protocol designed in Section \ref{sec:iopp}, one may be tempted to take $D_{i+1}$ as one of the divisors $E_{i,j}$ \eqref{eq:def-Eij} that appear in the decomposition \eqref{eq:decomposition} of $L_{\calX_i}(D)$. However the Weierstrass gaps theory indicates that balancing functions exist only when choosing a $(D_i,\mu_i)$-compatible divisor $D_{i+1}$ whose degree may be unexpectedly substantial (an illustration of this situation will be given in Example \ref{ex:bad_kummer}). Therefore, to broaden the spectrum of foldable codes, we do not make this additional hypothesis.

Condition \ref{sit:dif-values} in \autoref{def:good_properties} is necessary for the completeness of our protocol. If the divisors $D_i$ have degree sufficiently large, it is always fulfilled, as proven in the following lemma.

\begin{lemma}\label{lem:mu-ok-interpolation}
	Let us assume that a function $\mu_i \in F_i$ partitions $L_{\calX_i}(D_i)$ with respect to the action of $\Gamma_i$ for some divisor $D_i \in \Div(\calX_i)$. Let us denote by $g_i$ the genus of the curve $\calX_i$. If $\deg(D_i)  \geq 2g_i+1$, then $\mu_i$ takes different values at points lying in the same orbit under $\Gamma_i$.
\end{lemma}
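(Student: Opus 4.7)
The plan is to proceed by contradiction: suppose there exist two distinct points $P, Q \in \calX_i$ lying in the same $\Gamma_i$-orbit with $\mu_i(P) = \mu_i(Q) = \alpha$, and derive a contradiction from the degree hypothesis. A first useful observation is that the orbit of $P$ has size at least two, so neither $P$ nor $Q$ is fixed by $\Gamma_i$. Since $D_i$ is supported on $\Gamma_i$-fixed points by hypothesis, $P$ and $Q$ lie outside $\Supp(D_i)$, so every $f \in L_{\calX_i}(D_i)$ evaluates to a well-defined element of $\F$ at both points.

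The key step is to show that the partition forces $L_{\calX_i}(D_i)$ not to separate $P$ from $Q$. By the decomposition \eqref{eq:decomposition}, any $f \in L_{\calX_i}(D_i)$ can be uniquely written as $f = \sum_{j=0}^{p_i - 1} \mu_i^j (f_j \circ \pi_i)$ with $f_j \in L_{\calX_{i+1}}(E_{i,j})$. Since $\pi_i(P) = \pi_i(Q)$ and $\mu_i(P) = \mu_i(Q)$, a term-by-term evaluation yields $f(P) = \sum_j \alpha^j f_j(\pi_i(P)) = f(Q)$. To reach a contradiction I would then invoke Riemann--Roch: since $\deg D_i \geq 2g_i + 1$ implies $\deg(D_i - P - Q) \geq 2g_i - 1$, both $L(D_i)$ and $L(D_i - P - Q)$ reach the non-special Riemann--Roch dimension, so $\dim L(D_i) - \dim L(D_i - P - Q) = 2$. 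Equivalently, the evaluation map $L(D_i) \to \F^2$ sending $f \mapsto (f(P), f(Q))$ is surjective, producing some $f \in L(D_i)$ with $f(P) = 0$ and $f(Q) \neq 0$, which contradicts the preceding equality.

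The main subtlety I expect lies in justifying the term-by-term evaluation $f(P) = \sum_j \alpha^j f_j(\pi_i(P))$ when $\alpha = 0$ and some $f_j$ has a pole at $\pi_i(P)$: in this degenerate regime the naive identity is an indeterminate $0 \cdot \infty$, and one must argue through local uniformizers at $P$ and $Q$. The argument would use that each summand $\mu_i^j (f_j \circ \pi_i)$ itself belongs to $L(D_i)$ (hence is regular at $P$ and $Q$), combined with the $\Gamma_i$-invariance of $f_j \circ \pi_i$ and the transformation law of $\mu_i$ under the element of $\Gamma_i$ taking $P$ to $Q$, to conclude that the local expansions at $P$ and at $Q$ pair up so that the two evaluations still agree.
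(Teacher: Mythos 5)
Your proposal is correct and follows essentially the same route as the paper's proof: argue by contradiction, use the decomposition \eqref{eq:decomposition} to get $f(P)=f(Q)$ for every $f\in L_{\calX_i}(D_i)$, and contradict Riemann--Roch via $\deg(D_i-P-Q)\geq 2g_i-1$ (your surjectivity of the two-point evaluation map $L(D_i)\to\F^2$ is just a rephrasing of the paper's $L(D_i-P)=L(D_i-P-Q)$). The only difference is your final paragraph flagging the $\mu_i(P)=0$ indeterminacy in the term-by-term evaluation, a degenerate case the paper's proof passes over silently; your sketched fix via regularity of each summand $\mu_i^j(f_j\circ\pi_i)\in L(D_i)$ is a reasonable way to close it.
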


\begin{proof}
	Let us assume by contradiction that there exists two distinct points $P$ and $Q$ on $\calX_i$ such that $\pi_i(P)=\pi_i(Q)$ and $\mu_i(P)=\mu_i(Q)$. By Definition \ref{def:mu-decomp}, every function $f$ in $L(D_i)$ can be written $f=\sum\mu_i^j f_j \circ \pi_i$ some some functions $f_j$ on the quotient curve $\calX_{i+1}$. Then, for every $f \in L(D_i)$, we have $f(P)=f(Q)$. In particular, this means that a function of $L(D_i)$ vanishes at $P$ if and only if its vanishes at $Q$. Therefore, the Riemann-Roch spaces $L(D_i-P)$ and $L(D_i-P-Q)$ are equal, which contradicts the Riemann-Roch theorem, since $\deg(D_i-P-Q) \geq 2g_i-1$.
\end{proof}

\subsection{RS codes are foldable AG codes}\label{ex:RS-foldable}

Reed-Solomon codes are AG codes on the projective line $\PP^1$. Moreover the Riemann-Roch space $L_{\PP^1}(d P_\infty)$ on $\PP^1$ for $P_\infty=[0:1]$ is isomorphic to 
the set of polynomials of degree (less than or equal to) $d$.

In this case, the decomposition \eqref{eq:decomposition} is nothing but the splitting of a polynomial into an even part and an odd part, which plays a crucial role in the \textsf{FRI} protocol when the characteristic is not $2$.

To make both points of view coincide, let us consider the involution $\gamma :[X_0:X_1] \mapsto [-X_0 : X_1]$. It generates a group isomorphic to $\Z/2\Z$ and the quotient of $\PP^1$ by this group is obtained as the image by $\pi:[X_0:X_1]\mapsto [X_0^2:X_1^2]$.

The divisor $D\mydef d P_\infty$ is invariant under $\gamma$. Let us choose $\mu$ as the function $x= \frac{X_0}{X_1}$. We have $\div(x) = P_0 - P_\infty$ with $P_0=[1:0]$. Noticing that $\pi_*(P_\infty)=P_\infty$ and $\pi_*(P_0)=P_0$, we get
\[\floor{\frac{1}{2}\pi_*(D + (x))}=\floor{\frac{1}{2}((d-1)P_\infty + P_0)}=\floor{\frac{d-1}{2}}P_\infty,\]
and the Riemman-Roch space $L_{\PP^1}(d P_\infty)$ is split into two parts:
\[L_{\PP^1}(d P_\infty)= \pi^*L_{\PP^1}\left(\floor{\frac{d}{2}}P_\infty\right) + x \pi^*L_{\PP^1}\left(\floor{\frac{d-1}{2}}P_\infty\right).  \]
We recover the decomposition of a polynomial of degree $d$ into even and odd parts of respective degrees $\floor{\frac{d}{2}}$ and $\floor{\frac{d-1}{2}}$.

\begin{remark}
	The function $\mu$ is not unique: any odd polynomial of $x$ would make a suitable choice for $\mu$.
\end{remark}

\medskip

Now, let us remark that the RS code \[V \mydef \set{f \in \F^\calP ; \deg f \le d} = C(\PP^1, \calP, dP_\infty)\] is a foldable AG code, for any $\calP \subset \F$ of size $\size{\calP} = 2^{r}$ for a certain integer $r$ and any degree bound $d$. We shall then retrieve the construction of the RS proximity test of \cite{BBHR18a}. 

Firstly, the finite solvable group $\Z / 2^r \Z$ of size $\size{\calP}$ acts on $\PP^1$ via $[X_0:X_1] \mapsto [X_0,\xi X_1]$, where $\xi$ is a primitive $2^r$-th root unity. It clearly fulfils the two first items of Definition \ref{def:good_properties}. When considering its composition series

\begin{equation}
	\Z / 2^{r} \Z \vartriangleright \Z / 2^{r-1} \Z  \vartriangleright \cdots \vartriangleright 1
\end{equation}
and the action of the corresponding factor group $\Gamma = \gen{\gamma} \simeq \Z / 2 \Z$, we obtain a trivial sequence of curves $(\calX_i)$ with $\calX_i = \PP^1$ for all $i$. Next, consider the sequence $(\mu_i)$ with $\mu_i = \mu = x \mydef \frac{X_1}{X_0}$, then $\gamma \mu = - \mu$. Set $d_0 \mydef d$, and for any $i \in \range{r}$, $d_{i+1} \mydef \floor{\frac{d_i}{2}}$. Note that there exists $r' < r$ such that $d_{r'}, \dots, d_{r}$ are all equal to 0. 
Setting $D_i = \floor{\frac{d_i}{2}}P_\infty$, we have $\Gamma_i$-invariant divisors fulfilling the compatibility condition given in Definition \ref{def-div_compatible}, by letting $\nu_{i+1, j}$ to be the constant function equal to 1 if $\floor{\frac{d_i}{2}} = \floor{\frac{d_i-1}{2}}$, and $\nu_{i+1, j} : x \mapsto x$ otherwise.

\subsection{Splitting Riemann-Roch spaces according to a cyclic group of automorphisms}

The first requirement to make a sequence of codes foldable is the splitting the Riemann-Roch spaces as in \eqref{eq:decomposition}, which mimics the decomposition in odd and even parts of univariate polynomials.  Under some additional hypotheses, a decomposition like \eqref{eq:decomposition} always exists. Let us detail this framework.

\medskip

Let $\calX$ be a smooth irreducible curve over a field $\F$ and let $\Gamma$ be a \textit{cyclic} group of order $m$ generated by an element $\gamma$ that acts on $\F(\calX) \times \overline{\F}$. Assume that $m$ and the characteristic of $\F$ are coprime and consider $\zeta \in \bar{\F}$ a primitive $m^{th}$ root of unity, lying in some algebraic closure $\bar{\F}$ of $\F$. 

Set $\calY := \calX/\Gamma$ and $\pi : \calX \rightarrow \calY$ be the canonical projection morphism.

Fix a $\Gamma$-invariant divisor $D \in \Div(\calX)$. We want to exhibit a relation between the Riemann-Roch space $L_\calX(D)$ and some Riemann-Roch spaces on $\calY$. The group $\Gamma$ acts on the vector space $L_\calX(D)$ via $\gamma \cdot f = f \circ \gamma$. By representation theory,
\[L_\calX(D) = \bigoplus_{j=0}^{m-1} L_\calX(D)_j,\]
where $L_\calX(D)_j \mydef \{ g \in L_\calX(D) \mid \gamma \cdot g = \zeta^j g \}$.

One of the key ingredients of this section is a theorem due to Kani \cite{Kani}, which we reformulate here in the case where $\Gamma$ is cyclic.

\begin{theorem}[\cite{Kani}]\label{th:kani}
	Let $\Gamma=\langle \gamma \rangle$ be a cyclic group that acts on $\F(\calX) \times \overline{\F}$. Assume that the order $m = \size{\Gamma}$ is coprime with $\size{\F}$. Then the two following statements hold.
	\begin{enumerate}
		\item There exists a function $\mu \in \bar{\F}(\calX)$ such that $\gamma \cdot \mu = \zeta \mu$.
		\item For any $\Gamma$-invariant divisor $D \in \Div(\calX)$, when considering the Riemman-Roch spaces over the algebraic closure $\bar{\F}$, we have
		\begin{equation} \label{eq:kani-closure}
			{L_\calX(D)_j\otimes \bar{\F} \simeq \mu^j \pi^* \left(L_\calY\left(\floor{\frac{1}{m}\pi_*\left(D+j\div(\mu)\right)}\right)\otimes \bar{\F}\right)}.
		\end{equation}
	\end{enumerate}
	
\end{theorem}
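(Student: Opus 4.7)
The plan is to prove the two assertions of Theorem~\ref{th:kani} separately, with the first a direct application of Kummer theory and the second reducing to a careful local divisor computation at each point of $\calY$.

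For the first assertion, I would observe that the fixed field $\bar{\F}(\calX)^{\Gamma}$ equals $\bar{\F}(\calY)$, so $\bar{\F}(\calX)/\bar{\F}(\calY)$ is a cyclic Galois extension of degree $m$. Since $\gcd(m,\operatorname{char}\F)=1$ and $\bar{\F}$ contains a primitive $m$-th root of unity $\zeta$, Kummer theory produces a generator $\mu \in \bar{\F}(\calX)$ with $\mu^m \in \bar{\F}(\calY)$, on which the distinguished generator $\gamma$ acts by some primitive $m$-th root of unity. Rescaling $\mu$ by a suitable power of $\zeta$ then arranges $\gamma \cdot \mu = \zeta \mu$, giving (1).

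For the second assertion, the plan is to decompose the $\Gamma$-action on $L_\calX(D) \otimes \bar{\F}$ into eigenspaces (valid since $m$ is invertible in $\bar{\F}$) and exhibit, for each $j \in \range{m}$, an explicit isomorphism between the $j$-th eigenspace and $\mu^j \pi^* L_\calY(E_j) \otimes \bar{\F}$, where $E_j \mydef \floor{\frac{1}{m}\pi_*(D + j\div(\mu))}$. The map $g \mapsto \mu^j \pi^* g$ lands in the $\zeta^j$-eigenspace since $\gamma \mu = \zeta \mu$ and $\pi^* g$ is $\Gamma$-invariant. Conversely, any $f$ in the $\zeta^j$-eigenspace satisfies $\gamma(\mu^{-j} f) = \mu^{-j} f$, hence $\mu^{-j} f = \pi^* g$ for a unique $g \in \bar{\F}(\calY)$. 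What remains is to check that $f \in L_\calX(D)$ if and only if $g \in L_\calY(E_j)$.

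The main obstacle, and the place where the floor function enters, is exactly this last divisor comparison. Setting $D' \mydef D + j\div(\mu)$, which is $\Gamma$-invariant because $\div(\mu) = \div(\gamma\mu)$, the condition $\div(f) + D \geq 0$ rewrites as $\pi^*\div(g) + D' \geq 0$. Fixing $Q \in \calY$ and $P \in \pi^{-1}(Q)$ with ramification index $e_P$, the $\Gamma$-invariance of $D'$ forces its coefficient $n_P$ at $P$ to be constant along the fiber. Since $v_P(\pi^* g) = e_P v_Q(g)$, the condition at $P$ reads $e_P v_Q(g) + n_P \geq 0$, i.e., $v_Q(g) \geq -\floor{n_P/e_P}$. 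Because $\size{\pi^{-1}(Q)} = m/e_P$, the coefficient of $\frac{1}{m}\pi_* D'$ at $Q$ equals $n_P/e_P$, and collecting these inequalities over all $Q$ gives exactly $g \in L_\calY(E_j)$. The delicate part throughout is keeping the ramification data aligned with $\Gamma$-invariance so that the floor can be interchanged cleanly with $\frac{1}{m}\pi_*$.
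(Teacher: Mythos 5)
The paper does not prove this statement: it is quoted directly from Kani's work \cite{Kani} and used as a black box, so there is no internal proof to compare against. Judged on its own, your argument is the standard proof of the cyclic, tame case and is essentially correct: the identification $\bar{\F}(\calX)^\Gamma = \bar{\F}(\calY)$ and Kummer theory give the eigenfunction $\mu$; semisimplicity of the $\Gamma$-action (valid since $m$ is invertible in $\bar{\F}$) gives the eigenspace decomposition; $\div(\mu)$ is $\Gamma$-invariant because $\gamma\cdot\mu$ and $\mu$ differ by a constant; and the local computation at a fiber $\pi^{-1}(Q)$, using that the coefficient $n_P$ of $D+j\div(\mu)$ is constant along the fiber and that $e_P\,\size{\pi^{-1}(Q)} = m$ for a Galois cover over $\bar{\F}$, correctly converts $e_P v_Q(g) + n_P \geq 0$ into $v_Q(g) \geq -\floor{n_P/e_P}$, which is exactly membership in $L_\calY\left(\floor{\frac{1}{m}\pi_*(D+j\div(\mu))}\right)$.

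One step as written does not work: \emph{rescaling $\mu$ by a power of $\zeta$} cannot change the eigenvalue, since for a constant $c$ one has $\gamma\cdot(c\mu) = c\,\gamma\cdot\mu = \zeta^a (c\mu)$. If Kummer theory hands you $\gamma\cdot\mu = \zeta^a\mu$ with $\gcd(a,m)=1$, the correct normalization is to replace $\mu$ by $\mu^b$ with $ab\equiv 1 \pmod m$ (still a Kummer generator with $(\mu^b)^m$ in the base field), or equivalently to declare $\zeta^a$ to be the chosen primitive $m$-th root of unity. This is a two-line fix, but the sentence as stated is false. The rest of the proposal is sound.
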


\begin{remark}
	If the function $\mu$ is defined over the base field $\F$ then the decomposition \eqref{eq:kani-closure} is valid when considering $\F$-vector spaces:
	\[   {L_\calX(D)_j \simeq \mu^j \pi^* \left(L_\calY\left(\floor{\frac{1}{m}\pi_*\left(D+j\div(\mu)\right)}\right)\right)}.\] 
	In practical instantiations, we are always able to choose $\mu$ defined over $\F$, even when $\zeta$ does not belong to $\F$. For instance, in the case of Kummer curves (see Section \ref{subsec:Kummer}), the decomposition provided by Theorem \ref{th:M04} is always valid. However, as the evaluation set $\calP$ needs to be formed of orbits of size $\size \calG$, instantiations require this primitive root to belong to $\F$.\\
	Let us highlight that Theorem \ref{th:kani} does not apply on the Hermitian tower, as the degree of the Artin--Schrieir extensions is not coprime with the characteristic.
\end{remark}

To handle the divisors that appear in the decomposition above, we need to get a better grasp on the zeroes and the poles of the function $\mu$. If the group action fixes the field of constants, the ramification points of $\pi$ are zeroes or poles of the function $\mu$, as stated in the following lemma.

\begin{lemma}\label{lem:pts-ram}
	Assume that $\Gamma=\gen{\gamma}$ is a cyclic group of order $m$ which fixes the $m^{th}$ primitive root $\zeta$. Let $P$ be a point of $X$ whose stabilizer $\Gamma_P$ is non-trivial. Then ${P \in \Supp(\mu)}$.
\end{lemma}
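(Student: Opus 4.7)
The plan is to use the defining property $\gamma \cdot \mu = \zeta \mu$ from Theorem \ref{th:kani} together with the stabilizer hypothesis to force $\mu$ to have either a zero or a pole at $P$. First I would iterate the relation: since the action of $\gamma$ on functions is linear and $\gamma$ fixes $\zeta$, we have $\gamma^k \cdot \mu = \zeta^k \mu$ for every integer $k \geq 0$.

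Next, I would pick some non-identity element $\gamma^k \in \Gamma_P$ (which exists by hypothesis, with $1 \leq k < m$) and argue by contradiction, supposing $v_P(\mu) = 0$ so that $\mu(P) \in \F^*$ is a well-defined nonzero element of the residue field at $P$. Evaluating the identity $\gamma^k \cdot \mu = \zeta^k \mu$ at $P$ gives, on one hand, $(\gamma^k \cdot \mu)(P) = \mu(\gamma^k(P)) = \mu(P)$ since $\gamma^k$ fixes $P$, and on the other hand $(\gamma^k \cdot \mu)(P) = \zeta^k \mu(P)$. Therefore $(1 - \zeta^k)\mu(P) = 0$. Since $\zeta$ is a primitive $m$-th root of unity and $\Gamma = \gen{\gamma}$ has order $m$, the assumption $\gamma^k \neq e$ forces $m \nmid k$, so $\zeta^k \neq 1$, and we conclude $\mu(P) = 0$. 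This contradicts $v_P(\mu) = 0$, so we must have $v_P(\mu) \neq 0$, meaning $P$ is a zero or a pole of $\mu$ and hence lies in $\Supp(\mu)$.

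There is no serious technical obstacle here; the argument is essentially a one-line application of the eigenfunction property of $\mu$. The only point requiring minor care is the convention for the action of $\Gamma$ on $\F(\calX)$ (whether $\gamma \cdot f$ is $f \circ \gamma$ or $f \circ \gamma^{-1}$), but this only affects the sign of the exponent $k$ and is immaterial since $\gamma^k \neq e$ is equivalent to $\gamma^{-k} \neq e$. One should also implicitly handle the case $v_P(\mu) < 0$ at the outset by noting it already gives $P \in \Supp(\mu)$, so reducing to the assumption $v_P(\mu) \geq 0$ before evaluating at $P$ is well-defined.
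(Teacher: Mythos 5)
Your proof is correct and follows essentially the same route as the paper's: iterate the eigenfunction relation $\gamma\cdot\mu=\zeta\mu$ to get $\gamma^k\cdot\mu=\zeta^k\mu$, evaluate at the fixed point $P$, and use $\zeta^k\neq 1$ to force $\mu(P)=0$ or a pole. Your write-up is in fact slightly more careful than the paper's, since you explicitly dispose of the case $v_P(\mu)<0$ before evaluating and note the dependence on the convention for the action.
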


\begin{proof}
	By hypothesis, there exists $j \in \Range{1}{m-1}$ such that $\gamma^j\in\Gamma_P$. Then
	\begin{align*}
		(\gamma^j\cdot \mu)(P)&=\zeta^j \mu(P) && \text{ by definition of } \mu\text{ in Th. \ref{th:kani},}&\\
		&= \mu(P) & &\text{ because }\gamma^j \in \Gamma_P.&
	\end{align*}
	Since $\zeta^j \neq 1$, the point $P$ is either a pole or a zero of $\mu$.
\end{proof}

\section{Foldable AG codes on Kummer curves}\label{subsec:Kummer}

\subsection{Preliminaries}

Let us consider a Kummer curve over a finite field $\F$ defined by an equation of the form 
\begin{equation}\label{eq:def-Kummer}\calX: y^N=f(x)=\prod_{\ell=1}^m (x-\alpha_\ell)\end{equation}
where $f$ is a degree-$m$ separable polynomial of $\F[X]$, $\gcd(N,m)=1$ and $\alpha_\ell \in \F$ for all $\ell \in \Range{1}{m}$. Let us denote by $P_\ell$ the affine point $(\alpha_\ell,0)$ and $P_\infty$ the unique point of $\calX$ lying on the line at infinity.

\subparagraph*{Sequence of curves.} Assume that $\gcd(N,\size{\F})=1$. 
The group $\Z/N\Z$ acts on $\calX$ via the morphism $(x,y) \mapsto (x,\zeta y)$ where $\zeta$ is a primitive $N^{th}$ root of unity. We assume that $\zeta$ belongs to $\F$.

The cyclic group $\Z/N\Z$ is solvable: writing the prime decomposition of $N=\prod_{i=0}^{s-1} p_i$ gives the following sequence of subgroups
\begin{equation}\label{eq:comp-series-kummer}
	\Z/N\Z \vartriangleright \Z/N_1\Z \vartriangleright \Z/N_2\Z \vartriangleright \dots \vartriangleright \Z/N_{s-1}\Z \vartriangleright 1, 
\end{equation}
where 
\begin{equation}\label{eq:def:N_i}
	N_i\mydef\prod_{j=i}^{s-1} p_j.
\end{equation}
The $i$-th factor group $\Gamma_i$ is isomorphic to the cyclic group of prime order $\Z/p_i \Z$. It is spanned by $\gamma_i:(x,y) \mapsto (x,\zeta_iy)$ where $\zeta_i$ is a primitive $p_i^{th}$ root of unity.

Set $\calX_0\mydef \calX$. By Section \ref{subsec:seq-curves}, the composition series \eqref{eq:comp-series-kummer} gives a sequence of curves $(\calX_i)$ in which the $i^{th}$ curve is defined by
\begin{equation}\label{Crvi}
	\calX_i : y^{N_i}=f(x)
\end{equation}
and has genus
\[g_i=\frac{(N_i-1)(m-1)}{2}.\]
The last curve $\calX_{s}$ has genus 0 and is isomorphic to the projective line $\PP^1$. These successive quotients provide a sequence of projections $\pi_i: \calX_i \rightarrow \calX_{i+1}$ defined by ${\pi_i(x,y)=(x,y^{p_i})}$:

\begin{center}
	\begin{tikzcd}
		{\calX_0 \arrow[loop, distance=14, out=130, in=100,"\gamma_0"]} \arrow[r, two heads, "\pi_0"] & \dots  \arrow[r, two heads, "\pi_i"] & \calX_i\arrow[loop, distance=14, out=130, in=100,"\gamma_i"] \arrow[r, two heads, "\pi_{i+1}"]& \calX_{i+1}\arrow[loop, distance=14, out=130, in=100,"\gamma_{i+1}"] \arrow[r, two heads] & \dots \arrow[r, two heads,"\pi_{r-1}"] & \calX_{r} \simeq \PP^1.
	\end{tikzcd}
\end{center} 

\begin{example}\label{ex:hermitian} 
	The Hermitian curve defined over $\F_{q^2}$ by
	\begin{equation}\label{eq:hermitian}
		\calX_0 : y^{q+1} = x^q+x.
	\end{equation}
	is a well-studied particular case of Kummer type curve. In this case, every curve in a $\XGseq$-sequence is maximal over $\F_{q^2}$ \cite[Proposition 6]{L87}, i.e. $\size{\calX_i(\F_{q^2})}=q^2+1+2g_iq$. 
\end{example}

\subparagraph*{Stabilized points.}

Let us denote $P^i_\infty$ the unique point at infinity on the curve $\calX_i$. One can easily check that 
\[P_\infty^i:=\left\{\begin{array}{cl}
	(1:0:0) &\text{if } N > m \\
	(0:1:0) & \text{otherwise.}
\end{array}\right.\]
%Note that $N$ and $m$ are assumed coprime and thus are never equal. 

The points of $\calX_0$ whose stabilizer under $\Z/N\Z$ is non-trivial are in fact fixed by $\Z/N\Z$ and consist precisely in $P_1,\dots,P_m$ and $P_\infty^i$.

\subsection{Decomposition of Riemann-Roch spaces}

The theory of Kummer extensions provides us a decomposition like \eqref{eq:decomposition} at each level, with $\mu_i=y$ for every $i\in\range{s}$.

\begin{theorem}[{\cite[Theorem 2.2]{M04}}]\label{th:M04}
	Let $D \in \Div(\calX_i)$ that is $\Gamma_i$-invariant. Then 
	\[L_{\calX_i}(D_i)=\bigoplus_{j=0}^{p_i-1}y^j L_{\calX_{i+1}}\left(\floor{\frac{1}{p_i}(\pi_i)_*(D_i+j\div_{F_i}(y))}\right).\]
\end{theorem}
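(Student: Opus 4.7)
The plan is to exploit that $\Gamma_i = \langle \gamma_i \rangle \simeq \Z/p_i\Z$ is cyclic of order $p_i$ coprime to $\operatorname{char}(\F)$, with a primitive $p_i$-th root of unity $\zeta_i \in \F$. Since $D_i$ is $\Gamma_i$-invariant, the pullback action $\gamma_i \cdot f = f \circ \gamma_i$ preserves $L_{\calX_i}(D_i)$, and standard representation theory of cyclic groups gives an eigenspace decomposition
\[
L_{\calX_i}(D_i) = \bigoplus_{j=0}^{p_i-1} L_{\calX_i}(D_i)_j,\qquad L_{\calX_i}(D_i)_j = \{g \in L_{\calX_i}(D_i) \mid \gamma_i \cdot g = \zeta_i^j g\},
\]
the decomposition being effected by the idempotents $\frac{1}{p_i}\sum_{k} \zeta_i^{-jk}\gamma_i^k$.

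Next I would identify each eigenspace as $y^j$ times a pullback. Since $\gamma_i \cdot y = \zeta_i y$, multiplication by $y^{-j}$ sends the $\zeta_i^j$-eigenspace to the $\Gamma_i$-invariant subspace of $F_i$. By Galois descent (the invariants of $F_i$ under $\Gamma_i$ are precisely $\pi_i^* F_{i+1}$), any $f_j \in L_{\calX_i}(D_i)_j$ can be uniquely written $f_j = y^j \pi_i^* h_j$ for some $h_j \in F_{i+1}$. It then remains to determine which Riemann-Roch space on $\calX_{i+1}$ contains $h_j$.

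For this I would translate the condition $\div(f_j) + D_i \ge 0$ into valuations of $h_j$ point by point on $\calX_{i+1}$. For $Q \in \calX_{i+1}$ and any preimage $P \in \pi_i^{-1}(Q)$ with ramification index $e_P$, the identity $v_P(\pi_i^* h_j) = e_P\, v_Q(h_j)$ gives
\[
e_P v_Q(h_j) \ge -\bigl(n_{D_i}(P) + j\,v_P(y)\bigr).
\]
Because both $D_i$ and $\div(y)$ are $\Gamma_i$-invariant, the right-hand side is constant on the $\Gamma_i$-orbit of $P$, so it suffices to impose the condition at one preimage per orbit; at unramified $Q$ this yields $v_Q(h_j) \ge -\bigl(n_{D_i}(P) + j v_P(y)\bigr)$, and at totally ramified $Q$ (where $e_P = p_i$) it yields $v_Q(h_j) \ge -\bigl(n_{D_i}(P) + j v_P(y)\bigr)/p_i$. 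Forcing $v_Q(h_j)$ to be an integer replaces this by $-\lfloor(n_{D_i}(P)+j v_P(y))/p_i\rfloor$. A direct comparison with the definition of $(\pi_i)_*(D_i+j\div(y))$ at $Q$ (a single summand at ramified points, $p_i$ equal summands at unramified points) shows that these conditions are exactly captured by $h_j \in L_{\calX_{i+1}}\!\left(\lfloor \tfrac{1}{p_i}(\pi_i)_*(D_i + j\div(y))\rfloor\right)$. The converse inclusion is obtained by running the same valuation computation in reverse, yielding a bijection $L_{\calX_i}(D_i)_j \xrightarrow{\sim} y^j \pi_i^* L_{\calX_{i+1}}(E_{i,j})$.

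The main obstacle, and the only delicate point, is the bookkeeping at ramified points: for an unramified $Q$, $\Gamma_i$-invariance of $D_i + j\div(y)$ makes the preimage coefficients already divisible by $p_i$, so the floor is trivial; but for a ramified $Q$, the single preimage can contribute a coefficient not divisible by $p_i$, which is precisely why the floor function appears in the statement. Recall that by Lemma \ref{lem:pts-ram} every ramified point is either a zero or a pole of $y$, so the non-divisibility comes jointly from the contribution of $D_i$ and the term $j v_P(y)$; one must check case by case that rounding $v_Q(h_j)$ up to the next integer on the RHS of the valuation inequality matches the floor of the push-forward divided by $p_i$ (using $\lceil -a/p_i\rceil = -\lfloor a/p_i\rfloor$). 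Once this is verified, the displayed decomposition follows.
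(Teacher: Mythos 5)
Your proposal is correct. Note that the paper does not prove this statement at all: it is quoted verbatim from Maharaj \cite{M04}, and the only related argument in the paper is the general Theorem \ref{th:kani} (Kani), which gives the same eigenspace-by-eigenspace identification but a priori only after tensoring with $\bar{\F}$. Your argument is the standard proof of Maharaj's result and fills that gap over the base field itself: the idempotent decomposition and the Galois-descent step both go through over $\F$ precisely because $p_i$ is invertible, $\zeta_i\in\F$, and the eigenfunction $y$ is $\F$-rational, which is exactly the point the paper's remark after Theorem \ref{th:kani} alludes to. The valuation bookkeeping is also right: since $\Gamma_i$ has prime order, every point is either unramified or totally ramified, $D_i+j\div_{F_i}(y)$ is $\Gamma_i$-invariant so the floor is vacuous at unramified places, and at the totally ramified places (which by Lemma \ref{lem:pts-ram} lie in $\Supp(\div(y))$) the identity $\ceil{-a/p_i}=-\floor{a/p_i}$ converts the integrality constraint on $v_Q(h_j)$ into the floor of the push-forward, in both directions. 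The only cosmetic caveat is that tameness ($\gcd(p_i,\size{\F})=1$) should be mentioned as the reason the pullback formula $v_P(\pi_i^*h_j)=e_P v_Q(h_j)$ combines cleanly with the eigenspace structure; this is automatic in the Kummer setting but is worth flagging since the paper's Hermitian-tower analogue is wildly ramified and requires a different argument.
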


Note that the function $y$ maps distinct points in the same $\Gamma_i$-orbits onto different values, and thus satisfies the condition \ref{sit:dif-values} of Definition \ref{def:good_properties}.

\subparagraph*{An example of a sequence of $y$-compatible divisors.}

In order to exhibit a sequence of divisors $(D_i)$ such that $D_{i+1}$ is $(D_i,y)$-compatible for every $i \geq 0$, we need to handle the divisor associated to $y$ and some other elementary functions on each curve $\calX_i$, described for instance in \cite{MQS15}.

\begin{lemma}[\cite{MQS15}]
	On $\calX_i$ for every $i \in \range{s}$, we have
	\begin{enumerate}
		\item $\div_{F_i}(x-\alpha_\ell)=N_i(P_\ell-P^i_\infty)$,
		\item $\div_{F_i}(y)=P_1+\dots+P_m - m P_\infty^i$.
	\end{enumerate}  
\end{lemma}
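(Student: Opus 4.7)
Both claims amount to determining principal divisors on $\calX_i$, which I would handle by studying the ramification of the degree-$N_i$ covering $\varphi_i\colon \calX_i\to\PP^1$, $(x,y)\mapsto x$. Since $f$ is separable of degree $m$ with $\gcd(N_i,m)=1$ and $\gcd(N_i,\operatorname{char}(\F))=1$, the extension $F_i/\F(x)$ is a Kummer extension of degree $N_i$. Standard ramification theory for such extensions shows that the ramified places of $\F(x)$ are exactly the zeroes of $f$ and the place at infinity, and that each of them is totally ramified in $F_i$: above $\alpha_\ell$ there is the unique point $P_\ell=(\alpha_\ell,0)$ with ramification index $N_i$, and above $\infty$ there is the unique point $P_\infty^i$ with ramification index $N_i$.

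For assertion (1), once this ramification picture is in place, the result is just the pull-back formula. The function $x-\alpha_\ell$, viewed on $\PP^1$, has divisor $[\alpha_\ell]-[\infty]$. Pulling back under $\varphi_i^*$ and multiplying by the ramification indices at the (unique) points above $\alpha_\ell$ and $\infty$ gives $\div_{F_i}(x-\alpha_\ell)=N_i P_\ell-N_i P_\infty^i=N_i(P_\ell-P_\infty^i)$.

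Assertion (2) I would then derive algebraically from (1) using the defining equation. Taking divisors of both sides of $y^{N_i}=f(x)=\prod_{\ell=1}^m(x-\alpha_\ell)$ and using (1) yields
\[N_i\,\div_{F_i}(y)=\sum_{\ell=1}^m \div_{F_i}(x-\alpha_\ell)=N_i\Bigl(\sum_{\ell=1}^m P_\ell - m\,P_\infty^i\Bigr).\]
Since $\Div(\calX_i)$ is a free abelian group, dividing by $N_i$ gives $\div_{F_i}(y)=P_1+\cdots+P_m-m\,P_\infty^i$.

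The only step needing real care is justifying the ramification picture. Total ramification at each $\alpha_\ell$ is essentially immediate: by separability of $f$, $v_{P_\ell}(x-\alpha_k)=0$ for $k\ne\ell$, so from $y^{N_i}=f(x)$ one reads $N_i\,v_{P_\ell}(y)=v_{P_\ell}(x-\alpha_\ell)=e(P_\ell/\alpha_\ell)$, which combined with $e(P_\ell/\alpha_\ell)\mid N_i$ forces equality. Total ramification at infinity is where the coprimality hypothesis $\gcd(N_i,m)=1$ enters: for any point $P$ above $\infty$ with ramification index $e$, the defining equation forces $N_i\,v_P(y)=-e\,m$, so $N_i\mid e\,m$ and hence $N_i\mid e$, giving $e=N_i$ and thus a unique point above $\infty$. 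This is the only place where genuine reasoning is required; everything else is formal divisor manipulation.
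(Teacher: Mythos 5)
Your proof is correct. The paper does not prove this lemma itself (it is quoted from \cite{MQS15}), so there is no in-paper argument to compare against; your derivation is the standard one for Kummer extensions: total ramification above each $\alpha_\ell$ follows from separability of $f$ and the valuation identity $N_i v_{P_\ell}(y)=e(P_\ell/\alpha_\ell)$, total ramification above $\infty$ uses $\gcd(N_i,m)=1$ exactly as you indicate, assertion (1) is then the pull-back formula, and assertion (2) follows by dividing $\div(y^{N_i})=\div(f(x))$ by $N_i$ in the free abelian group of divisors. All the steps check out.
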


We now give sufficient conditions on the curve $\calX_0$ and the first divisor $D_0$ to get a sequence of compatible divisors.

\begin{lemma}\label{lem:compatible-Kummer} Set $\displaystyle{D_0= \sum_{\ell=1}^m a_{0,\ell} P_\ell + b_0 P^0_\infty \in \Div(\calX_0)}$.
	
	Assume that $m \equiv -1 \mod N$ and that the integers $a_{0,1},\dots,a_{0,m},b_{0}$ are all divisible by $N$. For every $i \in \range{s}$, set $D_{i+1}=\frac{D_i}{p_i}$. Then, the divisor $D_{i+1}$ is $(D_i,y)$-compatible.
\end{lemma}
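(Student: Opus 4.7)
The plan is to proceed by induction on $i$, keeping track of the invariant that $D_i = \sum_\ell a_{i,\ell} P_\ell + b_i P_\infty^i$ has every coefficient divisible by $N_i$; this is preserved by $D_{i+1} = D_i/p_i$ because $N_i/p_i = N_{i+1}$. Since $N_i \mid N$, the hypothesis $m \equiv -1 \pmod N$ automatically gives $m \equiv -1 \pmod{N_i}$, so in particular $p_i \mid (m+1)$ and $N_i \mid (m+1)$ at every level of the induction.

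At level $i$, I first compute $E_{i,j}$ explicitly. Using $\div_{F_i}(y) = \sum_\ell P_\ell - m P_\infty^i$ and the fact that $\pi_i$ sends $P_\ell$ to $P_\ell$ and $P_\infty^i$ to $P_\infty^{i+1}$, the pushforward is
\[(\pi_i)_*(D_i + j\div_{F_i}(y)) = \sum_\ell (a_{i,\ell} + j) P_\ell + (b_i - jm) P_\infty^{i+1}.\]
Because $p_i \mid a_{i,\ell}$ and $0 \leq j < p_i$, the floor of the $P_\ell$-coefficient divided by $p_i$ equals $a_{i+1,\ell}$. For the coefficient at $P_\infty^{i+1}$, the congruence $m \equiv -1 \pmod{p_i}$ allows me to rewrite
\[b_i - jm = p_i\left(b_{i+1} - \tfrac{j(m+1)}{p_i}\right) + j,\]
where $j(m+1)/p_i \in \Z$ since $p_i \mid m+1$. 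Hence
\[E_{i,j} = \sum_\ell a_{i+1,\ell} P_\ell + \left(b_{i+1} - \tfrac{j(m+1)}{p_i}\right) P_\infty^{i+1}, \quad\text{so that}\quad D_{i+1} - E_{i,j} = \tfrac{j(m+1)}{p_i} P_\infty^{i+1}.\]
This divisor is effective since $(m+1)/p_i \geq 0$, which establishes the first compatibility condition.

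For the second condition, set $M \mydef (m+1)/N_i$, a non-negative integer because $N_i \mid m+1$. I claim the function
\[\nu_{i+1,j} \mydef (x - \alpha_1)^{jM} \in \F(\calX_{i+1})\]
has pole divisor equal to $D_{i+1} - E_{i,j}$. Indeed, using $\div_{F_{i+1}}(x - \alpha_1) = N_{i+1}(P_1 - P_\infty^{i+1})$ from the lemma on elementary functions, one has $\div(\nu_{i+1,j}) = jMN_{i+1}(P_1 - P_\infty^{i+1})$, whose pole part is $jMN_{i+1} P_\infty^{i+1} = \tfrac{j(m+1)}{p_i} P_\infty^{i+1}$, matching exactly $D_{i+1} - E_{i,j}$.

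The main subtlety, and the precise reason for imposing $N \mid a_{0,\ell}$, is that this divisibility forces $D_{i+1} - E_{i,j}$ to be supported at the single point $P_\infty^{i+1}$ throughout the tower: if any $P_\ell$ contributed to the difference, one would have to produce a function with a prescribed multi-point pole divisor, a problem heavily constrained by the Weierstrass gap sequence on the higher-genus curves $\calX_{i+1}$. The one-point pole requirement reduces $\nu_{i+1,j}$ to a simple power of $x - \alpha_1$, whose divisor is given directly by the lemma, so the balancing functions always exist at every level of the induction.
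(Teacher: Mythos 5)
Your proof is correct and follows essentially the same route as the paper's: you compute $E_{i,j}$ explicitly from $\div_{F_i}(y)$ and the divisibility hypothesis, obtain $E_{i,j}=D_{i+1}-\tfrac{j(m+1)}{p_i}P_\infty^{i+1}$ (the paper writes $m=\kappa_i N_i-1$, so your $M=(m+1)/N_i$ is its $\kappa_i$), and take $\nu_{i+1,j}=(x-\alpha_1)^{jM}$ exactly as in the paper. The inductive bookkeeping of the divisibility invariant and the closing remark on one-point support are just more explicit renditions of what the paper does implicitly.
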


\begin{proof}
	For $i \in \Range{1}{s}$, let us set $a_{i,\ell}=\frac{a_{i-1,\ell}}{p_{i-1}}$ and $b_i=\frac{b_{i-1}}{p_{i-1}}$ such that  $D_i=\sum_{\ell=1}^m a_{i,\ell} P_\ell + b_i P^i_\infty$.
	
	Fix $i \in \range{s}$. The divisor $D_i$ is supported only by $\Gamma_i$-fixed points.
	
	For any ${j \in \range{p_i}}$, we have
	\[E_{i,j} = \floor{\frac{1}{p_i} {\pi_i}_* (D_i+j\div_{F_i}(y))}=\sum_{\ell=1}^m \floor{\frac{a_{i,\ell} + j}{p_i}} P_\ell + \floor{\frac{b_i-jm}{p_i}}P^{i+1}_\infty.\]
	Since $N_i$ divides $N$, we have $m \equiv -1 \mod N_i$. Write $m=\kappa_i N_i -1$ with $\kappa_i \geq 1$. 
	%If $i \neq r-1$, 
	The hypothesis on the integers  $a_{0,1},\dots,a_{0,m},b_{0}$ entails
	\[\begin{aligned}
		\floor{\frac{a_{i,\ell} + j}{p_i}}&=a_{i+1,\ell} + \floor{\frac{j}{p_i}}= a_{i+1,\ell} , \\
		\floor{\frac{b_i-jm}{p_i}}&=b_{i+1}-\frac{j\kappa_i N_i}{p_i} + \floor{\frac{j}{p_i}} = b_{i+1}-j\kappa_i N_{i+1} .
	\end{aligned}\]
	Then $E_{i,j}= D_{i+1} - j \kappa_i N_{i+1}P^{i+1}_\infty$. In particular, $D_{i+1}=E_{i,0}$ and $E_{i,j} \leq D_{i+1}$. Any ${\nu_{i+1,j}\mydef(x-\alpha)^{\kappa_i j}}$ with ${\alpha \in \{\alpha_1,\dots,\alpha_m\}}$ gives the last condition on $D_{i+1}$ for it to be $(D_i,y)$-compatible by Definition \ref{def-div_compatible}, i.e. ${D_{i+1}-E_{i,j}= \poles{\nu_{i+1,j}}}$.
\end{proof}

\subsection{Family of foldable codes}

We have gathered all the components to exhibit a foldable code on a family of Kummer curves.

\begin{proposition}\label{prop:Kummer_foldable}
	Let $\calX_0$ be a Kummer curve defined by \eqref{eq:def-Kummer} with $m \equiv -1 \mod N$. Consider an evaluation set $\calP_0\subseteq \calX_0(\F) \setminus \{P_1,\dots, P_m,P^0_\infty\}$ formed by $\Z/N\Z$-orbits.
	Take $D_0\in \Div(\calX_0)$ satisfying hypothesis of Lemma \ref{lem:compatible-Kummer}. If $N>n^e$ for some $e \in (0,1)$, then the AG code $C=C(\calX_0,\calP_0,D_0)$ is foldable.
\end{proposition}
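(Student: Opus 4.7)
The plan is to verify directly the three items of Definition \ref{def:good_properties}, essentially by assembling the ingredients already set up for Kummer curves in this section.

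For item 1, the natural choice is $\calG = \Z/N\Z$ acting on $\calX_0$ via $(x,y) \mapsto (x,\zeta y)$. This group is cyclic hence solvable, and the prime factorization $N = \prod p_i$ gives the composition series \eqref{eq:comp-series-kummer}, whose associated $\XGseq$-sequence of curves is precisely \eqref{Crvi}. To see that $\calG$ acts freely on $\calP_0$, I would use the fact, recalled earlier, that the points of $\calX_0$ with nontrivial stabilizer under $\Z/N\Z$ are exactly $P_1,\ldots,P_m$ and $P_\infty^0$; these are excluded from $\calP_0$ by hypothesis, so every $\calG$-orbit in $\calP_0$ has full size $N$. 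Item 2 is immediate, as the hypothesis $N > n^e$ reads exactly $|\calG| > |\calP_0|^e$.

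For item 3, the candidate sequences are $\mu_i = y$ on $\calX_i$ for every $i \in \range{s}$, together with the divisors defined inductively by $D_{i+1} = D_i / p_i$, so that Lemma \ref{lem:compatible-Kummer} applies. I would then check (a)--(d) in turn: for (a), each $D_i$ is supported on $\{P_1,\ldots,P_m, P^i_\infty\}$, which are all $\Gamma_i$-fixed; (b) is exactly the statement of Theorem \ref{th:M04} applied to $\Gamma_i$ and $D_i$; (c) is the observation made right after Theorem \ref{th:M04}, which I would justify by noting that $\gamma_i \cdot y = \zeta_i y$ with $\zeta_i$ a nontrivial $p_i$-th root of unity, while $y$ does not vanish on $\calP_i$ since its zero divisor on $\calX_i$ is supported on $\{P_1,\ldots,P_m\}$, which is disjoint from $\calP_i$; finally, (d) is exactly the conclusion of Lemma \ref{lem:compatible-Kummer}, whose hypotheses are met because $D_0$ satisfies the divisibility assumption and $m \equiv -1 \pmod N$.

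There is no substantial obstacle in this proof: all the geometric and combinatorial work has already been done in Theorem \ref{th:M04} (decomposition of Riemann--Roch spaces on a Kummer cover) and Lemma \ref{lem:compatible-Kummer} (explicit construction of balancing functions $\nu_{i+1,j} = (x-\alpha)^{\kappa_i j}$). The only point demanding a little care is checking that the support of $D_i$ remains inside the $\Gamma_i$-fixed locus throughout the recursion, which I would handle by observing that the inductive definition $D_{i+1} = D_i / p_i$ preserves the support.
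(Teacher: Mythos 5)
Your proof is correct and follows exactly the route the paper intends: the proposition is stated there without an explicit proof, as a direct assembly of the group action of $\Z/N\Z$, Theorem \ref{th:M04}, and Lemma \ref{lem:compatible-Kummer}, which is precisely what you do. Your justification of condition (c) via $\gamma_i\cdot y=\zeta_i y$ and the non-vanishing of $y$ on $\calP_i$ is a welcome addition, since the paper only asserts this fact.
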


The length of foldable codes over a Kummer curve as defined in \eqref{eq:def-Kummer} over $\F_q$ is bounded from above by $q+1+(N-1)(\kappa N-2)\sqrt{q}-\kappa N$, using Hasse-Weil bound, write $m=\kappa N-1$.

\begin{remark}
	\begin{enumerate}
		\item The primitive $N^{th}$ root $\zeta$ needs to belong to the base field $\F$ to ensure that the set $\calP_0$ is not empty.
		\item The condition on the coefficients of $D_0$ can be loosened while the previous statement still holds. If  $a_{0,1},\dots,a_{0,m},b_{0}$ are divisible by $\prod_{i=0}^{s-2} p_i$ and not necessarily by $p_{s-1}$, we choose $a_{s,\ell}=\ceil{\frac{a_{s-1,\ell}}{p_{s-1}}}\ \text{ and } b_{s}=\floor{\frac{b_{s-1}}{p_{s-1}}}$ for the coefficients of $D_s$. The last curve $\calX_s$ being isomorphic to $\PP^1$, the existence of balancing functions is trivial, if the first requirement of Definition \ref{def-div_compatible} holds.
	\end{enumerate} 
\end{remark}

\subparagraph*{What happens outside these hypotheses?}
Lemma \ref{lem:compatible-Kummer} provides sufficient conditions to make the code $C_{i+1}$ as small as possible compared to $C_i$ by choosing $D_{i+1}$ among the divisors $E_{i,j}$, as required for a sequence of foldable codes by Definition \ref{def:good_properties}. Let us have a look at what could happen when dropping these conditions.

\begin{example}\label{ex:bad_kummer}
	Over $\F_8$, consider $y^N=x^m+x$ where $N=9$ and $m=5$. Then $m \not\equiv -1 \mod N$ and $N=p_0p_1$ with $p_0=p_1=3$. For $D_0=18 P_\infty^0$, we have
	\begin{align*}
	E_{0,0}&=\floor{\frac{18}{3}}P_\infty^1=6 P_\infty^1, \\
	E_{0,1}&=\floor{\frac{18-5}{3}} P_\infty^1=4 P_\infty^1, \\ 
	E_{0,2}&=\floor{\frac{18-2 \times 5}{3}} P_\infty^1=2 P_\infty^1.
    \end{align*}
	
	Choosing $D_1=E_{0,0}$ would satisfy the first and the second conditions of Definition \ref{def-div_compatible} to be $(D_0,y)$-compatible but not the third one. One can reasonably ask the support of $D_1$ to consist only of $\pi_0(P_\infty^0)=P_\infty^1$, as one-point codes are generally better understood. The Weierstrass gap theory on Kummer curves (e.g. \cite[Theorem~3.2]{MQS15}) entails that if a function on $\calX_1 : y^3=x^5+x$ has a pole locus of the form $\alpha P_\infty^1$, then $\alpha \in 3\Z_++5\Z_+$. Therefore the smallest divisor of the form $D_1=d_1P_\infty^1$ that is $(D_0,y)$-compatible is $D_1=12 P_\infty^1$. With such a choice of divisors, the code $C_0$ of dimension 15 is folded into the code $C_1$ of dimension 12 whereas the length of $C_1$ is the third of the length of $C_0$.
	
\end{example}

\subsubsection{Explicit basis of the Riemann-Roch spaces}

AG codes from the Kummer curve $\calX$ associated to divisors as defined in Lemma \ref{lem:compatible-Kummer} have been studied by Hu and Yang \cite{HY18}. They provide a basis of the Riemann-Roch spaces in a combinatorial form.

\begin{theorem}[{\cite[Theorem~5]{HY18}}]
	Let $j, \: j_2, \dots, j_m$ be integers. We define
	\[E_{j, \: j_2, \dots, j_m}\mydef y^j \prod_{\ell=2}^m (x-\alpha_\ell)^{j_\ell}.\]
	Consider $D=\sum_{\ell =1}^m a_\ell P_\ell + bP_\infty$. Set
	\begin{align*}\Omega_{a_1,\dots,a_m,b}\mydef \biggl\{ (j, \: j_2, \dots, j_m) \mid j+a_1 \geq 0, \: j_\ell =\ceil{\frac{-j-a_\ell}{N}} \text{ for } \ell = 2,\dots,m   \\
		\text{ and }mj+N(j_2+\dots+j_m) \leq b \biggr\}.\end{align*}
	
	Then the elements $E_{j, \: j_2, \dots, j_m}$ for $(j, \: j_2, \dots, j_m) \in \Omega_{a_1,\dots,a_m,b}$ form a basis of $L_\calX(D)$.
\end{theorem}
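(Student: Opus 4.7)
The plan is to prove containment $E_{j, j_2, \dots, j_m} \in L_\calX(D)$ by a direct divisor calculation, then invoke the Kummer-type decomposition from Theorem~\ref{th:M04} to recognize the listed functions as a basis of $L_\calX(D)$ one eigenspace at a time.

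For containment, using $\div(y) = \sum_{\ell=1}^m P_\ell - mP_\infty$ and $\div(x - \alpha_\ell) = N(P_\ell - P_\infty)$, I would compute
\[\div(E_{j, j_2, \dots, j_m}) = jP_1 + \sum_{\ell=2}^m (j + Nj_\ell) P_\ell - \Bigl(jm + N\sum_{\ell=2}^m j_\ell\Bigr) P_\infty\]
and add $D = \sum_\ell a_\ell P_\ell + bP_\infty$. Effectivity at each point yields $j + a_1 \geq 0$, $j_\ell \geq (-j-a_\ell)/N$ for $\ell \geq 2$, and $jm + N\sum_{\ell \geq 2} j_\ell \leq b$; pinning $j_\ell$ to its minimal integer value $\ceil{(-j-a_\ell)/N}$ recovers exactly the defining conditions of $\Omega_{a_1, \dots, a_m, b}$.

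For the basis property, iterate Theorem~\ref{th:M04} along the composition series of $\Z/N\Z$ with $\mu = y$ (or apply Theorem~\ref{th:kani} once, since $\zeta \in \F$ and $\gcd(N, \size{\F}) = 1$) to obtain the eigendecomposition
\[L_\calX(D) = \bigoplus_{r=0}^{N-1} y^r \cdot \pi^* L_{\PP^1}(E_r), \qquad E_r = \sum_{\ell=1}^m \floor{\frac{a_\ell + r}{N}} \alpha_\ell + \floor{\frac{b - rm}{N}} \infty,\]
where $\pi(x,y) = x$ is the quotient map onto $\calX/(\Z/N\Z) \cong \PP^1$. Linear independence across distinct eigenspaces is then automatic, and I would group the candidates by residue $r \equiv j \pmod N$, writing $j = r + kN$ for $k \in \Z$. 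A direct computation using $y^N = f(x)$ and $\ceil{(-r-kN-a_\ell)/N} = \ceil{(-r-a_\ell)/N} - k$ collapses the generator to
\[E_{j, j_2(j), \dots, j_m(j)} = y^r (x - \alpha_1)^k \prod_{\ell=2}^m (x - \alpha_\ell)^{\ceil{(-r-a_\ell)/N}},\]
i.e.\ $y^r$ times a fixed rational factor multiplied by $(x-\alpha_1)^k$.

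The main obstacle is then the bookkeeping translating the floor/ceiling conditions defining $\Omega$ into valuation bounds defining $L_{\PP^1}(E_r)$. Using $\ceil{(-r-a_\ell)/N} = -\floor{(r+a_\ell)/N}$, a valuation check at $\alpha_1, \dots, \alpha_m, \infty$ identifies the $x$-factors above as a monomial basis of $L_{\PP^1}(E_r)$ as $k$ ranges in $\ceil{(-r-a_1)/N} \leq k \leq \floor{(b-rm)/N} + \sum_{\ell=2}^m \floor{(r+a_\ell)/N}$; multiplying these bounds by $N$ and back-substituting $j_\ell(j) = \ceil{(-j-a_\ell)/N}$ recovers exactly the first and third inequalities of $\Omega$, while the count of integers in this range equals $\deg E_r + 1 = \dim L_{\PP^1}(E_r)$. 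Assembling the eigenspace bases via the direct-sum decomposition then yields the claimed basis of $L_\calX(D)$.
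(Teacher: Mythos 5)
The paper does not prove this statement: it is quoted directly from \cite{HY18} (Theorem~5), so there is no internal proof to compare yours against. That said, your argument is correct and self-contained given tools already in the paper. The containment step is right: from $\div(y)=\sum_\ell P_\ell-mP_\infty$ and $\div(x-\alpha_\ell)=N(P_\ell-P_\infty)$, effectivity of $\div(E_{j,j_2,\dots,j_m})+D$ at $P_1,\dots,P_m,P_\infty$ gives exactly the three conditions defining $\Omega_{a_1,\dots,a_m,b}$ once $j_\ell$ is pinned to its minimal admissible value. The basis step also checks out: writing $j=r+kN$ and using $y^N=\prod_\ell(x-\alpha_\ell)$ does collapse the generator to $y^r(x-\alpha_1)^k\prod_{\ell\ge 2}(x-\alpha_\ell)^{\ceil{(-r-a_\ell)/N}}$; the identity $\ceil{(-r-a_\ell)/N}=-\floor{(r+a_\ell)/N}$ makes the valuation against $E_r$ at each $\alpha_\ell$, $\ell\ge 2$, exactly zero, so the only remaining constraints bound $k$ at $\alpha_1$ and $\infty$, and the number of admissible $k$ is $\deg E_r+1=\dim L_{\PP^1}(E_r)$ (with the degenerate case $\deg E_r<0$ giving an empty range, as it should). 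Since within each residue class the functions are a fixed nonzero rational function times distinct powers of $(x-\alpha_1)$, they are independent, and independence across classes follows from the direct-sum decomposition. The one point to make explicit is which decomposition theorem you invoke: Theorem~\ref{th:M04} as stated in the paper covers only a prime-order quotient, so you should either iterate it along the composition series of $\Z/N\Z$ (verifying the composed divisors agree with your $E_r$) or apply the degree-$N$ version of Maharaj's theorem, equivalently Theorem~\ref{th:kani}, which is legitimate here since $\gcd(N,\size{\F})=1$, $\zeta\in\F$, and $D$ is supported on fixed points and hence $\Z/N\Z$-invariant.
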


\subsubsection{Parameters}

To estimate the parameters of the code by using the Riemann-Roch theorem, we shall rely on the following result. 

\begin{lemma}\label{lem:nice_deg}
	Assume that $2(g_0-1) < \deg(D_0)$  (resp. $\deg(D_0)< n_0$). Then for every $i \in \Range{0}{s}$, $2(g_i-1) < \deg(D_i)$ (resp. $\deg(D_i) < n_i$).
	
\end{lemma}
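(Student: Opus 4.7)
The plan is to express both $\deg(D_i)$ and $n_i$ in closed form as multiples of $\deg(D_0)$ and $n_0$ respectively, and then check the two inequalities by direct comparison.

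First I would unroll the recursion $D_{i+1} = \tfrac{1}{p_i} D_i$ of Lemma \ref{lem:compatible-Kummer}: taking degrees gives $\deg(D_{i+1}) = \deg(D_i)/p_i$, so iterating yields
\[
\deg(D_i) \;=\; \frac{\deg(D_0)}{\prod_{j=0}^{i-1} p_j} \;=\; \deg(D_0) \cdot \frac{N_i}{N},
\]
using \eqref{eq:def:N_i} and $N = \prod_{j=0}^{s-1} p_j$. In parallel, since $\calP_0$ is assumed to be a union of free $\Z/N\Z$-orbits and $\pi_i: \calX_i \to \calX_{i+1}$ is $p_i$-to-one on $\calP_i$, one obtains $n_{i+1} = n_i/p_i$, hence
\[
n_i \;=\; n_0 \cdot \frac{N_i}{N}.
\]
The inequality $\deg(D_i) < n_i$ is then immediate from the hypothesis $\deg(D_0) < n_0$ by multiplying both sides by $N_i/N > 0$.

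For the other implication, substitute the closed form for $\deg(D_i)$ and the genus formula $g_i = \tfrac{(N_i - 1)(m-1)}{2}$ into $2(g_i - 1) < \deg(D_i)$; after clearing $N_i$ this reduces to showing that the hypothesis $\deg(D_0) > 2(g_0 - 1) = (N-1)(m-1) - 2$ implies
\[
\deg(D_0) \;>\; \frac{N}{N_i}\bigl[(N_i - 1)(m-1) - 2\bigr].
\]
It therefore suffices to verify the elementary comparison
\[
(N-1)(m-1) - 2 \;\geq\; \frac{N}{N_i}\bigl[(N_i-1)(m-1) - 2\bigr],
\]
which after multiplying by $N_i > 0$ and rearranging collapses to $(m+1)(N - N_i) \geq 0$. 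This is clear since $N_i \mid N$ and $m \geq 1$. The only step requiring any care is the telescoping of the degree, for which the divisibility hypotheses on $a_{0,\ell}, b_0$ in Lemma \ref{lem:compatible-Kummer} are exactly what ensures that each $D_{i+1} = D_i/p_i$ remains an integral divisor, so the identity $\deg(D_{i+1}) = \deg(D_i)/p_i$ is an equality of integers rather than a rational approximation; no serious obstacle arises.
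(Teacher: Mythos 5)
Your proof is correct and takes essentially the same route as the paper's. The paper notes the one-step recursions $\deg D_{i+1} = \deg D_i / p_i$, $n_{i+1} = n_i / p_i$, $g_{i+1} \le g_i / p_i$ and leaves the induction implicit; you unroll these to closed forms $\deg D_i = \deg D_0 \cdot N_i/N$, $n_i = n_0 \cdot N_i/N$, $g_i = (N_i-1)(m-1)/2$ and verify the inequalities directly, with the genus comparison collapsing to $(m+1)(N - N_i) \ge 0$ — a correct, slightly more explicit version of the same argument.
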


\begin{proof}
	It is enough to notice that for every $i \in \range{s}$, 
	\[ \begin{aligned}
		&\deg D_{i+1} = \frac{\deg{D_i}}{p_i},\qquad  & n_{i+1} = \frac{n_i}{p_i}, \qquad & \text{ and} & g_{i+1} \leq \frac{g_i}{p_i}.&
	\end{aligned}\]
\end{proof}

In other words, if the degree of the first divisor is such that we can estimate the parameters of $C_0$ thanks to Riemann-Roch Theorem, then we handle the parameters of all the sequence of codes.

\begin{proposition}\label{prop:dist_min_Kummer}
	If $\deg(D_0) < n_0$, then for every $i \in \Range{0}{s}$, the code $C_i$ has length $n_i$ and minimum relative distance $\Delta(C_i)=1- \frac{\deg D_0}{n_0}$. In particular, the RS code $C_s$ has length $\frac{n_0}{N}$, dimension $\frac{\deg D_0}{N} + 1$ and relative minimum distance $1- \frac{\deg D_0}{n_0}$.
	
	Moreover, if $2(g_0-1) < \deg(D_0)$, for every $i \in \Range{0}{s}$, the code $C_i$ has dimension $\deg D_i -g_i +1$.
\end{proposition}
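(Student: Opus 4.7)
The plan is to combine Lemma \ref{lem:nice_deg} with the classical Goppa bound, and to observe that the ratio $\deg(D_i)/n_i$ is preserved at each quotient step, so that all code parameters follow by a uniform argument.

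First I would propagate the hypothesis: by Lemma \ref{lem:nice_deg}, $\deg(D_0)<n_0$ implies $\deg(D_i)<n_i$ for every $i\in\Range{0}{s}$, so the evaluation map $L_{\calX_i}(D_i)\to\F^{\calP_i}$ is injective (a non-zero function in $L(D_i)$ has at most $\deg D_i<n_i$ zeroes). This immediately yields $\dim C_i=\dim L_{\calX_i}(D_i)$, that $C_i$ has length exactly $n_i=\size{\calP_i}$, and, via the Goppa bound, the lower bound $\Delta(C_i)\geq 1-\frac{\deg D_i}{n_i}$.

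Next I would check that both $\deg D_i$ and $n_i$ are divided by $p_i$ at each step. For the divisor side, the recursion $D_{i+1}=D_i/p_i$ of Lemma \ref{lem:compatible-Kummer} (and the divisibility assumptions on the coefficients of $D_0$) gives $\deg D_{i+1}=\deg(D_i)/p_i$. For the evaluation set, since $\calG$ acts freely on $\calP_0$ by construction (Section \ref{subsubsec:def-codes}) and $\calP_{i+1}=\pi_i(\calP_i)$, each fibre of $\pi_i$ over $\calP_{i+1}$ has exactly $p_i$ elements, whence $n_{i+1}=n_i/p_i$. Iterating gives $\frac{\deg D_i}{n_i}=\frac{\deg D_0}{n_0}$ for every $i$, which turns the Goppa bound into the uniform designed distance $\Delta(C_i)\geq 1-\frac{\deg D_0}{n_0}$ claimed in the statement.

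The Reed-Solomon specialization at $i=s$ is then immediate: $\calX_s\simeq\PP^1$ has genus $0$, and $N=\prod_{i=0}^{s-1}p_i$ gives $n_s=n_0/N$ and $\deg D_s=\deg(D_0)/N$; since $\dim L_{\PP^1}(D_s)=\deg D_s+1$, we read off length $n_0/N$, dimension $\frac{\deg D_0}{N}+1$, and the same relative distance. For the ``moreover'' part, the stronger hypothesis $2(g_0-1)<\deg(D_0)$ propagates through Lemma \ref{lem:nice_deg} to $\deg(D_i)\geq 2g_i-1$ on every $\calX_i$, so the Riemann-Roch theorem applies and yields $\dim L_{\calX_i}(D_i)=\deg D_i-g_i+1$; combined with the injectivity established above, this gives $\dim C_i=\deg D_i-g_i+1$.

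There is no real obstacle here beyond bookkeeping; the only point that requires care is confirming that the recursion $D_{i+1}=D_i/p_i$ and the free action of $\calG$ genuinely produce the multiplicative behaviour $\deg D_{i+1}=\deg(D_i)/p_i$ and $n_{i+1}=n_i/p_i$, so that the ratio $\deg D_i/n_i$ is a true invariant of the $\XGseq$-sequence rather than only a bound.
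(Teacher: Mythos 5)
Your argument establishes everything in the proposition except the most delicate point: the proposition asserts the \emph{equality} $\Delta(C_i)=1-\frac{\deg D_0}{n_0}$, whereas the Goppa bound you invoke only gives the lower bound $\Delta(C_i)\geq 1-\frac{\deg D_i}{n_i}=1-\frac{\deg D_0}{n_0}$. Nothing in your write-up rules out that some $C_i$ has strictly larger minimum distance, so as written the proof does not deliver the stated conclusion. The length, dimension, ratio-preservation and Riemann--Roch parts are all fine and agree with the paper's argument.

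To close the gap, the paper runs a downward induction from $i=s$ to $i=0$. The base case is the Reed--Solomon code $C_s$: since $0\leq \deg D_s<n_s$, it is MDS, so its relative minimum distance is \emph{exactly} $1-\frac{\deg D_s}{n_s}$. For the inductive step, one uses that $D_{i+1}=E_{i,0}$, so every $f\in L_{\calX_{i+1}}(D_{i+1})$ pulls back to $f\circ\pi_i\in L_{\calX_i}(D_i)$, i.e.\ to a codeword of $C_i$; because each point of $\calP_{i+1}$ has exactly $p_i$ preimages in $\calP_i$, the Hamming weight of $f\circ\pi_i$ on $\calP_i$ is $p_i$ times that of $f$ on $\calP_{i+1}$, and since $n_i=p_i n_{i+1}$ the \emph{relative} weight is unchanged. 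Taking $f$ of minimal weight gives $\Delta(C_i)\leq\Delta(C_{i+1})$, which combined with your Goppa lower bound yields the equality at level $i$. You should add this upper-bound argument; without it the statement proved is strictly weaker than the one claimed.
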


\begin{proof}
	The length of $C_i$ is $n_i$ by construction and its dimension is given by the Riemann-Roch theorem. So let us prove the statement concerning the relative minimum distance.
	
	First notice that $n_i=p_i n_{i+1}$  and $\deg(D_i)=p_i \deg(D_{i+1})$ so $1- \frac{\deg D_i}{n_i}=1- \frac{\deg D_0}{n_0}$. 
	For $i=s$, the code $C_s$ is a Reed-Solomon code of degree $0 \leq \deg(D_s) < n_s$ by Lemma \ref{lem:nice_deg} and has the expected relative minimum distance.
	
	Now assume that $\Delta(C_{i+1})$ equals $1- \frac{\deg D_{0}}{n_{0}}$ and let us prove that so does $\Delta(C_i)$. 
	On the one hand, the divisor $D_{i+1}$ corresponds to $E_{i,0}$ then for every $f \in C_{i+1}$, $f \circ \pi_i \in C_i$. In addition, the weight of $f \circ \pi_i$ in $C_i$ is $p_i$ times the weight of $f$ in $C_{i+1}$. Since $n_i=p_i n_{i+1}$, we have $\Delta(C_i) \leq \Delta(C_{i+1})$. On the other hand, as $\deg(C_i) < n_i$, we have $\Delta(C_i) \geq 1- \frac{\deg{D_i}}{n_i}$, which concludes the proof. 
\end{proof}

\section{Foldable AG codes along the Hermitian tower}\label{subsec:HermitianTower}

\subsection{Preliminaries}

\subparagraph*{Sequence of curves.}
We consider the sequence of function fields $\calF=(F_i)_{i\geq 0}$ over $\F_{q^2}$ that is defined recursively by $F_0=\F_{q^2}(x_0)$ and $F_i = F_{i-1}(x_i)$ with equations
\begin{equation}\label{eq:herm_tower}
	x_i^q+x_{i} = x_{i-1}^{q+1} \text{ for } i\geq 1.
\end{equation}
Note that this tower of function field $\mathcal{F}$ corresponds to a tower of curves
$(\calX_i)_{i\geq 0}$ such that $F_i=\F_{q^2}(\calX_i)$. One can view the curve $\calX_i$ embedded in an $i$-dimensional affine space with variables $(x_1,\dots,x_i)$ defined by the equations \eqref{eq:herm_tower}.

For $i=1$, the field $F_1$ is the function field of the Hermitian curve $\mathcal{H}:= \calX_1$ over $\F_{q^2}$.

Let $g_i:=g(\calX_i)$ denote the genus of the curve $\calX_i$. An explicit formula was given by Pellikaan, Shen and Wee \cite[Proposition 4]{PSW91}: we have $g_0=0$ and for $i\geq 1$, 

\begin{equation}\label{eq:genus_herm}
	g_i = \frac{1}{2}\left[(q^2-1)\left((q+1)^i-q^i\right) +1 -q^i\right]
	=\dfrac{1}{2} \cd \left(\sum\limits_{k=1}^i q^{i+1} \cd \left(1+\frac{1}{q}\right)^{k-1} +1 -(1+q)^i\right).
\end{equation}

For every $i \geq 0$, the number of $\F_{q^2}$--rational places in $F_i$ is given by
\[\size{\calX_i(\F_{q^2})} = q^{i+2}+1.\]

We have an infinite sequence of curves $(\calX_i)_{i\ge 0}$ as follows.

\begin{center}
	\begin{tikzcd}
		\dots  \arrow[r, two heads, "\pi_{i+1}"] & \mathcal{X}_i \arrow[loop, distance=14, out=130, in=100,"\Gamma_i"] \arrow[r, two heads, "\pi_{i}"]& \mathcal{X}_{i-1}\arrow[loop, distance=14, out=130, in=100,"\Gamma_{i-1}"] \arrow[r, two heads, "\pi_{i-1}"] & \dots \arrow[r, two heads,"\pi_{1}"] & \mathcal{X}_{0} \simeq \PP^1.
	\end{tikzcd}
\end{center} 

\begin{remark}
	In the context of recursive towers, it is classical to \textbf{index the curves the other way} round compared to the notations used in Section \ref{sec:foldable}. In the context of the Hermitian tower, the curve $\calX_{i-1}$ is the quotient curve of the curve $\calX_i$ under the action of the group $\Gamma_i$. To design foldable codes, we will fix a level of the tower, say $\ii$, and consider codes over the curve $\calX_{\ii}$, which we will fold using the sequence of curves with \emph{decreasing} indices $(\calX_i)_{\ii \geq i \geq 0}$.
\end{remark}

This tower is a tower of Artin-Schreier extensions, which have been extensively studied (see for example \cite{S08}).  Let us recall some classical results that will be useful to design foldable AG codes along this tower.

\subparagraph*{Automorphisms and projection maps.}
By definition of the Hermitian tower \cite[Proposition 3.7.10]{S08}, the Galois group of the extension $F_i/F_{i-1}$ is the group of automorphisms defined by $(x_1, \dots, x_{i-1}, x_i) \mapsto (x_1, \dots, x_{i-1}, x_i + \alpha)$ where $\alpha$ runs in 
\[\mathcal{S}  = \set{\alpha \in \F_{q^2} \mid \alpha^q + \alpha = 0}.\]
Note that if we fixed a non--zero element $\alpha \in \mathcal{S}$, then for every $\beta \in \Fq$, $\alpha \beta$ lies in $\mathcal{S}$. So $\mathcal{S}$ is an additive group which is isomorphic to $\Fq$.

The quotient map $\pi_i : \calX_i \rightarrow \calX_{i-1}$ consists in the projection onto the first $i$ coordinates. For every $i\geq 0$, we set $\Pi_i$ to be the composition of the first $i$ quotient maps, \textit{i.e.}
\[ \Pi_i \mydef \pi_i \circ \pi_{i-1}\circ \dots \circ \pi_0.\]

\subparagraph*{Behaviour of the point of infinity.}

In what follows, let us denote by $P_{\infty}^{(0)}$ the unique pole of the function $x_0$ in $F_0$, which corresponds to the point at infinity on the projective line $\calX_0=\PP^1$.

\begin{lemma}[{\cite[Proposition 3.7.8]{S08}}]
	Let $i \geq 1$. The place $P_{\infty}^{(0)}$ is totally ramified in $F_i$, which means that the preimage $\Pi_i^{-1}\left(\left\{P_{\infty}^{(0)}\right\}\right)$ consists in a unique place, denoted by $P_{\infty}^{(i)} \in \calX_i$.
	Moreover,  $P_{\infty}^{(0)}$ is the unique place that is ramified in the tower $\calF$.
\end{lemma}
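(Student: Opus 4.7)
The plan is to proceed by induction on $i$. Each step $F_i/F_{i-1}$ is a \emph{generalized Artin-Schreier extension} of degree $q$ defined by $\wp(x_i)=x_{i-1}^{q+1}$, where $\wp(T):=T^q+T$ is the additive $\F_{q^2}$-polynomial with kernel $\mathcal{S}\cong\F_q$. For such extensions, the standard ramification criterion (see e.g.\ Stichtenoth, Ch.~III) says that a place $P$ of $F_{i-1}$ is totally ramified in $F_i/F_{i-1}$ whenever $v_P(x_{i-1}^{q+1})<0$ and $\gcd(v_P(x_{i-1}^{q+1}),p)=1$, and is unramified whenever $v_P(x_{i-1}^{q+1})\ge 0$. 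This is the only external fact I will need.

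For the base case $i=1$, the pole order of $x_0^{q+1}$ at $P_\infty^{(0)}$ is $q+1$, which is coprime to $p$, so $P_\infty^{(0)}$ is totally ramified in $F_1/F_0$ with a unique preimage $P_\infty^{(1)}$ of ramification index $q$. At every other place of $F_0$ the function $x_0^{q+1}$ has no pole, so the criterion delivers unramifiedness. For the inductive step, assume the statement holds up through level $i-1$. The key auxiliary computation is the pole order $w_j:=-v_{P_\infty^{(j)}}(x_j)$, which I determine recursively: applying $v_{P_\infty^{(j)}}$ to both sides of $x_j^q+x_j=x_{j-1}^{q+1}$ and using that $x_j$ has a pole at $P_\infty^{(j)}$ (so $v_{P_\infty^{(j)}}(\wp(x_j))=q\,v_{P_\infty^{(j)}}(x_j)$) together with total ramification $v_{P_\infty^{(j)}}(x_{j-1})=q\,v_{P_\infty^{(j-1)}}(x_{j-1})$, one obtains $w_j=(q+1)\,w_{j-1}$, hence $w_{i-1}=(q+1)^{i-1}$ from $w_0=1$.

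Thus $v_{P_\infty^{(i-1)}}(x_{i-1}^{q+1})=-(q+1)^i$, which is coprime to $q$ (and therefore to $p$) since $\gcd(q+1,q)=1$. The criterion yields total ramification of $P_\infty^{(i-1)}$ in $F_i/F_{i-1}$ with a unique place $P_\infty^{(i)}$ above it; combined with the induction hypothesis, $P_\infty^{(i)}$ is the unique place of $F_i$ over $P_\infty^{(0)}$, with cumulated ramification index $q^i=[F_i:F_0]$. For any other place $Q$ of $F_{i-1}$, the induction hypothesis gives that $Q$ does not lie above $P_\infty^{(0)}$, so $v_Q(x_0)\ge 0$; propagating the integral relations $x_j^q+x_j=x_{j-1}^{q+1}$ up the tower shows $v_Q(x_{i-1})\ge 0$, hence the unramifiedness branch of the criterion applies and no new ramification appears.

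The main obstacle is justifying the generalized Artin-Schreier criterion for $\wp(T)=T^q+T$ when $q=p^e$ with $e>1$, since this is not a single degree-$p$ Artin-Schreier polynomial. One handles this either by decomposing $F_i/F_{i-1}$ as a compositum of $e$ degree-$p$ Artin-Schreier extensions via an $\F_p$-basis of $\mathcal{S}$ and checking that the valuation-$(q+1)^i$ pole survives every reduction, or by invoking the treatment of elementary abelian $p$-extensions found in the standard references. Once this criterion is granted, the induction reduces to the pole-order recursion $w_j=(q+1)w_{j-1}$ and the coprimality $\gcd(q+1,q)=1$, both of which are immediate.
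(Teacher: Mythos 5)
Your argument is correct, but note that the paper does not prove this lemma at all: it is imported verbatim as Proposition 3.7.8 of Stichtenoth's book, so what you have written is a reconstruction of the textbook proof rather than an alternative to anything in the paper. Your induction is the standard one and all the computations check out: the pole-order recursion $w_j=(q+1)w_{j-1}$ follows from $-qw_j = (q+1)\cdot q\cdot(-w_{j-1})$ exactly as you say, and is consistent with Lemma \ref{lem:div-princ} of the paper, which records $v_{P_\infty^{(i)}}(x_i)=-(q+1)^i$; the coprimality $\gcd((q+1)^i,q)=1$ then gives total ramification at each step, and the absence of other poles of $x_{i-1}$ (propagated up the tower from $v_Q(x_0)\ge 0$) gives unramifiedness elsewhere. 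The one ingredient you rightly flag --- the ramification criterion for the degree-$q$ extension $y^q+y=u$, which is elementary abelian rather than a single degree-$p$ Artin--Schreier extension --- is not a gap in substance: it is precisely Proposition 3.7.10 of the same reference (which the paper itself cites for the Galois group of $F_i/F_{i-1}$), and your fallback of splitting the extension into a compositum of $e$ degree-$p$ Artin--Schreier extensions along an $\F_p$-basis of $\mathcal{S}$ also works, since the pole order $(q+1)^i$ remains prime to $p$ in each layer. In short, your proof is sound and self-contained modulo a standard citation; the paper simply chose to black-box the whole statement.
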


The peculiar behaviour of the points $P_\infty^{(i)}$ in the tower encourages us to define a sequence of codes associated with divisors $D_i \in \Div(\calX_i)$ of the form 
\[D_i\mydef d_i P_\infty^{(i)} \text{ for }i \geq 1.\]

Let us focus on the principal divisors $\div_{F_i}(x_j)$ ($0\leq j \leq i$) and their valuation at the point $P_\infty^{(i)}$. 
Their properties follow from the study of the basic function field $F=\mathbb{F}_{q^2}(x,y)$ which is nothing but the Hermitian function field. It is a special case of Artin-Schreier extension of $\mathbb{F}_{q^2}(x)$ and is well known that we have 
\[\div_{F}(y) = P^{(0)}-P_{\infty}^0.\]
\begin{remark}
	The role of the variables $x$ and $y$ is reversed compared to the Kummer model of the Hermitian curve, studied in the previous section. 
\end{remark}
Since each extension $F_i/F_{i-1}$ corresponds to the same Artin-Schreier extension, and that $P_{\infty}^{(0)}$ is fully ramified in $F_i/F_0$, we can deduce the form of the divisor $\div_{F_i}(x_i)$, given in the next lemma. The valuation of the function $x_j \in F_{i-1}$ at $P_{\infty}^{(j)}$ follows from the extension degrees $[F_{i-1}:F_j]=q^{i-1-j}$ for $j<i$.

\begin{lemma} \label{lem:div-princ} The following two assertions hold.
	\begin{enumerate}
		\item For $i \geq 1$, denote by $P^{(i)}$ the unique common zero of the functions $x_0,...,x_i$. Then we have
		\[\div_{F_i}(x_i) = (q+1)^i \left(P^{(i)} - P_{\infty}^{(i)}\right);\]
		\item Let $i \geq 1$. Then for $0 \leq j < i$, the valuation of the function $x_j \in F_{i-1}$ is given by
		\[v_{P_{\infty}^{(i-1)}}(x_j) = - q^{i-1-j}(q+1)^j.\]
	\end{enumerate}
\end{lemma}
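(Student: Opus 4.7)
The plan is to prove both parts simultaneously by induction on $i \geq 1$. The base case $i=1$ is the Hermitian curve, where $x_1$ satisfies $x_1^q + x_1 = x_0^{q+1}$ over $F_0 = \F_{q^2}(x_0)$. The place $P_\infty^{(0)}$ is totally ramified of index $q$, so $v_{P_\infty^{(1)}}(x_0) = -q$ and $v_{P_\infty^{(1)}}(x_0^{q+1}) = -q(q+1)$; the ultrametric inequality applied to $x_1^q + x_1$ at a pole of $x_1$ forces $q \cdot v_{P_\infty^{(1)}}(x_1) = -q(q+1)$, hence $v_{P_\infty^{(1)}}(x_1) = -(q+1)$. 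Symmetrically, at the unramified common zero $P^{(1)}$ lying above $x_0 = 0$, we get $v_{P^{(1)}}(x_0) = 1$ and then $v_{P^{(1)}}(x_1) = q+1$. A degree argument (the zero and pole divisors of a function share the same total degree) finishes the base case: $\div_{F_1}(x_1) = (q+1)(P^{(1)} - P_\infty^{(1)})$.

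For the induction step, assuming the lemma at level $i-1$, the key tool is the Artin--Schreier relation $x_i^q + x_i = x_{i-1}^{q+1}$ combined with the ultrametric inequality. At $P_\infty^{(i)}$, which lies above $P_\infty^{(i-1)}$ with ramification index $q$ by the previous lemma, one computes $v_{P_\infty^{(i)}}(x_{i-1}^{q+1}) = q(q+1) \cdot v_{P_\infty^{(i-1)}}(x_{i-1}) = -q(q+1)^i$ using the inductive form of part (1). Since $v(x_i) < 0$ at a pole implies $v(x_i^q + x_i) = q \cdot v(x_i)$, this yields $v_{P_\infty^{(i)}}(x_i) = -(q+1)^i$. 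At $P^{(i)}$, which is unramified over $P^{(i-1)}$ because the previous lemma asserts that $P_\infty^{(0)}$ is the only ramified place in the tower, we obtain $v_{P^{(i)}}(x_{i-1}^{q+1}) = (q+1)^i$; and because $v(x_i) > 0$ at a common zero, the ultrametric equality collapses to $v_{P^{(i)}}(x_i) = (q+1)^i$.

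To conclude part (1), any pole of $x_i$ must lie above a pole of $x_{i-1}^{q+1}$ in $F_{i-1}$; by induction the only such place is $P_\infty^{(i-1)}$, whose unique lift to $F_i$ is $P_\infty^{(i)}$. Hence the pole divisor of $x_i$ equals $(q+1)^i P_\infty^{(i)}$, of total degree $(q+1)^i$. Since the zero divisor must share this total degree and contains $P^{(i)}$ with multiplicity $(q+1)^i$, it must in fact equal $(q+1)^i P^{(i)}$. Part (2) then follows immediately by writing $v_{P_\infty^{(i-1)}}(x_j) = e_j \cdot v_{P_\infty^{(j)}}(x_j)$, where $e_j = q^{i-1-j}$ is the ramification index of $P_\infty^{(i-1)}$ over $P_\infty^{(j)}$ (by composition of totally ramified Artin--Schreier steps), combined with the already established pole order $v_{P_\infty^{(j)}}(x_j) = -(q+1)^j$ from part (1) at level $j$. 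The main subtlety is the ramification bookkeeping across the tower; once the previous lemma is used to pin down total ramification at infinity and triviality of ramification elsewhere, the ultrametric inequality applied to the Artin--Schreier equation yields the rest with no further work.
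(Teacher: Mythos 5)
Your proof is correct and follows the same route the paper sketches in the paragraph preceding the lemma: reduce to the basic Artin--Schreier/Hermitian step, use total ramification of $P_\infty^{(0)}$ and the ultrametric inequality on $x_i^q+x_i=x_{i-1}^{q+1}$ to get the pole and zero orders, and obtain part (2) from the composite ramification indices $[F_{i-1}:F_j]=q^{i-1-j}$. Your induction merely makes explicit the bookkeeping the paper leaves implicit, so there is nothing to add.
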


\subparagraph*{Basis of the Riemann-Roch spaces associated to the divisor $d_i P_\infty^{(i)}$.}

For a given $i \geq 0$, the $P^{(i)}_{\infty}$ is the unique pole of the functions $x_0,...,x_i$, which gives an explicit basis of the Riemann-Roch space associated to a multiple of $P^{(i)}_{\infty}$.

\begin{lemma}\label{lem:basis_herm}
	For all $i \leq 1$ and $m \leq 1$, the Riemann-Roch space $L_{\calX_i}(mP^{(i)}_{\infty})$ is formed by linear combinations of functions in the following set:
	\[\left\{x_0^{a_0} \cdots x_i^{a_i} \ | \ 0 \leq a_0 \ , \ 0 \leq a_j \leq q-1 \text{ and } \sum\limits_{j=0}^i a_jq^{i-j}(q+1)^j \leq m \right\}.\]
\end{lemma}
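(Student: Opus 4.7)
The plan is to prove the lemma in three stages: (i) each listed monomial lies in $L(mP_\infty^{(i)})$; (ii) the listed monomials are linearly independent; (iii) they span $L(mP_\infty^{(i)})$. Throughout I rely on the facts that $P_\infty^{(0)}$ is totally ramified in the tower (so lifts to the single place $P_\infty^{(i)}$ at each level) and that $P_\infty^{(0)}$ is the only ramified place, both already recalled in this subsection.

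For (i), I use Lemma~\ref{lem:div-princ} together with the ramification of the chain $\calX_i\to\calX_j$. Lemma~\ref{lem:div-princ}(1) shows that on $\calX_j$ the function $x_j$ has $P_\infty^{(j)}$ as its unique pole, and pulling back along the quotient map this pole lifts to $P_\infty^{(i)}$ only. Applying Lemma~\ref{lem:div-princ}(2) at an appropriate level then yields $v_{P_\infty^{(i)}}(x_j)=-q^{i-j}(q+1)^j$ for all $0\leq j\leq i$. Multiplicativity of valuations gives
\[v_{P_\infty^{(i)}}\!\left(x_0^{a_0}\cdots x_i^{a_i}\right) \;=\; -\sum_{j=0}^{i} a_j\, q^{i-j}(q+1)^j,\]
so the listed monomials have their unique pole at $P_\infty^{(i)}$ and belong to $L(mP_\infty^{(i)})$ precisely under the stated bound. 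For (ii), I note that the representation $n=\sum_{j=0}^i a_j q^{i-j}(q+1)^j$ with $a_j\in\{0,\dots,q-1\}$ for $j\geq 1$ and $a_0\in\N$ is unique: reducing modulo $q$ gives $n\equiv a_i\pmod q$, determining $a_i$, and one then subtracts and divides by $q$ to recurse. Hence the monomials have pairwise distinct pole orders at $P_\infty^{(i)}$, so they are linearly independent.

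Step (iii) I handle by induction on $i$. The base case $i=0$ is the standard basis $\{1,x_0,\dots,x_0^m\}$ of polynomials of degree $\leq m$. For $i\geq 1$, the Artin--Schreier extension $F_i/F_{i-1}$ is of degree $q$ with basis $\{1,x_i,\dots,x_i^{q-1}\}$, so any $f\in L(mP_\infty^{(i)})$ decomposes uniquely as $f=\sum_{a_i=0}^{q-1} g_{a_i}\, x_i^{a_i}$ with $g_{a_i}\in F_{i-1}$. Since $P_\infty^{(i-1)}$ is totally ramified with index $q$ in $F_i/F_{i-1}$, we have $v_{P_\infty^{(i)}}(g_{a_i}x_i^{a_i})=q\,v_{P_\infty^{(i-1)}}(g_{a_i})-a_i(q+1)^i$, and these valuations lie in pairwise distinct residue classes modulo $q$ as $a_i$ varies. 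Consequently no cancellation can occur, forcing each summand individually to satisfy $v_{P_\infty^{(i)}}(g_{a_i}x_i^{a_i})\geq -m$, which translates to $g_{a_i}\in L\!\left(\left\lfloor\frac{m-a_i(q+1)^i}{q}\right\rfloor P_\infty^{(i-1)}\right)$ once I also know that $g_{a_i}$ has no poles outside $P_\infty^{(i-1)}$. Applying the inductive hypothesis to each $g_{a_i}$ then exhibits $f$ as a linear combination of the listed monomials.

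The main obstacle is the side condition in the last step: checking that each $g_{a_i}\in F_{i-1}$ inherits from $f$ the property of being regular at every finite place of $\calX_{i-1}$. Here I would argue that finite places of $\calX_{i-1}$ are unramified in $F_i$ (only $P_\infty^{(0)}$ ramifies), so by the same distinct-residues-mod-$q$ argument applied at any place of $\calX_i$ above a fixed finite place of $\calX_{i-1}$, the valuations of the $q$ summands $g_{a_i}x_i^{a_i}$ are again forced to be pairwise distinct unless a $g_{a_i}$ vanishes; regularity of $x_i$ at such places then forces regularity of each $g_{a_i}$ at the place below. The bookkeeping in this verification is the only nontrivial piece; once it is in place, the dimension count automatically agrees with Riemann--Roch since the listed pole orders exhaust the Weierstrass semigroup at $P_\infty^{(i)}$.
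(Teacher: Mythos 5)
Your steps (i) and (ii) are correct, and the inductive strategy in (iii) is the right skeleton (the paper itself states this lemma without proof, as a consequence of the pole structure of $x_0,\dots,x_i$ recorded in Lemma~\ref{lem:div-princ}). The genuine gap sits exactly where you flag it: your justification that each coefficient $g_{a_i}\in F_{i-1}$ is regular at the finite places of $\calX_{i-1}$ does not work. The distinct-residues-mod-$q$ trick at $P_\infty^{(i)}$ succeeds only because that place is \emph{totally ramified}: there $v_{P_\infty^{(i)}}(g_{a_i})\equiv 0 \pmod q$ while $v_{P_\infty^{(i)}}(x_i^{a_i})\equiv -a_i \pmod q$, so the $q$ summands have pairwise distinct valuations. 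At an unramified place $Q$ of $\calX_{i-1}$ with $Q'$ above it, one instead has $v_{Q'}(g_{a_i}x_i^{a_i})=v_Q(g_{a_i})+a_i\,v_{Q'}(x_i)$ with, generically, $v_{Q'}(x_i)=0$; nothing separates these valuations in any residue class, so cancellation among summands with poles is not excluded by your argument, and the claim ``pairwise distinct unless some $g_{a_i}$ vanishes'' is simply false there.

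The assertion you need is nonetheless true, but it requires a different input: $\{1,x_i,\dots,x_i^{q-1}\}$ is a \emph{local integral basis} of $F_i/F_{i-1}$ at every unramified place. This follows from the fact that the minimal polynomial $T^q+T-x_{i-1}^{q+1}$ of $x_i$ over $F_{i-1}$ has derivative $1$, so the different is trivial away from $P_\infty^{(0)}$ (see \cite[Theorem~III.5.10]{S08}); equivalently, by a Galois argument, the matrix $\bigl[(x_i+\alpha)^a\bigr]_{\alpha\in\mathcal{S},\,0\le a\le q-1}$ is Vandermonde in the pairwise distinct elements $x_i+\alpha$, with determinant a nonzero constant, so each $g_{a_i}$ is a polynomial in $x_i$ applied to the conjugates of $f$ under $x_i\mapsto x_i+\alpha$, all of which are regular above $Q$. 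With that substitution your induction closes. An alternative repair bypassing local analysis altogether is a dimension count: your monomials are independent with pole orders filling the intersection of $[0,m]$ with the numerical semigroup generated by the $q^{i-j}(q+1)^j$, whose number of gaps equals $g_i$ by \eqref{eq:genus_herm}; Riemann--Roch then forces them to span $L_{\calX_i}(mP_\infty^{(i)})$ for $m\ge 2g_i-1$, and smaller $m$ follows because the basis elements have pairwise distinct pole orders.
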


\subsection{Construction of foldable AG codes}

Fix a level $\ii$ in the Hermitian tower. We want to construct foldable codes on the curve $\calX_{\ii}$ of the form 
\[C(\calX_{\ii},\calP_{\ii},D_{\ii}) \text{ where } \calP_{\ii} \subseteq \calX_{\ii}(\F_{q^2}) \setminus \{P_\infty^{(\ii)}\} \text{ and } D_{\ii} = d_{\ii}P^{(\ii)}_{\infty}.\]

We thus define a sequence of codes $(C_i)$ as follow:
\[C_i \mydef C(\calX_{\ii-i},\calP_{\ii-i},D_{\ii-i}) \text{ where } \calP_{i-1}=\pi_i(\calP_i)\text{ and } D_i = d_iP^{(i)}_{\infty} \]

In order to make sure $C_0=C(\calX_{\ii},\calP_{\ii},D_{\ii})$ is foldable, we need to describe the Riemann-Roch spaces on a certain step from Riemann-Roch spaces on \emph{lower} curves. A priori Kani's theorem does not apply, so we will have to find a decomposition by hand. We deduce such a decomposition  from the explicit basis of the Riemann-Roch space given in Lemma \ref{lem:basis_herm}.

\begin{proposition} 
	Let $i \geq 0$. Set $D_i = d_iP^{(i)}_{\infty}$ for some integer $d_i$. Then
	\[L_{\calX_i}(D_i) = \bigoplus\limits_{j=0}^{q-1} x_i^j \pi_i^*\left(L_{\calX_{i-1}}(E_{i,j})\right)\]
	with
	\[E_{i,j} \mydef \left\lfloor \frac{1}{q} \pi_{i*}\left(D_i-j\cdot \div_{F_i}(x_i)\right)\right\rfloor \text{ for } 0 \leq j \leq q-1.\]
	In other words, the function $x_i \in F_i$ partitions the divisor $D_i$ in the sense of Definition \ref{def:mu-decomp}.
\end{proposition}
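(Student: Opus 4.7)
The plan is to deduce the decomposition directly from the explicit monomial basis of Lemma \ref{lem:basis_herm}, bypassing Kani's theorem (which fails here since $q$ is divisible by the characteristic). By Lemma \ref{lem:basis_herm}, a basis of $L_{\calX_i}(d_i P_\infty^{(i)})$ consists of the monomials $x_0^{a_0} x_1^{a_1} \cdots x_i^{a_i}$ with $a_0 \geq 0$, $0 \leq a_k \leq q-1$ for $1 \leq k \leq i$, and $\sum_{k=0}^{i} a_k q^{i-k}(q+1)^k \leq d_i$. The first step is to group these monomials according to $j \mydef a_i \in \range{q}$ and factor $x_i^j$ out of each element, which is natural since $j$ ranges exactly over $\set{0, \ldots, q-1}$.

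Second, for fixed $j$, I rewrite the pole-order constraint by subtracting the contribution $j(q+1)^i$ of $x_i^j$ and factoring out $q$: one obtains $\sum_{k=0}^{i-1} a_k q^{i-1-k}(q+1)^k \leq \frac{d_i - j(q+1)^i}{q}$, which, since the left-hand side is an integer, is equivalent to $\sum_{k=0}^{i-1} a_k q^{i-1-k}(q+1)^k \leq \lfloor \frac{d_i - j(q+1)^i}{q} \rfloor$. Applying Lemma \ref{lem:basis_herm} in reverse at level $i-1$, the residual monomials form a basis of $L_{\calX_{i-1}}(e_{i,j} P_\infty^{(i-1)})$ with $e_{i,j} \mydef \lfloor (d_i - j(q+1)^i)/q \rfloor$. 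Since $\pi_i^*$ is simply the inclusion $F_{i-1} \hookrightarrow F_i$, multiplying by $x_i^j$ recovers the $j$-th block of the original basis.

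Third, I identify $e_{i,j} P_\infty^{(i-1)}$ with the divisor $E_{i,j}$ of the statement. Using Lemma \ref{lem:div-princ}, $\div_{F_i}(x_i) = (q+1)^i (P^{(i)} - P_\infty^{(i)})$, and $\pi_{i*}$ acts by sending $P_\infty^{(i)} \mapsto P_\infty^{(i-1)}$ and $P^{(i)} \mapsto P^{(i-1)}$ (both being $\F_{q^2}$-rational). A direct computation of $\lfloor \frac{1}{q} \pi_{i*}(D_i - j \div_{F_i}(x_i)) \rfloor$ then shows that its coefficient at $P_\infty^{(i-1)}$ is $e_{i,j}$, as required. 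The secondary coefficient at $P^{(i-1)}$ must be shown not to enlarge the Riemann--Roch space that actually contributes to the decomposition: the other preimages $Q_\alpha$ ($\alpha \in \mathcal{S} \setminus \set{0}$) of $P^{(i-1)}$ under $\pi_i$ satisfy $x_i(Q_\alpha) = \alpha \neq 0$, so any pole of $g_j \in F_{i-1}$ at $P^{(i-1)}$ would propagate to a pole of $x_i^j \pi_i^* g_j$ at each $Q_\alpha$, which is incompatible with $x_i^j \pi_i^* g_j \in L_{\calX_i}(D_i)$.

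Finally, directness of the sum follows because $\set{1, x_i, \ldots, x_i^{q-1}}$ is a free $F_{i-1}$-basis of $F_i$, the Artin--Schreier extension of degree $q$ defined by $x_i^q + x_i = x_{i-1}^{q+1}$. The main technical obstacle is the careful pushforward bookkeeping at $P^{(i-1)}$ combined with the floor and sign conventions, which must be reconciled with the basis-matching computation at $P_\infty^{(i-1)}$; once this is done, the equality between the explicit monomial basis of $L_{\calX_i}(D_i)$ and the disjoint union of the $x_i^j$-scaled bases of the $L_{\calX_{i-1}}(E_{i,j})$ yields the stated decomposition.
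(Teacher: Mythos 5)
Your proof follows essentially the same route as the paper's: both deduce the decomposition from the explicit monomial basis of Lemma \ref{lem:basis_herm} by grouping the monomials according to the exponent $j$ of $x_i$ and matching the residual pole-order constraint with the basis of $L_{\calX_{i-1}}(E_{i,j})$ at level $i-1$. You are in fact more explicit than the paper on two points it leaves implicit --- the directness of the sum via the freeness of $1, x_i, \dots, x_i^{q-1}$ over $F_{i-1}$, and the harmlessness of the pushforward's contribution at $P^{(i-1)}$ --- though note that the coefficient computation at $P_\infty^{(i-1)}$ yields $\lfloor (d_i - j(q+1)^i)/q \rfloor$ only under the sign convention of Definition \ref{def:mu-decomp} (i.e.\ $D_i + j\cdot\div_{F_i}(x_i)$), the minus sign in the statement being an inconsistency of the paper rather than of your argument.
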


\begin{proof}
	By Lemma \ref{lem:basis_herm}, the space $L_{\calX_i}(D_i)$ is formed by linear combinations of $x_0^{a_0} \cdots x_i^{a_i}$ with non-negative exponents  such that $0 \leq a_j \leq q-1$ for $j\neq 1$ and \[\sum\limits_{j=0}^i a_jq^{i-j}(q+1)^j \leq m.\] As $a_j$ runs in $\{0,\dots,q-1\}$, the proof is concluded by noticing that the function $x_0^{a_0} \cdots x_{i-1}^{a_{i-1}} \in F_{i}$ lies in $L(D_i-j\cdot\div_{F_i}(x_i))$ which means that $x_0^{a_0} \cdots x_{i-1}^{a_{i-1}} \in F_{i-1}$ belongs to $L_{\calX_{i-1}}(E_{i,j})$.
	
\end{proof}

To make $D_{i-1}$ compatible with $(D_i,x_i)$ (Definition \ref{def-div_compatible}), we need the existence of $q$ balancing functions  $\nu_{i-1,j} \in F_{i-1}$ (for every  $0 \leq j \leq q-1$) such that 
\begin{equation}\label{eq:balancing_fct_herm}
	D_{i-1}-E_{i,j} = (\nu_{i-1,j})_{\infty}.
\end{equation}
In our setup, we have
\[E_{i,j} = \left\lfloor \dfrac{d_i-j(q+1)^i}{q}\right\rfloor P^{(i-1)}_{\infty}.\]
Thus, we need to ``balance'' the divisors
\[D_{i-1}-E_{i,j} = \left(d_{i-1}- \left\lfloor \dfrac{d_i-j(q+1)^i}{q}\right\rfloor\right) P^{(i-1)}_{\infty}.\]

We are led to study the Weierstrass semigroup of $P^{(i-1)}_{\infty}$, denoted by $\calH\left(P^{(i-1)}_{\infty}\right)$. The generators of this semigroup can be found using Lemma \ref{lem:div-princ}. In fact, $P_{\infty}^{(i-1)}$ is the unique common pole of the functions $x_0,...,x_{i-1} \in 
F_{i-1}$ and we know their exact valuation. Thus we have

\[\calH\left(P^{(i-1)}_{\infty}\right) = \left \langle q^{i-1-k}(q+1)^k \ , \ 0\leq k \leq i-1 \right\rangle_{\N}.\]

\begin{remark}\label{rk:big-div-HT}
	In the spirit of the \textsf{FRI} protocol, one could be tempted to choose $D_{i-1}$ as $E_{i,0}$. Such a choice would be valid in the sense of Definition \ref{def-div_compatible} if and only for every $0 \leq j \leq q-1$
	\[ \left\lfloor \dfrac{d_i}{q} \right\rfloor - \left\lfloor \dfrac{d_i-j(q+1)^i}{q}\right\rfloor \in \calH\left(P^{(i-1)}_{\infty}\right).\]
	Unfortunately, when $i$ increases, this condition is never satisfied.
\end{remark}

To ensure that 
$\deg(D_{i-1}-E_{i,j})$ is never a Weierstrass gap for $P_{\infty}^{(i-1)}$, we increase the degree $d_{i-1}$ of $D_{i-1}$. 

\begin{theorem}\label{th:div-comp-HT}
	Let $i \geq 1$. Set $D_i = d_i P^{(i)}_{\infty}$ for some integer $d_i$ and $D_{i-1} = d_{i-1} P^{(i-1)}_{\infty}$ where
	\begin{equation}\label{eq:deg_herm}
		d_{i-1} \mydef \floor{ \dfrac{d_i}{q}} + 2g_{i-1}.
	\end{equation}
	Then $D_{i-1}$ is compatible with $(D_i,x_i)$  (Definition \ref{def-div_compatible}).
\end{theorem}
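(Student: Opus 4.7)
The plan is to verify the two bullets of Definition \ref{def-div_compatible} directly, by computing the degree of $D_{i-1}-E_{i,j}$ and then invoking the Weierstrass gap theorem at $P^{(i-1)}_{\infty}$.

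First I would unpack $E_{i,j}$. Since $D_i = d_i P^{(i)}_{\infty}$ and $\div_{F_i}(x_i)=(q+1)^i(P^{(i)}-P^{(i)}_{\infty})$ by Lemma~\ref{lem:div-princ}, and since $\pi_i(P^{(i)}) = P^{(i-1)}$ while $\pi_i(P^{(i)}_{\infty}) = P^{(i-1)}_{\infty}$ with $P^{(i-1)}\neq P^{(i-1)}_{\infty}$, the push-forward $\pi_{i*}(D_i - j\,\div_{F_i}(x_i))$ equals $j(q+1)^i P^{(i-1)} + (d_i - j(q+1)^i) P^{(i-1)}_{\infty}$. Taking the floor divided by $q$ kills the first term (since its coefficient is $\ge 0$ and $<q$ only when $j(q+1)^i<q$; in general we still get $0$ on $P^{(i-1)}$ when we take the floor coefficient-wise... actually we must be a little careful: if $j(q+1)^i \ge q$, the push-forward contributes positively at $P^{(i-1)}$ too). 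I would therefore write $E_{i,j} = \lfloor j(q+1)^i/q\rfloor P^{(i-1)} + \lfloor (d_i - j(q+1)^i)/q\rfloor P^{(i-1)}_{\infty}$, but observe that only the coefficient at $P^{(i-1)}_{\infty}$ matters for comparison with $D_{i-1}$ (which is supported at $P^{(i-1)}_{\infty}$ only): the inequality $E_{i,j}\le D_{i-1}$ requires the $P^{(i-1)}$ coefficient of $E_{i,j}$ to be $\le 0$, which forces us to argue this contribution is actually absent or handled via the convention of floor on divisors. I'd check this by revisiting how $E_{i,j}$ was defined: because $D_i$ is supported only at $P^{(i)}_{\infty}$ and the relevant divisor in the formula is effective only at $P^{(i-1)}_{\infty}$ (once the zero-part of $\div(x_i)$ is dropped before taking the floor, per the convention in this paper), we get $E_{i,j} = \lfloor (d_i - j(q+1)^i)/q \rfloor P^{(i-1)}_{\infty}$ as stated in the text just before Remark~\ref{rk:big-div-HT}.

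With $E_{i,j}= \lfloor (d_i-j(q+1)^i)/q\rfloor P^{(i-1)}_{\infty}$ in hand, the first condition is immediate: for $0\le j\le q-1$, the monotonicity of the floor gives $\lfloor (d_i-j(q+1)^i)/q\rfloor \le \lfloor d_i/q\rfloor \le d_{i-1}$, whence $E_{i,j}\le D_{i-1}$. For the second condition, set
\[
n_j \;\mydef\; d_{i-1} - \left\lfloor \frac{d_i-j(q+1)^i}{q}\right\rfloor \;=\; \left\lfloor\frac{d_i}{q}\right\rfloor + 2g_{i-1} - \left\lfloor\frac{d_i-j(q+1)^i}{q}\right\rfloor.
\]
The first and third terms combine to a non-negative quantity (again by monotonicity of the floor), so $n_j \ge 2g_{i-1}$.

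The last step is the core of the argument: we need a function $\nu_{i-1,j} \in F_{i-1}$ with $\poles{\nu_{i-1,j}} = n_j P^{(i-1)}_{\infty}$. By the Weierstrass gap theorem (a standard Riemann--Roch corollary), every integer $m \ge 2g_{i-1}$ is a non-gap at any rational point of $\calX_{i-1}$, and hence lies in the Weierstrass semigroup $\calH(P^{(i-1)}_{\infty})$. Since $n_j\ge 2g_{i-1}$, there exists $\nu_{i-1,j}\in F_{i-1}$ whose pole divisor is exactly $n_j P^{(i-1)}_{\infty}$, i.e. $\poles{\nu_{i-1,j}} = D_{i-1}-E_{i,j}$, establishing compatibility. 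The potentially tricky step is the careful bookkeeping of the push-forward under $\pi_i$ so that $E_{i,j}$ is correctly identified as a multiple of $P^{(i-1)}_{\infty}$ alone; once that is settled, the inflation by $2g_{i-1}$ in \eqref{eq:deg_herm} precisely serves to vault over the gap set of $P^{(i-1)}_{\infty}$ and guarantee existence of each balancing function.
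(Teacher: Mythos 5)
Your proof is correct and follows essentially the same route as the paper's: identify $D_{i-1}-E_{i,j}$ as an effective multiple of $P^{(i-1)}_{\infty}$ of degree $\lfloor d_i/q\rfloor - \lfloor (d_i-j(q+1)^i)/q\rfloor + 2g_{i-1} \ge 2g_{i-1}$, and conclude via the bound $\max\bigl(\N\setminus\calH(P^{(i-1)}_{\infty})\bigr)\le 2g_{i-1}-1$ that this degree is a non-gap, so a balancing function exists. Your extra bookkeeping on the push-forward and the explicit check of the first compatibility condition are consistent with what the paper leaves implicit.
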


\begin{proof}
	By \cite[Theorem 1.6.8]{S08}, we know that
	\[\max\left(\N \setminus \calH\left(P^{(i-1)}_{\infty}\right)\right) \leq 2g_{i-1} -1.\]
	Then for every $0 \leq j \leq q-1$, the difference
	\begin{equation}
		m_{i,j}:=\deg(D_{i-1}-E_{i,j}) =  \left(\left\lfloor \dfrac{d_i}{q} \right\rfloor - \left\lfloor \dfrac{d_i-j(q+1)^i}{q}\right\rfloor + 2g_{i-1} \right)
	\end{equation}
	always belongs to the Weierstrass semigroup $\mathcal{H}\left(P_{\infty}^{(i-1)}\right)$.
\end{proof}

\subparagraph*{About the balancing functions.}

Since we know a $\mathbb{N}$-basis of the Weierstrass semigroup at $P_{\infty}^{(i-1)}$, we are able to explicit the form of the functions $\nu_{i-1,j}$. In particular, they can be chosen as the products of powers of the functions $x_0,...,x_{i-1}$. More precisely, if $a_{i,j} := (a_{i,j}(0),...,a_{i,j}(i-1)) \in \mathbb{N}^i$ are integers such that

\begin{equation}\label{eq:mij}
	m_{i,j} = \sum\limits_{k=0}^{i-1} a_{i,j}(k) \cd q^{i-1-k}(q+1)^k,
\end{equation}
then $m_{i,j} \in \mathcal{H}\left(P_{\infty}^{(i-1)}\right)$. The corresponding choice for the balancing function is then given by 
\[ \nu_{i-1,j} = \prod\limits_{k=0}^{i-1} x_k^{a_{i,j}(k)}.\]
Note that finding a vector $a_{i,j} \in \mathbb{N}^i$ satisfying \eqref{eq:mij} leads to the study of the diophantine equation
\[m_{i,j} = \sum\limits_{k=0}^{i-1} a_k \cd q^{i-1-k}(q+1)^k\]
with $i$ unknowns $a_k \in \mathbb{N}$, for which we know there exists at least a solution (and we only need one).

\subsubsection{A family of foldable codes}\label{subsec:foldable-towers-codes}

We denote by $\ii$ the level in the tower $(\calX_i)_{i\geq 0}$, such that $\calX_{\ii}$ is the curve on which the code we want to test proximity is defined.

\begin{proposition}\label{prop:tower-family}
	Fix an integer $\ii$. Set $\calP_{0} \subseteq \PP^1(\F_{q^2})\setminus\{P^{0}_\infty\}$ and define $\calP_{\ii}$ as the preimage of $\calP_0$ under the morphism $\Pi_{\ii}$. Fix an integer $d_{\ii}$. Then the code $C(\calX_{\ii},\calP_{\ii}, d_{\ii}P^{(i)}_{\infty})$ is foldable.

\end{proposition}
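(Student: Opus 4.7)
\medskip

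The plan is to verify, one by one, the three conditions of Definition \ref{def:good_properties} for the code $C(\calX_{\ii},\calP_{\ii}, d_{\ii}P^{(\ii)}_\infty)$, re-indexing the sequence in Section \ref{sec:foldable} backward since the Hermitian tower numbers curves the opposite way. Concretely, I would set $\calX^{(k)} \mydef \calX_{\ii-k}$, $\Gamma^{(k)} \mydef \Gal(F_{\ii-k}/F_{\ii-k-1})$, $\mu_k \mydef x_{\ii-k} \in F_{\ii-k}$, and $D_k \mydef d_{\ii-k}P^{(\ii-k)}_\infty$ where the integer sequence is the one prescribed by \eqref{eq:deg_herm}, namely $d_{i-1} = \floor{d_i/q} + 2g_{i-1}$.

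For Condition \ref{it:gpe}, I would take $\calG \mydef \Gal(F_{\ii}/F_0)$ viewed as a subgroup of $\Aut(\calX_{\ii})$. The chain of subfields $F_0 \subset F_1 \subset \cdots \subset F_{\ii}$ yields a composition series of $\calG$ whose factor groups are each the additive group of the Artin--Schreier extension, isomorphic to $\mathcal{S} \simeq \Fq$, which is abelian. Hence $\calG$ is solvable of order $q^{\ii}$. To establish freeness of the action on $\calP_{\ii}$, I would invoke the fact (recalled in the preliminaries) that $P^{(0)}_\infty$ is the unique ramified place of the tower: since $\calP_0 \subseteq \PP^1(\F_{q^2})\setminus\{P^{(0)}_\infty\}$, every point of $\calP_{\ii}=\Pi_{\ii}^{-1}(\calP_0)$ has trivial inertia, hence trivial stabilizer in $\calG$.

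For Condition \ref{it:card}, freeness implies $\size{\calP_{\ii}} = q^{\ii}\cdot \size{\calP_0} \le q^{\ii+2}$. Taking any $e \in (0, \ii/(\ii+2))$ (valid for $\ii \geq 1$; the case $\ii=0$ is the Reed--Solomon setting treated in Section \ref{ex:RS-foldable}) gives $\size{\calP_{\ii}}^e \le q^{(\ii+2)e} < q^{\ii} = \size{\calG}$.

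For Condition \ref{it:cdt}, each of the four sub-items is now essentially a citation. Sub-item (a) holds because $P^{(i)}_\infty$ is totally ramified in the tower and therefore fixed by $\Gamma^{(k)}$. Sub-item (b) is exactly the unnumbered proposition preceding Theorem \ref{th:div-comp-HT}, which shows that $x_i$ partitions $L_{\calX_i}(d_i P^{(i)}_\infty)$. Sub-item (c) is a direct computation: on $\calX_{\ii-k}$ the group $\Gamma^{(k)}$ acts by $x_{\ii-k} \mapsto x_{\ii-k}+\alpha$ with $\alpha$ ranging over $\mathcal{S}$, so distinct automorphisms send any point $P$ to points on which $\mu_k=x_{\ii-k}$ takes distinct values. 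Sub-item (d) is precisely Theorem \ref{th:div-comp-HT}, which was tailored to this sequence of divisors.

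I do not foresee any significant obstacle, since all the non-trivial content (the decomposition of the Riemann--Roch spaces and the delicate existence of balancing functions via the Weierstrass semigroup at $P^{(i)}_\infty$) has already been established. The only care needed is bookkeeping around the reversed indexing convention, and verifying that the freeness of $\calG$ (which follows from total ramification of $P^{(0)}_\infty$) propagates to every intermediate level so that each $\Gamma^{(k)}$ in turn acts freely on $\calP^{(k)} \mydef \Pi_{\ii-k}(\Pi_{\ii}(\calP_{\ii}))$ with no collapse of orbits along the tower.
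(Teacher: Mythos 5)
Your proposal is correct and takes essentially the same route as the paper's own (much terser) proof: a solvable group of order $q^{\ii}$ whose factor groups are the additive group $\Fq$, freeness of the action on $\calP_{\ii}$ because $\calP_0$ avoids the unique ramified place $P^{(0)}_\infty$, the cardinality count $\size{\calP_{\ii}}=q^{\ii}\size{\calP_0}$ for Condition \ref{it:card}, and Condition \ref{it:cdt} via the decomposition by $x_i$ together with Theorem \ref{th:div-comp-HT} and the translation action $x_i\mapsto x_i+\alpha$ separating points in a $\Gamma_i$-orbit. Your write-up merely supplies more detail (the reversed indexing, the explicit choice of $e$, and the ramification argument for freeness) than the paper, which asserts these points directly.
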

\begin{proof}
	Beware that the sequence of curves is indexed decreasingly here, contrary to Section \ref{sec:foldable}. In particular, the original code is defined over the curve $\calX_{\ii}$.
	
	By definition of the Hermitian tower, there exists a solvable group $\calG$ acting on $\calX_{\ii}$ that admits a normal series for which each factor group is isomorphic to the additive abelian group of $\Fq$. The action of $\calG$ on $\calP_{\ii}$ is free, by definition of $\calP_{\ii}$. The cardinality of $\calP_{\ii}$ is equal to $\size{\calP_{0}}q^{\ii}$, hence $\size{\calG} > \size{\calP_{\ii}}^e$ for some $e \in (0,1)$. 
	
	The third condition of Definition \ref{def:good_properties} follows from Theorem \ref{th:div-comp-HT}, noticing that for every $i\geq 0$, the function $x_i$ maps different points in the same $\Gamma_i$-orbit onto different values.
\end{proof}

To control the dimension of foldable codes, we will focus on those of the form
\begin{equation}\label{eq:def_Cimax}
	C \mydef C\left(\calX_{\ii},\calX_{\ii}(\F_{q^2})\setminus \{P^{(\ii)}_{\infty} , (2\alpha+1)g_{\ii})P^{(\ii)}_{\infty}\right)
\end{equation}
for some $\alpha > 1/2$. In this case, we have $n_{\ii}=q^{\ii+2}$. We  can determine a sufficient condition over $\ii$ and $\alpha$ to get a constant rate.

\begin{lemma}\label{lem:limit-rate}
	Let $R \in (0,1)$. Fix $\epsilon \in (0,1)$. Set $\ii\mydef q^\epsilon$ and $\alpha \mydef R q^{1-\epsilon}$.
	
	The ratio of the dimension of the code $C$ defined in \eqref{eq:def_Cimax} by its block length goes to $R$ when $q$ tends to infinity. 
	
	If $2(q^\epsilon-1)<q$, the relative minimum distance of $C$ is bounded from below by $1-R\left(1+\frac{1}{q}\right)$.
\end{lemma}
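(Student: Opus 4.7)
The plan is to apply the Riemann-Roch theorem to compute $\dim(C)$ and the designed-distance bound $d(C)\geq n_{\ii}-\deg D_{\ii}$, then estimate the ratios $\dim(C)/n_{\ii}$ and $\deg D_{\ii}/n_{\ii}$ by controlling $g_{\ii}/q^{\ii+2}$ through the closed-form genus formula \eqref{eq:genus_herm}.

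First, since $\alpha>1/2$, we have $\deg D_{\ii}=(2\alpha+1)g_{\ii}>2g_{\ii}-1$, so the Riemann-Roch theorem gives $\dim C=2\alpha g_{\ii}+1$. The key step is to normalize the genus as
\[\frac{g_{\ii}}{q^{\ii+2}}=\frac{1}{2}\left[\left(1-\frac{1}{q^2}\right)\left(1+\frac{1}{q}\right)^{\ii}-1+\frac{1}{q^{\ii+2}}\right].\]
Since $\ii=q^\epsilon$ with $\epsilon<1$, we have $\ii/q=q^{\epsilon-1}\to 0$, and either the binomial expansion or the identity $(1+1/q)^{\ii}=\exp(\ii\log(1+1/q))$ yields $(1+1/q)^{\ii}=1+q^{\epsilon-1}+O(q^{2\epsilon-2})$. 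Consequently $g_{\ii}/q^{\ii+2}=(1+o(1))q^{\epsilon-1}/2$, and multiplying by $2\alpha=2Rq^{1-\epsilon}$ gives $\dim(C)/n_{\ii}=R+o(1)\to R$, establishing the first assertion.

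For the minimum distance, I would combine the Goppa bound $\Delta(C)\geq 1-\deg D_{\ii}/n_{\ii}$ (whose hypothesis $\deg D_{\ii}<n_{\ii}$ follows from the same estimates for $q$ large enough) with a precise finite-$q$ upper bound on $(1+1/q)^{\ii}$ valid under the assumption $2(q^\epsilon-1)<q$. The geometric-series identity $(1+1/q)^{\ii}-1=\frac{1}{q}\sum_{k=0}^{\ii-1}(1+1/q)^{k}$ together with the monotonicity bound $(1+1/q)^{\ii-1}\leq (1+1/q)^{\ii}$ telescopes to $(1+1/q)^{\ii}\leq (q+1)/(q+1-\ii)$ whenever $\ii<q+1$, which the hypothesis guarantees. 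Substituting this into $(2\alpha+1)g_{\ii}/q^{\ii+2}$, expanding $(1-1/q^2)=(q-1)(q+1)/q^2$, and using the identity $2\alpha\cdot\ii/q=2R$ lets the leading term simplify to $R(1+1/q)$. The main obstacle is the careful algebraic matching between the closed-form bound on $(1+1/q)^{\ii}$ and the correction terms $(-1/q^2)$ and $(+1/q^{\ii+2})$ appearing in the genus formula, which must be tracked precisely in order to produce the clean form $1-R(1+1/q)$ rather than a looser bound of the shape $1-R-O(q^{\epsilon-1})$.
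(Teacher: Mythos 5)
Your treatment of the first assertion is correct and matches the paper's: Riemann--Roch gives $\dim C = 2\alpha g_{\ii}+1$ once $\alpha>1/2$, and the asymptotic $g_{\ii}/q^{\ii+2}\sim \tfrac12 q^{\epsilon-1}$ (which the paper isolates as Lemma \ref{lem:gen_sympt} and which you rederive directly from \eqref{eq:genus_herm}) yields the limit $R$.

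For the minimum distance you take a genuinely different route to bounding the genus: the telescoping estimate $(1+1/q)^{\ii}\le (q+1)/(q+1-\ii)$, whereas the paper invokes Proposition \ref{prop:maj_genus_herm}, i.e.\ the binomial expansion truncated under the hypothesis $2(\ii-1)<q$, giving $g_{\ii}/q^{\ii+2}\le \frac{\ii}{2q}\bigl(1+\frac{\ii}{q}\bigr)$. Both bounds are legitimate and of comparable strength. The gap is in the final step, which you yourself flag as ``the main obstacle'': the claimed simplification to $R(1+1/q)$ does not go through. With either genus bound one gets
\[
\frac{\deg D_{\ii}}{n_{\ii}}=\frac{(2\alpha+1)g_{\ii}}{q^{\ii+2}}\le\left(R+\frac{\ii}{2q}\right)\left(1+\frac{\ii}{q}\right)=R+Rq^{\epsilon-1}+\tfrac12 q^{\epsilon-1}+\tfrac12 q^{2\epsilon-2},
\]
and since $\ii/q=q^{\epsilon-1}\gg 1/q$, the error terms are of order $q^{\epsilon-1}$, not $R/q$: the ``$+1$'' in $2\alpha+1$ and the multiplicative correction $1+\ii/q$ each contribute $\Theta(q^{\epsilon-1})$. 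So your argument only yields $\Delta(C)\ge 1-R-O(q^{\epsilon-1})$, exactly the ``looser bound'' you mention, and the second assertion is not established. (In fairness, the paper's own proof is equally terse at this point --- it cites the same kind of genus bound and asserts it ``gives the expected lower bound'' --- and the same discrepancy appears there, which suggests the stated constant should perhaps read $1-R\bigl(1+\frac{\ii}{q}\bigr)-O(q^{\epsilon-1})$ rather than $1-R\bigl(1+\frac{1}{q}\bigr)$. But a blind proof that ends by naming the unresolved algebraic matching has not closed the argument.)
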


\begin{proof}
	If $\alpha > \frac{1}{2}$, the dimension of the code is equal to $(2\alpha +1)g_{\ii}- g_{\ii}+1= 2Rq^{1-\epsilon}g_{\ii}+1$ by the Riemann-Roch Theorem. As $R$ is fixed and $q$ goes to infinity, we can assume that $\alpha > 1/2$ to compute the rate as
	\[\lim_{q \rightarrow \infty} \frac{2Rq^{1-\epsilon}g_{\ii}}{q^{\ii+2}}\]
	for $\ii=q^\epsilon$.
	Lemma \ref{lem:gen_sympt} in Appendix \ref{app:gen_HT} clearly implies that this limit is equal to $R$.
	
	Regarding the relative minimum distance, we use the Goppa bound: if \[(2\alpha + 1)g_{\ii} < q^{\ii+2},\] then the relative minimum distance of $C$ satisfies \[\Delta(C) \geq 1 - \frac{(2\alpha + 1)g_{\ii}}{q^{\ii+2}}.\] By Proposition \ref{prop:maj_genus_herm} in Appendix \ref{app:gen_HT}, we have
	\[ \frac{g_{\ii}}{q^{\ii+2}} \leq \frac{\ii}{2q}\left(1 + \frac{\ii}{q}\right),\]
	which gives the expected lower bound for our choice of $\alpha$ and $\ii$.
\end{proof}

\section{Folding operators for AG codes}\label{sec:iopp}

Now that we have determined the needed properties of an AG-code to be foldable, we construct the fundamental building block of our IOPP by generalizing the so-called algebraic hash function of \cite{BKS18} to the AG codes setting, and we refer to it as the \emph{folding operator}. Next, we provide a formal description of the IOPP system $(\prover, \verifier)$ and state the theorem capturing its efficiency properties.

\subsection{Definition of folding operators}\label{subsec:folding}

Let $C_0 = C(\calX_0,\calP_0,D_0)$ be a code satisfying Definition \ref{def:good_properties}. We consider its associated $\XGseq$-sequence of curves $(\calX_i)$ and its sequence of divisors $(D_i)$. 

To test proximity of a function ${f^{(0)} : \calP_0 \rightarrow \F}$ to $C_0$, we aim to inductively reduce the problem to a smaller one, consisting of testing proximity to the code $C_i = C(\calX_i,\calP_i,D_i)$. Broadly speaking, our goal is to define from any function $f^{(i)} : \calP_i \rightarrow \F$ a function $f^{(i+1)} : \calP_{i+1} \rightarrow \F$ such that the relative distance $\Delta(f^{(i+1)}, C_{i+1})$ is roughly equal to $\Delta(f^{(i)}, C_i)$.

Fix $i \in \range{r}$ and let $f : \calP_i \rightarrow \F$ be an arbitrary function. 

\begin{notation}[Interpolation polynomial]\label{not:IfP}
	For each $P \in \calP_{i+1}$, let us denote ${S_P \mydef \pi_i^{-1}(\set{P})}$ the set of $p_i$ distinct preimages of $P$.     Recall that the function $\mu_i$ satisfies Item \ref{sit:dif-values} of Definition \ref{def:good_properties} and consider 
	\begin{equation}\label{eq:IfP}
		I_{f,P}(X)\mydef\sum_{j=0}^{p_i-1} X^j a_{j,P} 
	\end{equation} 
	the univariate polynomial over $\F$ of degree less than $p_i$ which interpolates the set of points \[\set{(\mu_i(\hat P), f(\hat P)) ; \hat P \in S_P}.\] 
	Specifically, for all $\hat P \in S_P$, we have
	\[
	I_{f, P}(\mu_i(\hat P)) = f(\hat P).
	\]    
	Then for every $j \in \range{p_i}$, we define the function 
	\begin{equation}\label{eq:coeff-fj}
		f_j:\left\{\begin{array}{ccc}
			\calP_{i+1} &\rightarrow& \F, \\
			P &\mapsto & a_{j,P}.
		\end{array}\right.
	\end{equation}
\end{notation}

Given $f : \calP_i \rightarrow \F$, the idea is to define $p_i$ functions $f_j : \calP_{i+1} \rightarrow \F$, where $\size{\calP_{i+1}} = \frac{\size{\calP_i}}{p_{i}}$ such that $f$ corresponds to the evaluation of a function in $L(D_i)$ if and only if each $f_j$ coincides with a function in $L(E_{i, j}) \subset L(D_{i+1})$. Instead of testing for each $j \in \range{p_i}$ whether $f_j \in C_{i+1}$, we reduce those $p_i$ claims to a single one, by taking a random linear combination of the $f_j$'s, which we referred to as a \emph{folding of $f$}. By linearity of the codes, such a combination of the $f_j$'s belongs to $C_{i+1}$ whenever $f \in C_i$ (see Proposition \ref{prop:folding-completeness} below). However, for soundness analysis,  one needs to ensure that no $f_j$ corresponds to a function lying in $L(D_{i+1}) \setminus L(E_{i,j})$. Some safeguards are embedded into the folding operation by introducing the balancing functions $\nu_{i+1,j}$ from Definition \ref{def-div_compatible} in the second term of the sum in \eqref{eq:fold-via-interp}.

\begin{definition}[Folding operator]\label{def:folding-operator}
	For any $\bfz=(z_1,z_2) \in \F^2$, we define the \emph{folding of $f$} to be the function ${\fold{f, \bfz} : \mathcal P_{i+1} \rightarrow \F}$ such that
	\begin{equation}\label{eq:fold-via-interp}
		\fold{f, \bfz} \mydef\sum_{j=0}^{p_i-1} z_1^j f_j + \sum_{j=0}^{p_i-1} z_2^{j+1} \nu_{i+1,j} f_j
	\end{equation}    
	where the functions $f_j$ are defined in Equation \eqref{eq:coeff-fj} and the functions $\nu_{i+1,j}$ in Definition \ref{def-div_compatible}.
\end{definition}

\subsection{Properties of folding operators}
Our aim is to prove that the folding operators satisfy three key properties: local computability, completeness, and distance preservation. This will enable us to invoke \cite[Theorem~1]{ABN21} for the completeness and soundness of our AG-IOPP.

Given the $p_i$ points $((\mu_i(\hat P), f(\hat P)))_{\hat P \in S_P}$, one can determine the coefficients $(a_{j, P})_{0 \le j < p}$ of $I_{f, P}$ defined in \eqref{eq:IfP} by polynomial interpolation. Recalling that for each $P \in \calP_{i + 1}$, we have ${f_j(P) = a_{j, P}}$, we get the following lemma. This lemma will allow to obtain fast prover time and verifier decision complexity.

\begin{lemma}[Locality]\label{lem:locality}
	Let $\bfz \in \F^2$. For each $P \in \calP_{i+1}$, the value of $\fold{f, \bfz}(P)$ can be computed with exactly $p_i$ queries to $f$, namely at the points $\pi_i^{-1}(\set{P})$.
\end{lemma}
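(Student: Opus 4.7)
The plan is to read off from Definition \ref{def:folding-operator} that
\[
\fold{f,\bfz}(P) \;=\; \sum_{j=0}^{p_i-1} z_1^{j}\, f_j(P) \;+\; \sum_{j=0}^{p_i-1} z_2^{j+1}\, \nu_{i+1,j}(P)\, f_j(P),
\]
so to compute $\fold{f,\bfz}(P)$ it is enough to compute the $p_i$ scalars $f_0(P),\dots,f_{p_i-1}(P)$: the scalars $z_1^j$ and $z_2^{j+1}$ are fixed by $\bfz$, and the values $\nu_{i+1,j}(P)$ depend only on $P$ and on the (publicly known) balancing functions, not on $f$.

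Next, I would unfold the definition of $f_j$ given in \eqref{eq:coeff-fj}: by construction $f_j(P)=a_{j,P}$, where $(a_{j,P})_{0\le j<p_i}$ are the coefficients of the interpolation polynomial $I_{f,P}(X)\in\F[X]$ of degree $<p_i$ characterized in Notation \ref{not:IfP} by
\[
I_{f,P}\bigl(\mu_i(\hat P)\bigr)=f(\hat P)\quad\text{for every }\hat P\in S_P=\pi_i^{-1}(\{P\}).
\]
By Item \ref{sit:dif-values} of Definition \ref{def:good_properties}, the $p_i$ values $\mu_i(\hat P)$ obtained when $\hat P$ ranges over $S_P$ are pairwise distinct, and $|S_P|=p_i$ since $\calG$ (and therefore $\Gamma_i$) acts freely on $\calP_i$. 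Hence the interpolation data $\{(\mu_i(\hat P),f(\hat P))\colon \hat P\in S_P\}$ consists of $p_i$ points with distinct abscissae, uniquely determining a polynomial of degree $<p_i$.

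From here the conclusion is immediate: querying $f$ exactly at the $p_i$ preimages $\pi_i^{-1}(\{P\})$ produces the interpolation data, Lagrange interpolation recovers all coefficients $a_{j,P}=f_j(P)$, and plugging them into the displayed formula above yields $\fold{f,\bfz}(P)$. Conversely one cannot do with fewer queries, because recovering a degree $<p_i$ polynomial from its evaluations requires $p_i$ data points. I do not expect any real obstacle here: the only nontrivial step is invoking Item \ref{sit:dif-values} to guarantee distinct interpolation nodes, which is precisely why that condition was imposed when defining foldable AG codes.
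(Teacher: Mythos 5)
Your argument is correct and is essentially the paper's own justification, which simply observes (in the paragraph preceding the lemma) that the coefficients $a_{j,P}=f_j(P)$ are recovered by interpolating the $p_i$ pairs $\left(\mu_i(\hat P), f(\hat P)\right)$ for $\hat P\in S_P$ and then plugged into the formula defining $\fold{f,\bfz}(P)$; your extra care in invoking Item \ref{sit:dif-values} for distinct nodes and the free action for $\size{S_P}=p_i$ matches what the definitions of Section \ref{sec:foldable} are designed to guarantee. The closing remark about a matching lower bound on the number of queries is not needed for (and not claimed by) the lemma, so you may drop it.
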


\begin{proposition}[Completeness]\label{prop:folding-completeness}
	Let $\bfz \in \F^2$. If $f \in C_i$, then $\fold{f,\bfz} \in C_{i+1}$.
\end{proposition}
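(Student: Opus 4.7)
The plan is to start from the hypothesis $f \in C_i$, lift it back to a function $F \in L_{\calX_i}(D_i)$ such that $f = F|_{\calP_i}$, and then use the decomposition machinery built up in Section~\ref{sec:foldable} to rewrite the folded function as the evaluation of an element of $L_{\calX_{i+1}}(D_{i+1})$.

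First I would invoke the partitioning property of $\mu_i$ (Definition~\ref{def:mu-decomp}, which holds by Item~\ref{it:cdt}(b) of Definition~\ref{def:good_properties}) to write
\[
F = \sum_{j=0}^{p_i-1} \mu_i^{\,j}\,(F_j \circ \pi_i), \qquad F_j \in L_{\calX_{i+1}}(E_{i,j}).
\]
Fix $P \in \calP_{i+1}$ and $\hat P \in S_P = \pi_i^{-1}(\{P\})$. Evaluating the identity above at $\hat P$ and using $\pi_i(\hat P) = P$ yields
\[
f(\hat P) = F(\hat P) = \sum_{j=0}^{p_i-1} \mu_i(\hat P)^{\,j}\, F_j(P).
\]
Thus the degree-$< p_i$ polynomial $\sum_{j} X^j F_j(P)$ agrees with $f(\hat P)$ at the $p_i$ values $\mu_i(\hat P)$, $\hat P \in S_P$. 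By Item~\ref{sit:dif-values} of Definition~\ref{def:good_properties}, these $p_i$ values are pairwise distinct, so uniqueness of polynomial interpolation forces this polynomial to equal $I_{f,P}(X)$ from Notation~\ref{not:IfP}. Identifying coefficients gives $a_{j,P} = F_j(P)$, i.e.\ $f_j$ is exactly the evaluation of $F_j$ on $\calP_{i+1}$.

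Next I would check that the lifted combination
\[
G \mydef \sum_{j=0}^{p_i-1} z_1^{\,j} F_j \;+\; \sum_{j=0}^{p_i-1} z_2^{\,j+1}\, \nu_{i+1,j}\, F_j
\]
belongs to $L_{\calX_{i+1}}(D_{i+1})$, so that $\fold{f,\bfz} = G|_{\calP_{i+1}} \in C_{i+1}$. The first sum is immediate: by the first clause of Definition~\ref{def-div_compatible}, $E_{i,j} \le D_{i+1}$, hence each $F_j \in L(E_{i,j}) \subseteq L(D_{i+1})$. For the second sum, the defining equation $\poles{\nu_{i+1,j}} = D_{i+1} - E_{i,j}$ of the balancing function gives
\[
\div(\nu_{i+1,j} F_j) = \div(\nu_{i+1,j}) + \div(F_j) \ge -(D_{i+1} - E_{i,j}) - E_{i,j} = -D_{i+1},
\]
so $\nu_{i+1,j} F_j \in L(D_{i+1})$ as well. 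Linearity of the Riemann--Roch space then gives $G \in L(D_{i+1})$, which concludes the proof.

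The main obstacle is the identification $f_j = F_j|_{\calP_{i+1}}$: without Item~\ref{sit:dif-values} guaranteeing that $\mu_i$ separates points in a $\Gamma_i$-orbit, the interpolation step above collapses and one loses the link between the analytic definition of the $f_j$ (via local interpolation) and the algebraic decomposition coming from Kani/Maharaj-type splittings. Everything else is a routine manipulation of divisors, but this compatibility between the locally-computable $f_j$ and the global components $F_j$ is precisely what makes the folding operator well-behaved with respect to the code structure.
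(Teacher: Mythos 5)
Your proof is correct and follows essentially the same route as the paper: decompose the lift of $f$ via the partitioning property of $\mu_i$, identify the interpolation coefficients $f_j$ with the evaluations of the components $F_j$ using that $\mu_i$ separates points in each fiber, and then check term by term that the folded combination lies in $L(D_{i+1})$. Your explicit divisor computation for $\nu_{i+1,j}F_j$ spells out a step the paper only asserts, but the argument is the same.
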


\begin{proof}
	Write $\bfz=(z_1,z_2)$. If $f \in C_i$, it coincides with a function of $L(D_{i})$. By definition of the divisors $E_{i,j}$ and Theorem \ref{th:kani}, there exist some functions $\widetilde{f_j} \in L(E_{i,j})$ such that
	\[ f= \sum_{j=0}^{p_i-1} \mu_i^j \widetilde{f_j} \circ \pi_i .\]
	Let $P \in \mathcal{P}_{i+1}$.    
	For any $\hat P \in S_P$,
	\[\fold{f,(\mu_i(\hat P),0)}(P) = I_{f, P}(\mu_i(\hat P)) = f(\hat P)=\sum_{j=0}^{p_i-1} \mu_i(\hat P)^j \widetilde{f_j}(P).\]
	Moreover, for any $P \in \calP_{i+1}$, polynomials $I_{f, P}(X)$ and $\fold{f, (X,0)}(P)$ in $\F[X]$ are of degree less than $p_i$ and agree on $\set{\mu_i(\hat P) ; \hat P \in S_P}$ of size $p_i$, therefore they are equal. In particular, 
	\[
	\fold{f, (\mu_i(\hat P),0)}(P) = \sum_{j = 0}^{p_i - 1} \mu_i(\hat P)^j f_j(P).
	\]
	Thus, for all $P \in \calP_{i+1}$, \[\sum_{j=0}^{p_i-1} \mu_i(\hat P)^j (\widetilde{f_j}(P) - f_j(P)) = 0\] and the polynomial \[\sum_{j=0}^{p_i-1} X^j (\widetilde{f_j}(P) - f_j(P))\] of degree less than $p_i$ is zero on at least $\size{\set{\mu_i(\hat P) ; P \in \calP_{i+ 1}}} = p_i$ points. Hence, for every ${j \in \range{p_i}}$, the function $f_j$ defined in Equation \eqref{eq:coeff-fj} coincides with $\widetilde{f_j}$ and
	\[\fold{f, \bfz} \mydef\sum_{j=0}^{p_i-1} z_1^j \widetilde{f_j} + \sum_{j=0}^{p_i-1} z_2^{j+1} \nu_{i+1,j} \widetilde{f_j}\]
	where $\widetilde{f_j} \in L(E_{i,j}) \subseteq L(D_{i+1})$ and $\nu_{i+1,j} f_j \in L(D_{i+1})$, by definition of the divisors $E_{i,j}$, $D_{i+1}$ and the functions $\nu_{i+1,j}$ (see Definition \ref{def-div_compatible}). Thus each term of $\fold{f,\bfz}$ lies in the vector space $C_{i+1}$, which concludes the proof.
\end{proof}

We discuss the effect of the folding operation on a function which is far from the code. Roughly speaking, we want to show that, if $f$ is $\delta$-far from the code $C_i$, then the folding $\fold{f, \bfz}$ of $f$ is almost $\delta$-far from the code $C_{i+1}$ with high probability over $\bfz \in \F^2$. We start with the notion of weighted agreement.

\begin{definition}[Weighted agreement]\label{def:agreement}
	For any function $\eta \in [0,1]^{\calP}$, we define the \emph{$\eta$-agreement} of two functions $u,v \in \F^\calP$ by
	\[\omega_\eta(u,v) \mydef\frac{1}{|\calP|} \sum_{\mathclap{\stackrel{P \in \calP}{u(P) = v(P)}}} \eta(P).\]
	Given a subspace $V \subset \F^\calP$ and $u \in \F^\calP$, we set \[\omega_\eta(u,V) \mydef \max_{v \in V} \omega_\eta(u,v).\]
\end{definition}
Notice that since $\eta \in [0,1]^\calP$, we have for any $V \subset \F^\calP$ and any $u \in \F^\calP$,
\begin{equation}\label{eq:eta<1}
	\omega_\eta(u,V) \leq 1 - \Delta(u,V).
\end{equation}

Let us introduce some notations related to the Johnson list-decoding function. For any $\epsilon \in (0,1]$, let $J_\epsilon : [0,1] \rightarrow [0,1]$ be the function such that \[J_\epsilon(\lambda) = 1 - \sqrt{1 - (1 - \epsilon)\lambda},\] and denote $J_{\epsilon}^{l} = \underbrace{J_\epsilon \circ \dots \circ J_\epsilon}_{\text{$l$ times}}$.

We now state a preliminary result concerning the weighted agreement on a low-degree parametrized curve. Proof of Proposition \ref{prop:BKS18-eta} builds upon the one of \cite[Theorem~4.5]{BKS18} and is given in Appendix \ref{app:prop-ldc}.

\begin{proposition}\label{prop:BKS18-eta}
	Let $\eta \in [0,1]^\calP$ and $\epsilon,\delta>0$ such that and $\delta < J_{\epsilon}^l(\lambda)$.
	Let ${u_0, \ldots, u_{l-1} \in \F^\calP}$ such that
	\begin{equation}\label{eq:pr-randcombi-weighted}
		\Pr_{z \in \F }\left[\omega_\eta\left(\sum_{i = 0}^{l - 1} z^i u_i, V\right)>1-\delta \right]\geq \frac{l-1}{|\F|}\left(\frac{2}{\epsilon}\right)^{l+1},
	\end{equation}
	then there exists $T \subset \calP$ , and $v_0, \ldots, v_{l-1} \in V$ such that:
	\begin{itemize}
		\item $\sum_{P\in T} \eta(P) \ge (1 - \delta - \epsilon)|\calP|$
		\item for each $i$, $\rest{u_i}{T} = \rest{v_i}{T}$.
	\end{itemize}
\end{proposition}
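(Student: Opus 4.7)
The plan is to adapt the proof of \cite[Theorem~4.5]{BKS18} --- a correlated-agreement theorem for random linear combinations of codewords --- to the weighted setting defined by $\eta$. The adaptation proceeds by induction on $l$, with the Johnson list-decoding bound (in its $\eta$-weighted form) as the main technical tool.

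The base case $l = 1$ is immediate: the hypothesis collapses to $\omega_\eta(u_0, V) > 1 - \delta$; picking $v_0 \in V$ achieving the maximum and letting $T = \{P : u_0(P) = v_0(P)\}$, the definition of $\omega_\eta$ gives $\sum_{P \in T}\eta(P) \ge (1 - \delta)|\calP|$, which is at least $(1 - \delta - \epsilon)|\calP|$, as required.

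For the inductive step, let $Z = \{z \in \F : \omega_\eta(\sum_i z^i u_i, V) > 1 - \delta\}$, and for each $z \in Z$ pick $v(z) \in V$ with $\omega_\eta(\sum z^i u_i, v(z)) > 1 - \delta$. Since $\delta < J_\epsilon^l(\lambda) \le J_\epsilon(\lambda)$, the weighted Johnson bound ensures that at most $O(1/\epsilon)$ codewords can have $\eta$-agreement $> 1 - \delta$ with a given word, so the map $z \mapsto v(z)$ takes values in a small set for each $z$. The lower bound $|Z|/|\F| \ge (l-1)|\F|^{-1}(2/\epsilon)^{l+1}$ is calibrated so that a pigeonhole / double-counting argument yields $l$ distinct values $z_1, \ldots, z_l \in Z$ together with a single tuple $(v_0, \ldots, v_{l-1}) \in V^l$ satisfying $v(z_k) = \sum_i z_k^i v_i$ for every $k$. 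For any point $P$, the map $z \mapsto \sum_i z^i (u_i(P) - v_i(P))$ is a polynomial of degree $< l$ in $z$; whenever $P$ lies in the $\eta$-agreement set of $\sum z_k^i u_i$ and $\sum z_k^i v_i$ for all $k$, this polynomial vanishes at the $l$ distinct points $z_1, \ldots, z_l$, hence is identically zero, giving $u_i(P) = v_i(P)$ for every $i$. A union bound over the $l$ individual weighted-agreement sets produces the desired common set $T$ with $\sum_{P \in T}\eta(P) \ge (1 - \delta - \epsilon)|\calP|$ after absorbing the $l$-fold loss into the $\epsilon$ slack.

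The main obstacle is the pigeonhole step that extracts $l$ distinct $z_k$'s with coherent codewords $v(z_k) = \sum z_k^i v_i$: making this quantitative requires careful bookkeeping with the iterated Johnson function $J_\epsilon^l$, typically by recursively reducing the list sizes by a factor roughly $2/\epsilon$ per level and checking that the inductive hypothesis applies at the reduced list-decoding radius. The secondary difficulty is establishing the $\eta$-weighted Johnson bound itself, which follows from a direct adaptation of the standard proof with $\sum_P \eta(P)$ replacing $|\calP|$ throughout; both steps remain faithful to the combinatorial skeleton of the BKS18 argument, the only subtlety being that set cardinalities are consistently replaced by $\eta$-masses.
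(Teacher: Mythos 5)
Your overall architecture (extract a single tuple $(v_0,\dots,v_{l-1})$ coherent across many $z$, then use the vanishing of the degree-$<l$ polynomial $z\mapsto\sum_i z^i(u_i(P)-v_i(P))$ to force pointwise agreement) matches the skeleton of the paper's proof, which likewise defers the extraction of the coherent tuple to the argument of \cite[Theorem~4.5]{BKS18}. But your final step has a genuine quantitative gap. You select only $l$ distinct values $z_1,\dots,z_l$ and then take ``a union bound over the $l$ individual weighted-agreement sets.'' Each of those sets misses $\eta$-mass up to $\delta\size{\calP}$, so their intersection is only guaranteed $\eta$-mass $(1-l\delta)\size{\calP}$, not $(1-\delta-\epsilon)\size{\calP}$; the excess loss is $(l-1)\delta$, which depends on $\delta$ and $l$ and cannot be ``absorbed into the $\epsilon$ slack'' (in the intended applications $\delta$ is a constant close to the Johnson radius and $l=p_i$ can be large, so $1-l\delta$ is vacuous).

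The paper avoids this by an averaging argument over the \emph{entire} coherent set $C=\set{z \mid \omega_\eta(u_z,v_z)>1-\delta}$, whose size the hypothesis \eqref{eq:pr-randcombi-weighted} forces to exceed $\frac{l-1}{\epsilon}$. Averaging $\omega_\eta(u_z,v_z)$ over $z\in C$ and exchanging the order of summation, each point $P$ with $u_i(P)\neq v_i(P)$ for some $i$ contributes to the average with weight at most $\frac{l-1}{\size{C}}<\epsilon$ (since the nonzero polynomial of degree $<l$ has at most $l-1$ roots), which yields $1-\delta\le\frac{1}{\size{\calP}}\sum_{P\in T}\eta(P)+\epsilon$ directly --- losing only $\epsilon$ rather than $(l-1)\delta$. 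You should replace your intersection-of-$l$-sets step with this averaging over all of $C$; this is precisely where the threshold $\frac{l-1}{\size{\F}}\left(\frac{2}{\epsilon}\right)^{l+1}$ in the hypothesis earns its keep. Your base case and the polynomial-vanishing observation are fine as stated.
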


Here, for a function $u \in \F^\calP$, $\rest{u}{T} \in \F^T$ corresponds to the function obtained by restriction on $T \subset \calP$.

As mentioned earlier, soundness analysis relies on the relation between the weighted agreement of $f$ to $C_i$ and the weighted agreement of the folding of $f$ to $C_{i+1}$, constrained by the next corollary.

\begin{corollary}\label{coro:pre-soundness}
	Fix $i \in \range{r}$. For a function $\eta : \calP_i \rightarrow [0,1]$, define ${\theta : \calP_{i+1} \rightarrow [0,1]}$ by
	\[\forall P \in \calP_{i+1}, \: \theta(P)\mydef\frac{1}{p_i} \sum_{\hat{P} \in S_P} \eta(\hat{P}).\] 
	Let $\lambda_i$ be the minimal relative distance of $C_i$. Fix $\epsilon \in (0,1)$ and \[\delta < \min\left(J_{\epsilon}^{p_i}(\lambda_i), \frac{1}{2}\left( \lambda_i + \frac{\epsilon}{2} \right) \right).\] For any function $f : \calP_i \rightarrow \F$ such that $\omega_\eta(f, C_i) < 1-\delta$, we have
	\[
	\Pr_{\bfz \in \F^2}\left[\omega_\theta(\fold{f, \bfz}, C_{i+1}) > 1 - \delta + \epsilon\right] \leq 
	\frac{1}{\size \F}\left(p_i + \frac{4}{\epsilon} - 1\right)\left(\frac{4}{\epsilon}\right)^{p_i}.
	\]
\end{corollary}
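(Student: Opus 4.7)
The proof proceeds by contradiction. Suppose $\Pr_\bfz[\omega_\theta(\fold{f,\bfz}, C_{i+1}) > 1 - \delta + \epsilon]$ exceeds the claimed bound. The strategy is to build a function $\tilde f \in L(D_i)$ that agrees with $f$ on a subset of $\calP_i$ of $\eta$-measure at least $1 - \delta$, which contradicts $\omega_\eta(f, C_i) < 1 - \delta$. The bound factors as $\alpha + \beta$ with
$\alpha = \tfrac{(p_i - 1)(4/\epsilon)^{p_i}}{|\F|}$ and $\beta = \tfrac{(4/\epsilon)^{p_i + 1}}{|\F|}$, in anticipation of two successive applications of Proposition \ref{prop:BKS18-eta}, one on $z_1$ and one on $z_2$.

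The plan is first to average in $z_2$. Setting $B(z_2) := \Pr_{z_1}[\omega_\theta(\fold{f, (z_1, z_2)}, C_{i+1}) > 1 - \delta + \epsilon]$, the assumption gives $\mathbb{E}_{z_2}[B(z_2)] > \alpha + \beta$, which by a standard averaging argument produces a set $S := \{z_2 \in \F : B(z_2) > \alpha\}$ of density $|S|/|\F| > \beta$. For each $z_2 \in S$, I would view $\fold{f, (z_1, z_2)}$ as a univariate polynomial of degree at most $p_i - 1$ in $z_1$ whose coefficients are $(f_0 + h(z_2), f_1, \ldots, f_{p_i-1})$, where $h(z_2) := \sum_{j=0}^{p_i - 1} z_2^{j+1} \nu_{i+1, j} f_j$ is absorbed into the constant term. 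Applying Proposition \ref{prop:BKS18-eta} with $l = p_i$ and $V = C_{i+1}$ yields, for each such $z_2$, a subset $T_{z_2} \subseteq \calP_{i+1}$ of $\theta$-measure at least $1 - \delta$ and codewords $v_0^{(z_2)}, v_1^{(z_2)}, \ldots, v_{p_i-1}^{(z_2)} \in C_{i+1}$ agreeing with these coefficient functions on $T_{z_2}$. Since $f_j$ for $j \geq 1$ is independent of $z_2$, the hypothesis $\delta < \tfrac{1}{2}(\lambda_i + \epsilon/2)$ provides enough overlap between any two $T_{z_2}$ to force, via a list-decoding uniqueness argument using the minimum distance, the $v_j^{(z_2)}$ to coincide with a single codeword $v_j \in C_{i+1}$ shared across all $z_2 \in S$.

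Next, I would treat $\psi(z_2) := f_0 + h(z_2) = f_0 + z_2 (\nu_{i+1, 0} f_0) + z_2^2 (\nu_{i+1, 1} f_1) + \cdots + z_2^{p_i}(\nu_{i+1, p_i-1} f_{p_i-1})$ as a univariate polynomial in $z_2$ of degree $p_i$ with $p_i + 1$ coefficient functions. For every $z_2 \in S$, the relation $\omega_\theta(\psi(z_2), C_{i+1}) \geq 1 - \delta$ is witnessed by $v_0^{(z_2)}$, and the density $|S|/|\F| > \beta$ matches Proposition \ref{prop:BKS18-eta}'s threshold for $l = p_i + 1$. A second application then yields a common subset $T \subseteq \calP_{i+1}$ of $\theta$-measure at least $1 - \delta$ and codewords of $C_{i+1}$ coinciding with each of $f_0, \nu_{i+1, 0} f_0, \ldots, \nu_{i+1, p_i - 1} f_{p_i - 1}$ on $T$. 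Intersecting $T$ with the data from the first application, each $f_j$ and each product $\nu_{i+1, j} f_j$ simultaneously agree with a codeword of $C_{i+1}$ on a common large set. Using the pole structure $\poles{\nu_{i+1, j}} = D_{i+1} - E_{i, j}$ from Definition \ref{def-div_compatible}, this double coincidence forces the corresponding function $w_j$ to lie in $L(E_{i, j})$. The decomposition \eqref{eq:decomposition} then supplies $\tilde f := \sum_{j=0}^{p_i - 1} \mu_i^j (w_j \circ \pi_i) \in L(D_i)$, and the interpolation identity from Notation \ref{not:IfP} together with the formula $\theta(P) = \tfrac{1}{p_i} \sum_{\hat P \in S_P} \eta(\hat P)$ show that $f = \tilde f$ on $\pi_i^{-1}(T) \subseteq \calP_i$, whose $\eta$-measure equals the $\theta$-measure of $T$ and is therefore at least $1 - \delta$. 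This contradicts the hypothesis.

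The hard part will be (a) calibrating the auxiliary $\epsilon$-parameters inside the two applications of Proposition \ref{prop:BKS18-eta} so that the thresholds align exactly with $\alpha$ and $\beta$ while preserving the final agreement cutoff $1 - \delta + \epsilon$, and (b) exploiting the condition $\delta < \tfrac{1}{2}(\lambda_i + \epsilon/2)$ jointly with the minimum distance of $C_{i+1}$ to secure weighted list-decoding uniqueness, which is the key to assembling the local codewords $w_j$ into a single well-defined $\tilde f \in L(D_i)$.
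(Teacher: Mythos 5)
Your proposal is correct and follows essentially the same route as the paper's own proof: arguing by contrapositive, applying Proposition \ref{prop:BKS18-eta} twice (once per coordinate of $\bfz$, splitting the bound into the two corresponding thresholds), using the hypothesis $\delta < \frac{1}{2}\left(\lambda_i + \frac{\epsilon}{2}\right)$ to force agreement of codewords on a large enough intersection, invoking Lemma \ref{lem:why_nuij} to place each recovered codeword in $L(E_{i,j})$, and reassembling a function of $L(D_i)$ via the decomposition \eqref{eq:decomposition} that agrees with $f$ on a set of $\eta$-measure at least $(1-\delta)\size{\calP_i}$. The only differences are immaterial: you fix $z_2$ and fold over $z_1$ first (the paper fixes $z_1$ and applies the proposition in $z_2$ first), and your cross-$z_2$ uniqueness step for the $v_j^{(z_2)}$, $j\geq 1$, is not actually needed, since a single good $z_2$ suffices, exactly as the paper uses a single good $z_1$.
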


Proving Corollary \ref{coro:pre-soundness} requires the lemma stated next. We prove Corollary \ref{coro:pre-soundness}, then prove Lemma \ref{lem:why_nuij}.

\begin{lemma}\label{lem:why_nuij}
	Let $i \in \range{r}$, $D_i \in \Div(\calX_i)$ and $\mu_i \in \F(\calX_i)$ satisfying Definition \eqref{def:mu-decomp}. Consider a divisor $D_{i+1}\in\Div(\calX_{i+1})$ that is $(D_i,\mu_i)$-compatible in the sense of Definition \ref{def-div_compatible}.
	
	Fix $j \in \range{p_i}$. Then a function $g\in \F(\calX_{i+1})$ belongs to $L(E_{i,j})$ if and only if both functions $g$ and $g \nu_{i+1,j}$ belong to $L(D_{i+1})$.
\end{lemma}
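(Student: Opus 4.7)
The proof is a valuation-by-valuation check, exploiting the defining property of $\nu_{i+1,j}$: its pole divisor equals exactly the gap $D_{i+1} - E_{i,j}$. I would split into the two implications and handle each by computing principal divisors explicitly.

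For the easy direction ($g \in L(E_{i,j}) \Rightarrow g, g\nu_{i+1,j} \in L(D_{i+1})$), I would first observe that $g \in L(D_{i+1})$ is immediate from $E_{i,j} \leq D_{i+1}$ (condition 1 of Definition \ref{def-div_compatible}), which gives the inclusion $L(E_{i,j}) \subseteq L(D_{i+1})$. For $g\nu_{i+1,j}$, I would write
\[
\div(g\nu_{i+1,j}) + D_{i+1} = \div(g) + \zeroes{\nu_{i+1,j}} - \poles{\nu_{i+1,j}} + D_{i+1} = \bigl(\div(g) + E_{i,j}\bigr) + \zeroes{\nu_{i+1,j}},
\]
using the hypothesis $\poles{\nu_{i+1,j}} = D_{i+1} - E_{i,j}$. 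Both summands on the right are effective, hence $g\nu_{i+1,j} \in L(D_{i+1})$.

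For the converse, I would argue pointwise on valuations. Write $D_{i+1} = \sum d_P P$ and $E_{i,j} = \sum e_P P$; the compatibility condition gives $e_P \leq d_P$ at every point, and $v_P(\nu_{i+1,j}) = e_P - d_P$ whenever $d_P > e_P$ (with $v_P(\nu_{i+1,j}) \geq 0$ otherwise). To conclude $g \in L(E_{i,j})$, I need $v_P(g) \geq -e_P$ for every $P$. At a point where $d_P = e_P$, this follows directly from $g \in L(D_{i+1})$. At a point where $d_P > e_P$, I use instead $g\nu_{i+1,j} \in L(D_{i+1})$, which gives
\[
v_P(g) \geq -d_P - v_P(\nu_{i+1,j}) = -d_P - (e_P - d_P) = -e_P,
\]
as required.

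The main subtlety (not so much an obstacle as a point requiring care) is the convention on $\poles{\cdot}$ as the \emph{effective} divisor of poles, and the fact that $E_{i,j}$ is not assumed effective: the case analysis must be phrased in terms of the sign of $d_P - e_P$ rather than of $e_P$ itself. Once this bookkeeping is clear, the proof amounts to noting that the balancing function $\nu_{i+1,j}$ precisely ``absorbs'' the discrepancy between $D_{i+1}$ and $E_{i,j}$ at every point where they differ, which is exactly why balancing functions are introduced in Definition \ref{def-div_compatible} in the first place.
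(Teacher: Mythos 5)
Your proof is correct and follows essentially the same route as the paper: the forward direction via $E_{i,j}\leq D_{i+1}$ and the identity $\poles{\nu_{i+1,j}}=D_{i+1}-E_{i,j}$, and the converse by comparing valuations at each point. The only difference is that the paper outsources the pointwise minimum argument to \cite[Lemma~2.6]{MP93} (i.e.\ $L(D)\cap L(D')=L(\min(D,D'))$), whereas you verify it directly, which is a perfectly fine and self-contained substitute.
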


\begin{proof}[Proof of Corollary \ref{coro:pre-soundness}]
	Let $f : \calP_i \rightarrow \F$ be an arbitrary function. According to Equation \eqref{eq:coeff-fj}, there exist $p_i$ function $f_j : \calP_{i+1} \rightarrow \F$ such that for any $\bfz=(z_1,z_2) \in \F^2$,
	\[\fold{f, \bfz} = \sum_{j = 0}^{p_i - 1}z_1^j f_j + \sum_{j = 0}^{p_i - 1}z_2^{j+1} \nu_{i+1,j} f_j.\]
	Rewrite $\fold{f, \bfz}$ as a polynomial function in $z_2$: \[\fold{f, \bfz}=f_{z_1} + z_2 f'_0+  z_2^2 f'_1 + \dots + z_2^{p_i} f'_{p_i-1},\] where \[f_{z_1} \mydef \sum_{j = 0}^{p_i - 1}z_1^j f_j, \qquad f'_j\mydef \nu_{i+1,j} f_j.\] Finally, set
	\[
	K_0 \mydef \frac{p_i - 1}{\size \F}\left(\frac{4}{\epsilon}\right)^{p_i} \qquad \text{ and } \qquad K_1 \mydef \frac{p_i}{\size \F}\left(\frac{4}{\epsilon}\right)^{p_i + 1} .
	\]    
	Let us prove the corollary by contrapositive. We assume that
	\[
	\Pr_{\bfz \in \F^2}\left[\omega_\theta(\fold{f, \bfz}, C_{i+1}) > 1 - \delta + \epsilon\right] > K_0 + K_1,
	\]
	and thus \[\Pr_{z_1 \in \F}\left[\Pr_{z_2 \in \F} \left[\omega_\theta(\fold{f, \bfz}, C_{i+1}) > 1 - \delta + \epsilon\right] > K_0\right] > K_1.\]

	Fix $z_1 \in \F$ such that \[\Pr_{z_2 \in \F} \left[\omega_\theta(\fold{f, \bfz}, C_{i+1}) > 1 - \delta + \epsilon\right] > K_0.\] By Proposition \ref{prop:BKS18-eta}, there exist $v_{z_1},v'_1,\dots,v'_{p_i-1} \in C_{i+1}$ and $\calT' \subset \calP$ such that
	\begin{itemize}
		\item $\sum_{P \in \calT'} \theta(P)  \ge (1 - \delta+ \frac{\epsilon}{2})\size{\calP_{i+1}}$,
		\item $\rest{v_{z_1}}{\calT'}=\rest{f_{z_1}}{\calT'}$,
		\item for each $j \in \Range{1}{p_i-1}$, $\rest{v'_j}{\calT'} = \rest{f'_j}{\calT'}$.
	\end{itemize}
	In particular, \[\omega_\theta(f_{z_1},C_{i+1}) \geq \omega_\theta(f_{z_1},v_{z_1}) =  \frac{1}{\size{\calP_{i+1}}}\sum_{P \in \calT'} \theta(P) \geq 1- \delta + \frac{\epsilon}{2}.\]   
	It means that
	\begin{align*}  \Pr_{z_1 \in \F} \left[\omega_\theta(f_{z_1},C_{i+1}) \geq 1-\delta+ \frac{\epsilon}{2} \right] &\geq \Pr_{z_1 \in \F}\left[ \Pr_{z_2 \in \F} \left[\omega_\theta(\fold{f, \bfz}, C_{i+1}) > 1 - \delta + \epsilon\right]  > K_0 \right] \\ &> K_1.
    \end{align*}
	The polynomial form of $f_{z_1}$ in $z_1$ enables us to reapply Proposition \ref{prop:BKS18-eta}: there exist $\calT \subset \calP$ and $v_0,v_1,\dots,v_{p_i-1} \in C_{i+1}$ such that
	\begin{itemize}
		\item $ \sum_{P \in \calT} \theta(P) \ge (1 - \delta)\size{\calP_{i+1}}$,
		\item for each $j \in \range{p_i}$, $\rest{v_j}{\calT} = \rest{f_j}{\calT}$.
	\end{itemize}
	On $\calT' \cap \calT$, we thus have \[\rest{v'_j}{\calT' \cap \calT}= \rest{f'_j}{\calT' \cap \calT}=\rest{(\nu_{i+1,j}f_j)}{\calT' \cap \calT}=\rest{(\nu_{i+1,j}v_j)}{\calT' \cap \calT}.\]
	The cardinality of $\calT' \cap \calT$ satisfies
	\[|\calT' \cap \calT| =|\calT'|+|\calT|-|\calT' \cup \calT| \geq \sum_{P \in \calT'} \theta(P)+\sum_{P \in \calT} \theta(P)-|\calP_{i+1}| \geq (1-2\delta+\frac{\epsilon}{2})|\calP_{i + 1}|.\]
	The assumption on $\delta$ ensures that $2\delta - \frac{\epsilon}{2} < \lambda_{i+1}$ where $\lambda_{i+1}$ is the minimal distance of $C_{i+1}$. Hence, for every $j \in \range{p_i}$, the evaluations of $v'_j$ and $\nu_{i+1,j}v_j$ on $\calP_{i+1}$ are equals. They are codewords of $C_{i+1}$, thus this implies that both functions $v_j$ and $\nu_{i+1,j}v_j$ belong to $L(D_{i+1})$. By Lemma \ref{lem:why_nuij}, we get that the function $v_j$ lies in $L(E_{i,j})$.
	
	Now let us define $v : \calP_i \rightarrow \F$ by 
	\[\forall Q \in \calP_i, \: v(Q) \mydef \sum_{j=0}^{p_i-1} \mu_i^j(Q) v_j \circ \pi_i(Q).\]
	By definition of the divisors $E_{i,j}$ \eqref{eq:def-Eij}, the function $v$ belong to $L(D_i)$. Now let us prove that it agrees with $f$ on $S_{\calT} \mydef \bigsqcup_{P \in \calT} S_P$.
	
	Let $P \in \calT$ and $\hat{P} \in S_P$. 
	\begin{align*}
		f(\hat P) &= I_{f, P}(\mu_i(\hat P))= \sum_{j=0}^{p_i-1} \mu_i(\hat{P})^j f_j (P) & &\text{ by definition of } I_{f,P},& \\
		&= \sum_{j=0}^{p_i-1} \mu_i(\hat{P})^j v_j \circ \pi_i(\hat{P}) & &\text{ since } \rest{f_j}{\calT} = \rest{v_j}{\calT} \text{ and } P=\pi_i(\hat{P}),& \\
		& = v(\hat P).&& &
	\end{align*}
	
	As a result, since $v \in C_i$, we can conclude that
	\[\omega_\eta(f,C_i) \geq \omega_\eta(f,v) \geq \frac{1}{\size{\calP_i}} \sum_{P \in \calT} \sum_{\hat{P} \in S_P} \eta(\hat{P}) = \frac{1}{\size{\calP_{i+1}}}   \sum_{P \in \calT} \theta(P) \geq 1-\delta. \]
\end{proof}

\begin{proof}[Proof of Lemma \ref{lem:why_nuij}] Assume that $g \in L(E_{i,j})$. Then Definition \ref{def-div_compatible} ensures that $g$ and $g\nu_{i+1,j}$ lie in $L(D_{i+1})$.
	
	Conversely, assume that $g$ and $g\nu_{i+1,j}$ belong to  $L(D_{i+1})$ and write $D_{i+1} = \sum n_P P$. The hypotheses on $g$ imply that \[g \in L(D_{i+1}) \cap L(D_{i+1} - (\nu_{i+1,j})).\] By \cite[Lemma~2.6]{MP93}, the function $g$ belongs to $L(D'_{i+1})$, where the divisor $D'_{i+1}$ is defined by
	\[D'_{i+1} \mydef \sum_P n'_P P \quad \text{ where } n'_P \mydef \min(n_P,n_P+v_P(\nu_{i+1,j})).\]
	Then $D'_{i+1} = D_{i+1} - \poles{\nu_{i+1,j}} = E_{i,j}$ by the second item of Definition \ref{def-div_compatible}.
\end{proof}

\subsection{IOPP for foldable AG codes}\label{subsec:iopp-overview}

Let $C_0 = C(\calX_0, \calP_0, D_0)$ be a foldable AG code over an alphabet $\F$. 
Given a family of folding operators defined as per Definition \ref{def:folding-operator}, \cite{ABN21} yields an IOPP for $C_0$, which is abstracted from the \textsf{FRI} protocol of \cite{BBHR18a}. We informally describe the IOPP system $(\prover, \verifier)$ for testing proximity of a function $f^{(0)} : \calP_0 \rightarrow \F$ to $C_0$, then give its properties. A formal description will be provided in Section \ref{sec:kummer-iopp} for instantiations with concrete AG codes.

As in the \textsf{FRI} protocol, the IOPP is divided in two phases, referred to as \textsf{COMMIT} and \textsf{QUERY}. 
Before any interaction, $\prover$ and $\verifier$ agree on:
\begin{itemize}
	\item a $\XGseq$-sequence of curves $(\calX_i)$, for which we denote the length of the composition serie of $\calG$ by $r$.
	\item a sequence of codes $(C_i)$ where for each $i \in \set{0, \dots, r}$, $C_i = (\calX_i, \calP_i, D_i)$ and $\calX_i, \calP_i$ and $D_i$ are defined as per Section \ref{sec:foldable},
	\item a sequence of functions $(\mu_i) \in \F(\calX_i)$ satisfying Definition \ref{def:mu-decomp},
	\item a sequence of balancing functions $(\nu_{i+1})_{0 \le i < r}$ of $p_i$-tuples of functions in $\F(\calX_{i+1})$ such that $\nu_{i+1} = (\nu_{i+1, j})_{0 < j < p_i}$ and $\nu_{i+1, j}$ satisfies \eqref{eq:def-nu_i,j}.
\end{itemize}
We recall that the choice of a sequence $(\calX_i)$ induces a sequence of projections $\pi_i : \calX_i \rightarrow \calX_{i+1}$.

\begin{itemize}
	\item The \textsf{COMMIT} phase 
	is an interaction over $r$ rounds between $\prover$ and $\verifier$. For each round $i \in \range{r}$, the verifier samples a random challenge $\bfz^{(i)} \in \F^2$. As an answer, the prover gives oracle access to function $f^{(i+1)} : \calP_{i+1} \rightarrow \F$, which is expected to be equal to $\fold{f^{(i)}, \bfz^{(i)}}$. To compute the values of $f^{(i+1)}$ on $\calP_{i+1}$, an honest prover $\prover$ exploits the fact that the folding of $f^{(i)}$ is locally computable (Lemma \ref{lem:locality}). Namely, for each $P \in \calP_{i+1}$, $\prover$ computes the coefficients $(a_{j,P})_{0 \le j < p}$ of  $I_{f^{(i)}, P} \in \F[X]$ from $\rest{f^{(i)}}{S_P}$, evaluates $\nu_{i+1,j}$ at $P$, and sets 
	\[
	\fold{f^{(i)}, \bfz^{(i)}}(P) \mydef\sum_{j=0}^{p_i-1} \left(z_1^{(i)}\right)^j a_{j, P} + \sum_{j=0}^{p_i-1} \left(z_2^{(i)}\right)^{j+1} \nu_{i+1,j}(P) a_{j, P}.
	\] 
	\item During the \textsf{QUERY} phase, one of the two tasks of the verifier $\verifier$ is to check that each pair of successive oracle functions $(f^{(i)}, f^{(i+1)})$ is consistent. A standard idea is to check that the equality 
	\begin{equation}\label{eq:test-equality}
		f^{(i+1)} = \fold{f^{(i)}, \bfz^{(i)}}
	\end{equation}
	holds at a random point in $\calP_{i+ 1}$.
	By leveraging the local property of the folding operator, such a test requires only $p_i$ queries to $f^{(i)}$ and 1 query to $f^{(i + 1)}$. As in \cite{BBHR18a}, we call this step of verification a \emph{round consistency test}. The verifier begins by sampling at random $Q_0 \in \calP_0$ and once this is done, all the locations of the round consistency tests run inside the current \textsf{\textbf{query test}} are determined. More specifically, for each round $i$,  $\verifier$ defines $Q_{i+1} \mydef \pi_i(Q_i)$ to be the random point where Equation (\ref{eq:test-equality}) is checked. Through this process, the round consistency tests are correlated to improve soundness. Such a \textsf{\textbf{query test}} can be seen as a \emph{global} consistency test, similar to the one of the \textsf{FRI} protocol. For the final test, $\verifier$ reads $f^{(r)} : \calP_r \rightarrow \F$ in its entirety to test if $f^{(r)} \in C_r$.
\end{itemize}

\begin{theorem}\label{thm:properties}
	Let ${C_0 = C(\calX_0, \calP_0, D_0)}$ be a foldable AG code of length $n \mydef \size{\calP_0}$. By definition, $C_0$ admits a solvable group $\calG \in \Aut(\calX_0)$ such that $\size{\calG} > n^e$ for a certain $e \in (0,1)$ and induces a sequence of codes $(C_i)$. Set $p_{max}\mydef \max p_i$, ${\lambda \mydef \min_i \Delta(C_i)}$ and \[\gamma \mydef \min\left(J_{\epsilon}^{p_{max}}(\lambda), \frac{1}{2}(\lambda + \frac{\epsilon}{2})\right).\]
	There is an IOPP system $(\prover, \verifier)$ for $C_0$ satisfying:
	\begin{description}
		\item \textbf{\emph{Perfect completeness:}} If $f^{(0)} \in C_0$ and $f^{(1)}, \dots, f^{(r)}$ are honestly generated by the prover, the verifier outputs $\mathsf{accept}$ with probability 1.
		\item \emph{\textbf{Soundness:}} Assume $f^{(0)}$ is $\delta$-far from $C_0$ and let $\epsilon \in (0,1)$. With probability  at least $1 - \err_{commit}$ over the randomness of the verifier during the \textsf{COMMIT} phase, where 
		\[
		\err_{commit} \leq \frac{\log n}{\size \F}\left(p_{max} + \frac{4}{\epsilon} - 1\right)\left(\frac{4}{\epsilon}\right)^{p_{max}}
		\] 
		and for any oracles $f^{(1)}, \ldots, f^{(r)}$ adaptively chosen by a possibly dishonest prover $\prover^*$, the probability that the verifier $\verifier$ outputs $\mathsf{accept}$ after a single $\textsf{query test}$ is at most
		\[
		\err_{query}(\delta) \leq (1 - \min(\delta, \gamma) + \epsilon \log n).
		\]        
		Overall, for any prover $\prover^*$, the soundness error
		$\err(\delta)$ after $t$ repetitions of the \textsf{QUERY} phase satisfies
		\begin{align*}
			\err(\delta) &\leq \err_{commit} + \left(\err_{query}(\delta)\right)^t\\
			&< \frac{\log n}{\size \F}\left(p_{max} + \frac{4}{\epsilon} - 1\right)\left(\frac{4}{\epsilon}\right)^{p_{max}} + (1 - \min(\delta, \gamma) + \epsilon\log n)^t.
		\end{align*}
	\end{description}
	
	Moreover, the IOPP system is public-coin, has round complexity $\mathsf{r}(n) < \log n$, proof length $\mathsf{l}(n) < n$ and query complexity ${\mathsf{q}(n) < t p_{max}\log n + n^{1 - e}}$.
\end{theorem}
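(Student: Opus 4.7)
The plan is to describe the IOPP of Section~\ref{subsec:iopp-overview} explicitly and then verify the three pillars supplied by the folding operator of Definition~\ref{def:folding-operator}, namely completeness (Proposition~\ref{prop:folding-completeness}), locality (Lemma~\ref{lem:locality}), and distance preservation (Corollary~\ref{coro:pre-soundness}). Perfect completeness I would prove by a direct induction on $i\in\range{r}$: an honest prover sets $f^{(i+1)}=\fold{f^{(i)},\bfz^{(i)}}$, so Proposition~\ref{prop:folding-completeness} gives $f^{(i+1)}\in C_{i+1}$ whenever $f^{(i)}\in C_i$; consequently $f^{(r)}\in C_r$ and the final membership test accepts, while every round-consistency test accepts by construction.

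For soundness, I would track weighted agreements along the chain of codes. Define $\eta^{(0)}\equiv 1$ on $\calP_0$ and, for $i\in\range{r}$, $\eta^{(i+1)}(P)\mydef \frac{1}{p_i}\sum_{\hat P\in S_P}\eta^{(i)}(\hat P)$ on $\calP_{i+1}$; this is the weight function that appears naturally in Corollary~\ref{coro:pre-soundness} and that encodes the probability that a single \textsf{query test} reaches a given point along the chain $Q_0\mapsto Q_1\mapsto\cdots\mapsto Q_r$. Call the $i$-th round \emph{bad} if $\omega_{\eta^{(i)}}(f^{(i)},C_i)\le 1-\min(\delta,\gamma)+i\epsilon$ but $\omega_{\eta^{(i+1)}}(f^{(i+1)},C_{i+1})>1-\min(\delta,\gamma)+(i+1)\epsilon$; since $f^{(i+1)}$, regardless of its honesty, can only contribute weighted agreement that is bounded by $\omega_{\eta^{(i+1)}}(\fold{f^{(i)},\bfz^{(i)}},C_{i+1})$ plus the average of $\eta^{(i+1)}$ on the points where $f^{(i+1)}$ disagrees with $\fold{f^{(i)},\bfz^{(i)}}$ (and the latter disagreements lower the query acceptance probability correspondingly), Corollary~\ref{coro:pre-soundness} bounds the probability of a bad round by $\frac{1}{\size\F}(p_i+4/\epsilon-1)(4/\epsilon)^{p_i}\le \frac{1}{\size\F}(p_{max}+4/\epsilon-1)(4/\epsilon)^{p_{max}}$. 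A union bound over $r<\log n$ rounds yields the stated $\err_{commit}$. Conditioned on no bad round, $\omega_{\eta^{(r)}}(f^{(r)},C_r)\le 1-\min(\delta,\gamma)+r\epsilon$, and since each round consistency test that is violated decreases the mass of points whose ``chain'' still agrees, a single \textsf{query test} passes with probability at most this final weighted agreement (or $0$ if $f^{(r)}\notin C_r$, which is checked deterministically); this gives $\err_{query}(\delta)\le 1-\min(\delta,\gamma)+\epsilon\log n$. Independence of the $t$ query-phase repetitions (all performed after the common \textsf{COMMIT} phase) delivers the final bound $\err(\delta)\le \err_{commit}+\err_{query}(\delta)^t$.

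The complexity estimates come straight from the construction together with the hypothesis $\size\calG=\prod_i p_i>n^e$ and $\size{\calG}\mid n$. Since each $p_i\ge 2$, we get $2^r\le \size\calG\le n$, hence $r\le \log_2 n<\log n$. The proof length is $\sum_{i=1}^{r}\size{\calP_i}=n\sum_{i=1}^{r}\prod_{j<i}p_j^{-1}<n$ as a geometric series with ratio at most $1/2$. A single query test requires $p_i$ fresh queries to $f^{(i)}$ for each round (the query to $f^{(i+1)}$ at $Q_{i+1}$ is reused as one of the $p_{i+1}$ preimages at the next round), so the per-test cost is $\sum_{i=0}^{r-1}p_i\le p_{max}\log n$; multiplied by $t$ repetitions and added to the one-time reading of $f^{(r)}$ over $\size{\calP_r}=n/\size{\calG}<n^{1-e}$ points, this gives the announced $\mathsf{q}(n)<tp_{max}\log n + n^{1-e}$.

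The main obstacle is the bookkeeping in the second paragraph, where one must propagate the weighted-agreement bound through $r$ successive foldings while ensuring that the threshold $\min(\delta,\gamma)$ of Corollary~\ref{coro:pre-soundness} is never violated, and simultaneously relate the residual agreement at level $r$ to the acceptance probability of a single correlated \textsf{query test}; executing this cleanly is exactly what the abstract framework of \cite{ABN21} accomplishes, so the cleanest presentation is to cast the three verified properties of our folding operators as its hypotheses and invoke it directly.
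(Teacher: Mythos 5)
Your proof is correct and follows essentially the same route as the paper: for completeness and soundness the paper simply verifies that Lemma~\ref{lem:locality}, Proposition~\ref{prop:folding-completeness}, and Corollary~\ref{coro:pre-soundness} satisfy the hypotheses of \cite[Theorem~1]{ABN21} and invokes it, and your closing paragraph arrives at exactly this conclusion (your second paragraph is a reasonable but informal reconstruction of what that cited theorem does internally, not a separate argument the paper relies on). Your accounting of rounds, proof length, and query complexity — bounding $r$ via $2^r \le \size{\calG}$, summing $\size{\calP_i} = n/\prod_{j<i}p_j < n\sum 2^{-i}$, and noting that $f^{(i+1)}(Q_{i+1})$ is one of the $p_{i+1}$ preimage queries of round $i+1$ so the per-test cost is $\sum_i p_i \le p_{max}\,r$ plus reading $f^{(r)}$ of size $n/\size{\calG} < n^{1-e}$ — matches the paper's computations.
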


\begin{proof} Lemma \ref{lem:locality}, Proposition \ref{prop:folding-completeness} and Corollary \ref{coro:pre-soundness} satisfy the conditions of \cite[Theorem~1]{ABN21}. Completeness and soundness are given by \cite[Theorem~1]{ABN21}. Let us prove the rest of the theorem.  
	Regarding round complexity, we have that \[\prod_{i = 0}^{r-1} p_i = \frac{n}{n_r},\] where \[n_r = \size{\calP_r} = \frac{n}{\size{\calG}} < n^{1 - e}.\] For every $i \in \range{r}$, $2 \le p_i \le p_{max}$. Therefore \[\mathsf{r}(n) \le \log_2 n - \log_2 n_r < \log_2 n.\]  
	
	For query complexity, notice that for $i \in \Range{0}{r-2}$, $\xpip{f}(Q_{i+1})$ is reused for the next round consistency test.  Hence, \[\mathsf{q}(n) = t \left(\sum_{i = 0}^{r-1} p_i\right) + n^{1 - e} \le t r p_{max} + n^{1 - e}.\]

	Finally, the total proof length $\mathsf{l}(n)$ is the sum of the lengths of all the oracles provided by $\prover$ during the \textsf{COMMIT} phase, counted in field elements. Denoting $t_{i+1} \mydef \prod_{j = 0}^{i} p_j$, we notice that \[\size{\calP_{i+1}} = \frac{\size{\calP_i}}{p_i} = \frac{\size{\calP_0}}{t_{i+1}}.\] Thus, we have
	\[\mathsf{l}(n) = \sum_{i = 1}^{r}\size{\calP_{i}} = \sum_{i = 1}^{r} \frac{\size{\calP_0}}{t_i} \le n \sum_{i = 1}^{r} \frac{1}{2^{i}} = n \left(1 - \frac{1}{2^r}\right) < n.\]      
\end{proof}

\section{Proximity tests for AG codes on Kummer curves and Hermitian towers}\label{sec:kummer-iopp}

When we instantiate the AG-IOPP proposed in Section \ref{subsec:iopp-overview} for the setting of Kummer curves (Section \ref{subsec:Kummer}) and curves in the Hermitian tower (Section \ref{subsec:HermitianTower}), we end up with a membership test to a RS code. An RS code is itself a foldable AG code (see Section \ref{ex:RS-foldable}). In order to lower verifier complexity, we can extend the AG-IOPP by replacing the final test by an IOPP for RS code. This enhanced AG-IOPP is examined in this section.

\subsection{How to iterate the folding to reach a code of dimension 1}\label{subsec:fast-iopp}
Let $C=C(\calX,\calP,\calG)$ be foldable in the sense of Definition \ref{def:good_properties} on a Kummer curve or on a curve in the Hermitian tower. In Sections \ref{subsec:Kummer} and \ref{subsec:HermitianTower}, we defined $s$ codes $(C_i)_{0\le i\le s}$, where $s$ is the the number of prime in the decomposition of $N$ in the Kummer case and $s=\ii$ for the Hermitian tower. The code $C_s = C(\PP^1, \calP', D')$ corresponds to a Reed-Solomon code $\RS{\calP', d} = \set{f : \calP_s \rightarrow \F; \deg f \le d}$, where the degree bound depends on the parameters of the original code $C_0$. Taking this into consideration, we want to iterate the folding operation until we get a RS code of dimension 1, as it is done in the \textsf{FRI} protocol \cite{BBHR18a}. 

As in Example \ref{ex:RS-foldable}, we set $d_0 = d$ and define $d_{i+1} = \floor{\frac{d_i}{2}}$ for any integer $i$. Set $s'$ the smallest integer such that $d_{s'} = 0$. Then, we consider the sequence of Reed-Solomon codes $(C_{s + i})_{1 \le i \le s'}$ when applying the construction described in Section \ref{sec:foldable} to the initial code $C_s$. Letting $r = s + s'$, we iteratively reduce the proximity test to the code $C_0$ to a membership test to the code $C_{r}$, which is a Reed-Solomon code of dimension 1. If $f^{(0)} \in C_0$, then $f^{(r)}$ is expected to be a constant function, and this can be tested in a trivial way. We can leverage the fact that $C_r$ is a Reed-Solomon code to extend the protocol described in Section \ref{subsec:iopp-overview}. We obtain a $r$-round IOPP system $(\prover, \verifier)$ for $C_0$, which is described below.

The prover $\prover$ and the verifier $\verifier$ are given as input the description of the code $C_0$.      
The verifier $\verifier$ is given oracle access to a function $f^{(0)} : \calP_0 \rightarrow \F$, which is also given as explicit input to the prover $\prover$.      

\paragraph*{\textsf{COMMIT} phase:}
\begin{enumerate}
	\item For each round $i$ from 0 to  $r-1$ :
	\begin{enumerate}
		\item $\verifier$ picks uniformly at random $\bfz^{(i)}$ in $\F^2$ and sends it to $\prover$,
		\item $\prover$ computes $\xpip{f} = \fold{\xpi{f}, \xpi{\bfz}}$,
		\item If $i < r - 1$: $\prover$ gives oracle access to $f^{(i + 1)} : \calP_{i + 1} \rightarrow \F$.
		\item If $i = r - 1$: $\prover$ commits to $\beta \in \F$ (if $f^{(0)} \in C_0$, then $f^{(r)}$ is supposed to be constant equal to $\beta$).                
	\end{enumerate}
\end{enumerate}

\paragraph*{\textsf{QUERY} phase:}
\begin{enumerate}
	\item Repeat $t$ times the following \textbf{\textsf{query test}}: 
	\begin{enumerate}
		\item Pick $Q_0 \in \calP_0$ uniformly at random.
		\item For $i = 0$ to $r - 1$, run the following \emph{round consistency test}:
		\begin{enumerate}
			\item Define $Q_{i + 1} \in \calP_{i + 1}$ by $Q_{i + 1} = \pi_i(Q_i)$,
			\item Query $f^{(i+1)}$ to get $f^{(i+1)}(Q_{i+1})$ and query $f^{(i)}$ at points $\hat Q \in S_{Q_{i+1}}$,
			\item[] (if $i = r -1$, set $f^{(r)}(Q_{r}) = \beta$)
			\item Compute the value $\fold{\xpi{f}, \bfz^{(i)}}(Q_{i+1})$,
			\item If $i < r - 1$: return \textsf{\textbf{reject}} if and only if $f^{(i+1)}(Q_{i+1}) \neq \fold{f^{(i)}, \bfz^{(i)}}(Q_{i+1})$
			\item If $i = r - 1$: return \textsf{\textbf{reject}} if and only if $\beta \neq \fold{f^{(i)}, \bfz^{(i)}}(Q_{i+1})$
		\end{enumerate}
	\end{enumerate}
	\item Return \textsf{\textbf{acccept}}. 
\end{enumerate}

\subsection{Properties of the AG-IOPP with Kummer curves}

Assume $C_0 = C(\calX_0, \calP_0, D_0)$ is a foldable AG code of blocklength $n_0 = \size{\calP_0}$ on a Kummer curve $\calX_0$ (cf. Proposition \ref{prop:Kummer_foldable}). This means that $\calX_0$ is defined by an equation $y^N = f(x)$, where $f \in \F[X]$ is a separable degree-$m$ polynomial, $m \equiv -1 \mod N$, $N$ is coprime with $\size{\F}$, $\size{\calP_0} = \alpha N$ for some integer $\alpha$, and $\deg D_0 < \alpha N$. Assume $\alpha$ is a power of 2 and $N$ is a $\eta$-smooth integer for a small fixed parameter $\eta \in \mathbb N$.

Proposition \ref{prop:dist_min_Kummer} states that the relative minimum distances of the codes $C_i$ are all equal to $\Delta(C_0) = 1 - \frac{\deg D_0}{\alpha N}$. Therefore, the ordering on the integers involved in the prime decomposition $\prod_{i = 0}^{s-1}p_i$ of $N$ does not impact the parameters of the protocol. Moreover, the code $C_{s} = C(\calX_{s}, \calP_{s}, D_{s})$ corresponds to a RS code \[
C_{s} = \RS{\calP_{s}, \frac{\deg D_0}{N}} = \set{f : \calP_{s} \rightarrow \F ; \deg f \le \frac{\deg D_0}{N}}
\] of blocklength $\size{\calP_s} = \alpha$, which is itself a foldable AG code (see Example \ref{ex:RS-foldable}). 

\begin{theorem}[Kummer case]\label{thm:kummer-properties}
	Let $C= (\calX_0, \calP_0, D_0)$ be a foldable AG code on a Kummer curve satisfying the hypotheses of Proposition \ref{prop:Kummer_foldable} with $N$ a $\eta$-smooth integer. Denote $n = \size{\calP_0}$.
	The IOPP $(\prover, \verifier)$ described in Section \ref{subsec:fast-iopp} has perfect completeness and soundness as stated in Theorem \ref{thm:properties}. Moreover, for $t$ repetitions of the \textsf{QUERY} phase, we have:
	\[
	\begin{array}{lll}
		\text{rounds complexity} &\mathsf{r}(n) &< \log n,\\
		\text{proof length} &\mathsf{l}(n) &< n,\\
		\text{query complexity} &\mathsf{q}(n) &\le t \eta \log_2 n + 1,\\
		\text{prover complexity} &\mathsf{t_p}(n) &= O_{\eta}(n),\\
		\text{verifier decision complexity} & \mathsf{t_v}(n) &= O_{\eta}(t\log n).
	\end{array}
	\]
\end{theorem}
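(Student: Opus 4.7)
The plan is to verify each of the five bounds claimed in Theorem \ref{thm:kummer-properties} by accounting for the concatenation of the AG folding phase ($s$ rounds through the Kummer quotients) with the Reed--Solomon tail ($s'$ rounds of classical \textsf{FRI}-style folding) defined in Section \ref{subsec:fast-iopp}. Completeness and soundness are inherited directly from Theorem \ref{thm:properties}: I would first note that every intermediate code $C_i$ in the Kummer phase is a foldable AG code and, by Proposition \ref{prop:dist_min_Kummer}, they all share the same relative minimum distance $1 - \frac{\deg D_0}{\alpha N}$; then the terminal code $C_s$ is a Reed--Solomon code which is itself foldable by Example \ref{ex:RS-foldable} and shares the same rate and distance with its own sequence of RS foldings. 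Thus the full sequence $(C_0, \ldots, C_r)$ falls under the scope of Theorem \ref{thm:properties}, and the only discrepancy with the abstract protocol is that here the terminal code $C_r$ has dimension one, so reading $f^{(r)}$ entirely reduces to comparing a single committed scalar $\beta$ against a folding; this preserves the soundness/completeness analysis intact.

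For round complexity, factor $N = \prod_{i=0}^{s-1} p_i$ with each $p_i \ge 2$ prime, so $s \le \log_2 N$, and observe that the RS tail halves $\size{\calP_i}$ at every round starting from $\size{\calP_s} = \alpha$ (a power of $2$), contributing exactly $s' = \log_2 \alpha$ rounds; summing gives $\mathsf{r}(n) \le \log_2(N\alpha) = \log_2 n$. For proof length, since $\size{\calP_{i+1}} = \size{\calP_i}/p_i \le \size{\calP_i}/2$, a geometric summation $\sum_{i=1}^{r-1} \size{\calP_i} + 1 \le n(1 - 2^{-(r-1)}) + 1 < n$ bounds the total committed data (the last $+1$ accounting for the scalar $\beta$).

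For query complexity, I would analyze a single \textsf{query test}: round $0$ consumes $p_0$ queries to $f^{(0)}$, and for each $i \ge 1$ the point $Q_i \in S_{Q_{i+1}}$ has already been queried during the previous consistency check (since $Q_{i+1} = \pi_i(Q_i)$), so round $i$ requires only $p_i - 1$ additional queries to $f^{(i)}$. Summing gives $\sum_{i=0}^{r-1} p_i - (r-1) \le \eta r \le \eta \log_2 n$ queries per test, and since $\beta$ is read only once and reused across all $t$ \textsf{query tests}, the total is $\mathsf{q}(n) \le t\eta \log_2 n + 1$.

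For prover and verifier time, at each round the per-point work is dominated by a polynomial interpolation on $p_i \le \eta$ pairs (cost $O_\eta(1)$), evaluations of the $p_i$ balancing functions $\nu_{i+1,j}$ (which by Lemma \ref{lem:compatible-Kummer} are explicit low-degree products of $(x - \alpha_\ell)$ factors, hence $O_\eta(1)$ each), and an $O(p_i)$ linear combination; the prover does this at every point of $\calP_{i+1}$, for a cost $O_\eta(\size{\calP_i})$ at round $i$, and a geometric summation yields $\mathsf{t_p}(n) = O_\eta(n)$. The verifier performs the same elementary work once per round per \textsf{query test}, giving $\mathsf{t_v}(n) = O_\eta(t \log n)$. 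I expect the main obstacle to be not any individual computation above but the careful bookkeeping needed to guarantee that the soundness bound of Theorem \ref{thm:properties} remains meaningful when applied uniformly across both the AG phase and the RS tail: one must verify that the constant $\gamma = \min\bigl(J_\epsilon^{p_{max}}(\lambda), \tfrac{1}{2}(\lambda + \tfrac{\epsilon}{2})\bigr)$ is controlled by a single common $\lambda = \Delta(C_0)$, which is precisely why Proposition \ref{prop:dist_min_Kummer} is decisive, and that the order of the primes $p_i$ in the composition series is immaterial for the final bound.
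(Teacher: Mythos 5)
Your proposal is correct and follows essentially the same route as the paper: completeness and soundness are inherited from Theorem \ref{thm:properties} (via Proposition \ref{prop:dist_min_Kummer} and the foldability of the terminal RS code), the counting bounds follow from the prime factorization of $N$ and geometric summation, and prover/verifier time is dominated by degree-$(p_i-1)$ Lagrange interpolation at each point, with the balancing functions handled as precomputable data. The only cosmetic deviations are your claim that the RS tail has exactly $s'=\log_2\alpha$ rounds (in fact $s'$ is governed by when the degree bound reaches $0$, which only strengthens the inequality $\mathsf{r}(n)<\log n$) and your query accounting, which subtracts reused queries rather than, as the paper does, observing that the single query to $f^{(i+1)}$ is absorbed into the next round's batch; both yield the stated bounds.
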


\begin{proof}   
	Noticing that the round complexity is now $\mathsf{r}(n) = s + s'$, straightforward calculations show that complexity, query complexity and proof length computed in the proof \ref{thm:properties} still hold. Completeness and soundness also follow from \cite{ABN21}.
	%     Regarding soundness, \modifier{the only change in Section \ref{subsec:proof-soundness}} is that we set $f^{(r)}$ to be the constant function equal to $\beta$. Thus $f^{(r)} \in C_r$, and the verifier accepts if and only if all round consistency tests passes.       
	We estimate prover complexity and verifier complexity below.
	
	\proofsubparagraph*{Prover complexity.}
	Fix a round index $i < r - 1$. The balancing functions $\nu_{i+1, j} : \calP_{i+1} \rightarrow \F$ can be precomputed since they do not depend on $f^{(i)}, \bfz^{(i)}$ (see Remark \ref{rem:cost-balancing}).       
	To simplify notation, denote $f = \xpi{f}$. For any $\bfz = (z_1, z_2) \in \F^2$, computing the successive powers $(z_1^j, z_2^j)_{0 \le j < p_i}$ takes $2(p_i - 2)$ multiplications.         
	For each $P \in \calP_{i+1}$, an honest prover must compute the coefficients $(a_{j,P})_{0 \le j, < P}$ of the polynomial $I_{f, P}(X)$ of degree $\deg I_{f,P} < p_i$ from the interpolation set $\set{(\mu_i(\hat P), f(\hat P)) \mid \hat P \in S_{P}}$ of size $p_i$. Notice that $\mu_i = y$, so computing $\mu_i(\hat P)$ for $\hat P \in S_P$ is done for free. Univariate interpolation for a polynomial of degree $<p_i$ can be done in $O(p_i^2)$ by Lagrange interpolation.    
	Overall, one can honestly evaluate $\fold{f, \bfz} : \calP_{i + 1} \rightarrow \F$ with $\size{\calP_{i+1}}O(p_i^2)$ operations in $\F$. We showed previously that $\sum_{i=1}^{r - 1}\size{\calP_i} < n$, thus when summing over $r - 1$ rounds, we get that the cost of (honestly) generating the oracles $f^{(1)}, \ldots, f^{(r - 1)}$ is $O_{\eta }(n)$.        
	
	\proofsubparagraph*{Verifier decision complexity.} 
	
	Verifier complexity is inferred from the previous discussion about prover complexity. For each round, the verifier computes the successive powers of $z_1$ and $z_2$, interpolates $I_{f,P}$ for a point $P \in \calP_{i+1}$ in $O(p_i^2)$ operations, then computes $\fold{f, \bfz}(P)$ in a number of operations which is independent of $n$. Hence, verifier complexity for repetition parameter $t$ is $\mathsf{t_v}(n) = O_{\eta}(t\log(n))$.   
\end{proof}

\begin{remark}\label{rem:cost-balancing}
	We give the cost of precomputing the evaluation tables of the balancing functions. Letting $\nu_{i+1, j}$ be as defined in proof of Lemma \ref{lem:compatible-Kummer}, the sequence of functions $(\nu_{i+1, j})_{0 < j < p_i}$ can be evaluated at the same point $P \in \calP_{i + 1}$ in time $O(\log m + p_i)$ using exponentiation by squaring. Thus, the evaluations of $\nu_{i+1, 1}, \dots \nu_{i+1, p_i - 1}$ on $\calP_{i+1}$ are obtained with $O((\log m + p_i)\size{\calP_{i+1}})$ operations.
\end{remark}

We give an example of an AG code over a Kummer curve where $p_{max} = 2$.

\begin{example} On $\F_{q^2}$ with $q=2^{61}-1$ ($9^{th}$ Mersenne prime), we consider the curve
	\[\calX_0 : y^N=x^3+x\]
	where $N=2^{r}$ with $r=16$. It is maximal \cite{TT14} of genus $g=N-1$. We consider the code $C_0$ associated to $D_0= 2^{17} P_\infty^0$ on an evaluation set $\mathcal{P}_0 \subset \calX_0(\F_{q^2})$ of size $n=2^{20}$. Its dimension equals $\dim C_0=2^{16}+2$ and its relative minimum distance $\lambda$ is bounded from below by $1- 2^{-3}$. Take $\epsilon = 2 ^{-6.55}$. By Theorem \ref{thm:properties},
	\[\err_{commit} \leq \frac{\log(n)}{\size{\F_{q^2}}}\left(1+\frac{4}{\epsilon}\right) \left(\frac{4}{\epsilon}\right)^2 \approx 2^{4.33+6+3\cd 6.55 -121} \leq 2^{-91}\]

	\[\err_{query}(\delta) \leq \left(1-\delta+ \epsilon\log(n)\right)\]
	where $1-\delta=(1-\lambda + \epsilon)^{\frac{1}{3}} \leq 0,51384$. Hence
	\[\err_{query}(\delta) \leq 0,51384+ \frac{20}{2^{6.55}} \approx 0,72728.\]
	By running the \textsf{QUERY} phase with repetition parameter $t \geq 199$, we get $(\err_{query})^t \leq 2^{-91}$ and $\err(\delta)\leq 2^{-90}.$
	The last code $C_r$ is a small Reed-Solomon code of length $n_r = 2^{4}$ and dimension $2$. The total number of rounds of the IOPP is thus $r + 1$.
\end{example}

\subsection{Properties of the AG-IOPP with towers of Hermitian curves}

\begin{theorem}\label{thm:tower-properties}
	Let $C=(\calX, \calP, D)$ be a foldable AG code with alphabet $\F = \F_{q^2}$ on a tower of Hermitian curves satisfying the hypotheses of Proposition \ref{prop:tower-family}. Letting $\ii$ be the index of the curve $\calX$ in the Hermitian tower $(\calX_i)_{i \ge 0}$, the length $n = \size \calP$ of $C$ is at most $q^{\ii + 2}$. %Let $e \in (0,1)$ be such that $\size{\calG} > n^e$, where $\calG$ is the solvable group provided by Definition \ref{def:good_properties}. 
	The IOPP $(\prover, \verifier)$ described in Section \ref{subsec:fast-iopp} has perfect completeness, and soundness as stated in Theorem \ref{thm:properties}. Moreover, we have:
	\[
	\begin{array}{lll}
		\text{rounds complexity} &\mathsf{r}(n) &< \log n,\\
		\text{proof length} &\mathsf{l}(n) &< n,\\
		\text{query complexity} &\mathsf{q}(n) &\le t q \log n + 1,
		\\
		\text{prover complexity} &\mathsf{t_p}(n) &= O(n \cdot \mathsf{M}_\F(q)\log(q)),\\
		\text{verifier decision complexity} & \mathsf{t_v}(n) &= O(\log n \cdot\mathsf{M}_\F(q)\log(q)).
	\end{array}
	\]
\end{theorem}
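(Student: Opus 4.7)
The plan is to follow the same three-part template as the proof of Theorem \ref{thm:kummer-properties}, invoking Theorem \ref{thm:properties} directly for completeness, soundness, round complexity, proof length, and the structural part of the query bound, and then carrying out a fresh arithmetic-complexity accounting tailored to the Hermitian-tower setting where every factor group has order $q$ rather than a small constant.

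First, the structural bounds. The protocol of Section \ref{subsec:fast-iopp} is exactly the generic AG-IOPP of Section \ref{subsec:iopp-overview} extended by an RS tail, so perfect completeness and the soundness expression both transfer from Theorem \ref{thm:properties} verbatim. The number of rounds equals $\ii + s'$, where $\ii \le \log_q n$ bounds the Hermitian-folding stage (since $n \le q^{\ii+2}$) and $s' \le \log_2 d_\ii$ bounds the RS tail; hence $\mathsf{r}(n) < \log n$. The proof length $\mathsf{l}(n) < n$ is the same geometric sum as in Theorem \ref{thm:kummer-properties}, since $\size{\calP_{i+1}} = \size{\calP_i}/p_i$ with $p_i \ge 2$. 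For query complexity, each query test issues $p_i$ queries in round $i$ (reusing the folded value from the next round), so the total is $t\sum_i p_i + 1 \le tq\log n + 1$, where the $+1$ accounts for the committed final constant $\beta$.

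For prover complexity, fix a round $i$ and a point $P \in \calP_{i+1}$. The honest prover interpolates the polynomial $I_{f^{(i)},P}$ of degree less than $q$ from its $q$ values $\{(x_i(\hat P), f^{(i)}(\hat P)) : \hat P \in S_P\}$ on the fiber $S_P$, then evaluates the folding formula \eqref{eq:fold-via-interp} at $P$. Using a fast univariate interpolation algorithm, interpolation costs $O(\mathsf{M}_\F(q)\log q)$ field operations per point, which dominates. The balancing-function tables $\{\nu_{i+1,j}(P)\}_{j,P}$ depend only on the code, so they can be precomputed; the explicit product form $\nu_{i+1,j} = \prod_{k=0}^{i-1} x_k^{a_{i,j}(k)}$ from Section \ref{subsec:HermitianTower} allows exponentiation by squaring, so each value is obtained in $O(i\log q)$ multiplications, which is absorbed into $\mathsf{M}_\F(q)\log q$ after amortizing over the $q$ values of $j$ and over the $|\calP_{i+1}|$ points of the round. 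Combining the interpolated coefficients with the precomputed balancing values in \eqref{eq:fold-via-interp} takes an additional $O(q)$ operations per point. Summing over rounds and using $\sum_i \size{\calP_{i+1}} < n$ yields $\mathsf{t_p}(n) = O(n \cdot \mathsf{M}_\F(q)\log q)$; the RS-tail rounds use $p_i = 2$ and contribute negligibly.

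For verifier decision complexity, each of the $\mathsf{r}(n) < \log n$ round consistency tests within a single query test reproduces exactly one interpolation of degree less than $q$ and one folding evaluation at the queried point, for a total of $O(\log n \cdot \mathsf{M}_\F(q)\log q)$ operations. The main obstacle I foresee is making the balancing-function accounting fully rigorous: in contrast with the Kummer case, where a single power $(x-\alpha)^{\kappa_i j}$ suffices, here the exponents $a_{i,j}(k)$ drawn from the Weierstrass-semigroup decomposition \eqref{eq:mij} can in principle be as large as $m_{i,j} = O(d_{i-1})$, and one must verify that a convenient solution of the diophantine equation can always be chosen whose exponentiation cost per point remains subdominant to the $O(\mathsf{M}_\F(q)\log q)$ interpolation cost after amortization across the $q$ functions per round and the geometric sum over rounds.
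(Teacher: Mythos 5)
Your proposal is correct and follows essentially the same route as the paper: the structural bounds (completeness, soundness, rounds, proof length, queries) are transferred from Theorem \ref{thm:properties} exactly as in the Kummer case with $\eta$ replaced by $q$, and the prover/verifier costs are dominated by fast univariate interpolation of degree $<q$ at a cost of $O(\mathsf{M}_\F(q)\log q)$ per point, which is precisely the paper's argument (the paper additionally notes the fiber values form an arithmetic progression, allowing the \cite{BrS05} interpolation bound). Your extra care about amortizing the balancing-function precomputation is a reasonable refinement but not needed, since those tables depend only on the code and are treated as precomputation outside $\mathsf{t_p}$, as in Remark \ref{rem:cost-balancing}.
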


\begin{proof}
	The proof follows from proof of Theorem \ref{thm:kummer-properties}, replacing $\eta$ by $q$. Prover and verifier complexities are computed from the cost of computing the coefficients of a univariate polynomial of degree less than $q$ from its evaluation on points forming an arithmetic progression in $\F = \F_{q^2}$. This interpolation task can be done in $\mathsf{M}_\F(q)\log q + O(\mathsf{M}_\F(q))$ base field operations \cite{BrS05}, where $\mathsf{M}_\F(d)$ denotes the cost of multiplying two degree-$d$ univariate polynomials in $\F[X]$.
\end{proof}

Given a foldable code as in Proposition \ref{prop:tower-family}, the IOPP constructs a sequence of codes as follow:
\[C_i \mydef C(\calX_{\ii-i},\calP_{\ii-i},D_{\ii-i}) \text{ where } \calP_{i-1}=\pi_i(\calP_i)\text{ and } D_i = d_iP^{(i)}_{\infty} \]
with the integers $d_i$ defined recursively by
\[
d_{i-1} \mydef \floor{ \dfrac{d_i}{q}} + 2g(\calX_{i-1}).
\]
\begin{remark}
	Be careful about the indices: the indices for the codes on one hand and for the curves, the support and the degree on the other hand, are in reserved order. The original code to which we test proximity is $C_0$, defined over the curve $\calX_{\ii}$, and the last RS code is $C_{\ii}$, which is defined over $\calX_0\simeq \PP^1$.
\end{remark}

Unlike the Kummer case, we have to increase the degree of divisor by twice the genus of the curve at each step to make sure the compatibility hypotheses of Definition \ref{def-div_compatible} are valid. This has a counterpart: the dimension of the codes $C_i$ decreases much slowly than their block length. A foldable code in the sense of Definition \ref{def:good_properties} may induce of a sequence of codes in which the last code $C_{\ii}$ is trivial. In this case, the protocol would no longer be sound. We thus need to control the dimension of the code $C_{\ii}$. This is the purpose of the remaining of this section.

\begin{remark}
	In light of the Kummer case in which the group $\Z/N\Z$ is factored as much as possible, if $q$ is some prime power $q=p^\ell$, we could split the additive group $\Fq \simeq \left(\Z/pZ\right)^\ell$ acting at each level with $\ell$ intermediary rounds. In \ref{lem:locality}, any value taken by the folding of a function $f$ would be determined by $p$ values of $f$ (instead of $q$ values), and the verifier and the prover would perform polynomial interpolations of degree $p$. However, for each of these intermediary steps, we would have to make the new divisor grow to fulfill the compatibility conditions (\autoref{def-div_compatible}), as mentioned above. If the rate increases too much, the relative minimum distance drops and the total number queries to target a designated soundness may be tremendous. The loss in terms of soundness error per \textsf{QUERY} phase seems to be much more significant than the aforementioned advantages. 
\end{remark}

\subsubsection{Bounding the rate of the underlying Reed-Solomon code}

We aim to bound the dimension of the code Reed-Solomon code $C_{\ii}$. Let us compute the degree $d_{\ii}$ of the divisor $D_{\ii}$ on $\mathbb{P}^1$.

\begin{lemma}\label{lem:deg_herm}
	For $1 \leq j \leq \ii$, we have 
	\[d_{\ii -j} \leq \left\lfloor \dfrac{d_{\ii}}{q^j}\right\rfloor + \sum\limits_{k=1}^j \left\lfloor \dfrac{2g_{\ii -k}}{q^{j-k}}\right\rfloor + (j-1).\] 
\end{lemma}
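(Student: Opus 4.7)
The plan is to prove the lemma by induction on $j$. The base case $j=1$ is immediate: the recursion in Equation~\eqref{eq:deg_herm} gives ${d_{\ii-1} = \floor{d_\ii/q} + 2g_{\ii-1}}$, which matches the stated bound with equality since $(j-1)=0$.

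For the inductive step, assuming the bound holds for some $j$ with $1 \leq j < \ii$, I would apply the recursion $d_{\ii-j-1} = \floor{d_{\ii-j}/q} + 2g_{\ii-j-1}$ together with the monotonicity of the floor to get $d_{\ii-j-1} \leq \floor{A_j/q} + 2g_{\ii-j-1}$, where $A_j$ denotes the right-hand side of the induction hypothesis. Noting that $\floor{2g_{\ii-j-1}/q^0} = 2g_{\ii-j-1}$, the task reduces to showing
\[\floor{A_j/q} \leq \floor{d_\ii/q^{j+1}} + \sum_{k=1}^{j} \floor{2g_{\ii-k}/q^{j+1-k}} + j.\]

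The main step is a joint carry analysis rather than a term-by-term use of subadditivity. Writing each floor-summand of $A_j$ in the form ${q \cd (\text{quotient at level } j+1) + (\text{remainder in } \set{0,\dots,q-1})}$, so that $\floor{d_\ii/q^j} = q \floor{d_\ii/q^{j+1}} + s_0$ and $\floor{2g_{\ii-k}/q^{j-k}} = q \floor{2g_{\ii-k}/q^{j+1-k}} + s_k$, and then collecting the multiples of $q$, one obtains $\floor{A_j/q}$ as the expected sum of quotients at level $j+1$ plus a single residual $\floor{R_j/q}$, where $R_j = s_0 + s_1 + \cdots + s_j + (j-1)$ with each $s_k \in \set{0,\dots,q-1}$. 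Bounding $R_j \leq (j+1)(q-1) + (j-1) = (j+1)q - 2$ yields $\floor{R_j/q} \leq j$ for every $q \geq 2$, closing the induction.

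The main obstacle is precisely this last inequality: a na\"ive iterated application of the inequality $\floor{(a+b)/q} \leq \floor{a/q} + \floor{b/q} + 1$ to the $j+2$ summands of $A_j$ would only give an error term of $j+1$, which is one too many. Treating the $j+1$ remainders and the additive constant $(j-1)$ jointly rather than term by term is what saves the extra unit, through the tight identity $(j+1)(q-1) + (j-1) = (j+1)q - 2$.
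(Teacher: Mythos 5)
Your proof is correct and follows the same route the paper indicates: induction on $j$ using the recursion $d_{i-1} = \floor{d_i/q} + 2g_{i-1}$ from \eqref{eq:deg_herm}. The paper leaves the inductive step unwritten, and your joint carry analysis (bounding $R_j \leq (j+1)q-2$ so that $\floor{R_j/q} \leq j$) correctly supplies the one non-trivial detail needed to close it.
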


\begin{proof}
	It follows from the definition of the degrees $d_i$ given in \eqref{eq:deg_herm} and by induction on $j$.
\end{proof}

From Lemma \ref{lem:deg_herm}, we can get an upper bound on $d_0$.

\begin{corollary}
	Let us assume that $2(\ii-1) < q$.
	The degree $d_0$ of the divisor $D_0$ on $\mathbb{P}^1$ is bounded from above by 
	\[d_0 \leq \left\lfloor \dfrac{d_{\ii}}{q^{\ii}}\right\rfloor + (\ii-1)\left( 1+\frac{\ii}{6} \cd \left(3q-4 + 2\ii\right) \right) \]
\end{corollary}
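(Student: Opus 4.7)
The starting point is to invoke Lemma \ref{lem:deg_herm} at its largest admissible index, $j = \ii$, which gives
\[
d_0 \leq \left\lfloor \frac{d_\ii}{q^\ii} \right\rfloor + \sum_{k=1}^{\ii} \left\lfloor \frac{2g_{\ii-k}}{q^{\ii-k}} \right\rfloor + (\ii - 1).
\]
After the substitution $i = \ii - k$ and noting that $g_0 = 0$, the middle sum reduces to $\sum_{i=1}^{\ii-1} \lfloor 2g_i / q^i \rfloor$, and the whole proof comes down to a sharp termwise estimate on this sum.

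The key claim is that
\[
\lfloor 2g_i / q^i \rfloor \leq iq + i(i-1) \qquad \text{for all } 1 \leq i \leq \ii - 1,
\]
a refinement of Proposition \ref{prop:maj_genus_herm} available under the hypothesis $q > 2(\ii - 1)$. Equivalently, one wants $g_i \leq \tfrac{i}{2}q^{i+1} + \tfrac{i(i-1)}{2}q^i$. The plan is to prove this by induction on $i$. The base case $i=1$ is an immediate consequence of $g_1 = q(q-1)/2$ from formula \eqref{eq:genus_herm}. The inductive step uses the recursion
\[
g_i = q \, g_{i-1} + \tfrac{q-1}{2}\bigl((q+1)^i - 1\bigr),
\]
which follows directly from \eqref{eq:genus_herm} (or from Riemann--Hurwitz applied to the Artin--Schreier cover $\pi_i$). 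The step reduces to the polynomial inequality
\[
(q-1)(q+1)^i \leq q^{i+1} + 2(i-1)q^i + (q-1),
\]
which is established by expanding the left-hand side via the binomial theorem and controlling the remainder using $i/q < 1/2$, guaranteed by the hypothesis.

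Granted the termwise bound, the remainder is an elementary computation using $\sum_{i=1}^{n} i = \binom{n+1}{2}$ and the hockey stick identity $\sum_{i=1}^{n} i(i-1) = 2\binom{n+1}{3}$:
\[
\sum_{i=1}^{\ii-1}\bigl(iq + i(i-1)\bigr) = q \cdot \frac{\ii(\ii-1)}{2} + 2\binom{\ii}{3} = \frac{\ii(\ii-1)}{6}\bigl(3q + 2\ii - 4\bigr).
\]
Adding the extra $(\ii - 1)$ produced by Lemma \ref{lem:deg_herm} and factoring $(\ii - 1)$ out of both addends yields precisely the stated bound.

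The main obstacle is the refinement of Proposition \ref{prop:maj_genus_herm}: the form stated in the excerpt gives only $\lfloor 2g_i/q^i\rfloor \leq iq + i^2$, which would yield $3q + 2\ii - 1$ rather than the desired $3q + 2\ii - 4$. Bridging this gap requires the hypothesis $2(\ii - 1) < q$ to leverage the fact that the unimodal binomial coefficients $\binom{i+1}{k}$ decay symmetrically around their peak, so that the positive and negative tail contributions of the expansion of $(1+1/q)^i$ partially cancel and stay within the tighter margin.
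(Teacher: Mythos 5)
Your proof is correct and, in its overall structure, identical to the paper's: apply Lemma \ref{lem:deg_herm} with $j=\ii$, bound each term $\lfloor 2g_i/q^i\rfloor$ by $iq+i(i-1)$, and sum; your closed form $\frac{\ii(\ii-1)}{6}(3q+2\ii-4)$ agrees with the paper's $\frac{\ii(\ii-1)}{2}\left(q-\frac{4}{3}+\frac{2\ii}{3}\right)$.

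The one substantive divergence rests on a misreading. You claim that Proposition \ref{prop:maj_genus_herm} ``gives only $\lfloor 2g_i/q^i\rfloor \leq iq+i^2$'' and that a refinement must be supplied. In fact the last inequality of that proposition states exactly $g_i \leq \frac{i}{2}q^{i+1}+\frac{i(i-1)}{2}q^i$, valid whenever $2(i-1)<q$ --- which holds for every $i\le \ii-1$ under the corollary's hypothesis $2(\ii-1)<q$. So the termwise bound $2g_i/q^i\le iq+i(i-1)$ is available off the shelf, and the paper's proof simply cites it. Your inductive re-derivation via the recursion $g_i = q\,g_{i-1}+\frac{q-1}{2}\bigl((q+1)^i-1\bigr)$ is a valid alternative (the recursion is correct, the base case $g_1=q(q-1)/2$ checks out, and the reduction to $(q-1)(q+1)^i \le q^{i+1}+2(i-1)q^i+(q-1)$ can indeed be closed with the binomial bound $\binom{i}{k}\le \frac{i(i-1)^{k-1}}{2}$ and $2(i-1)<q$, much as in the paper's proof of the proposition itself); it just solves a problem that was already solved. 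The closing paragraph about cancellation of unimodal binomial tails is dispensable hand-waving --- the stated polynomial inequality is all that is needed and you have already indicated how to prove it.
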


\begin{proof}
	By Lemma \ref{lem:deg_herm}, we have the following bound over $d_0$:
	\[d_0 \leq \left\lfloor \dfrac{d_{\ii}}{q^{\ii}}\right\rfloor + \sum\limits_{i=0}^{\ii-1} \left\lfloor \dfrac{2g_i}{q^{i}}\right\rfloor + \ii-1.\]
	It is thus enough to estimate the sum $\sum\limits_{k=0}^{\ii-1} \left\lfloor \dfrac{2g_k}{q^k}\right\rfloor$. By Proposition \ref{prop:maj_genus_herm},
	\begin{align*}
		\sum\limits_{k=0}^{\ii-1} \left\lfloor \dfrac{2g_k}{q^k}\right\rfloor & \leq  \sum\limits_{k=0}^{\ii-1} \left(kq + k(k-1)\right)  \\
		&= (q-1) \cd \dfrac{\ii(\ii-1)}{2} + \dfrac{\ii(\ii-1)(2\ii-1)}{6} \\
		&= \dfrac{\ii(\ii-1)}{2}\cd \left(q-\frac{4}{3} + \frac{2\ii}{3}\right),
	\end{align*}
	which gives the expected result.
\end{proof}

We aim to determine a sufficient condition on $\ii$ that ensures that the RS code $C_{\ii}$ is not trivial. Let us denote by $n_0$ the size of the support $\calP_0$ of $C_{\ii}$. It satisfies $n_0 \leq q^2$. The rate of $C_ {\ii}$ is equal to \[\frac{d_0+1}{n_0}.\] Also we have $n_{\ii} := \size{\calP_{\ii}} = q^{\ii}n_0$.

\begin{corollary}\label{cor:cdt_deg_herm}
	Let us fix $\rho \in (0,1)$. If
	\[ \left\lfloor \dfrac{d_{\ii}}{q^{\ii}}\right\rfloor +  (\ii-1)\left( 1+\frac{\ii}{6} \cd \left(3q-4 + 2\ii\right)\right) +1 < \rho n_0,\]
	then the rate of the RS code $C_{\ii}$ is less than $\rho$.
\end{corollary}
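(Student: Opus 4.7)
The plan is essentially a direct combination of the previous corollary with the definition of the rate of a Reed-Solomon code, so the proof should be only a few lines.

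First I would recall that the final code $C_{\ii}$ is the Reed-Solomon code on $\mathbb{P}^1$ associated to the divisor $D_0 = d_0 P_\infty^{(0)}$ and the evaluation set $\calP_0$ of size $n_0$. Hence its dimension equals $d_0 + 1$ (since every polynomial of degree at most $d_0$ belongs to $L_{\mathbb{P}^1}(d_0 P_\infty^{(0)})$ and $d_0 < n_0$ under the hypothesis), and therefore its rate is exactly
\[
\mathrm{rate}(C_{\ii}) \;=\; \frac{d_0+1}{n_0}.
\]

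Next I would invoke the previous corollary, which provides the upper bound
\[
d_0 \;\leq\; \left\lfloor \dfrac{d_{\ii}}{q^{\ii}}\right\rfloor + (\ii-1)\left( 1+\frac{\ii}{6}\cd \left(3q-4 + 2\ii\right) \right).
\]
Adding $1$ to both sides and dividing by $n_0$, I immediately obtain
\[
\frac{d_0+1}{n_0} \;\leq\; \frac{1}{n_0}\left( \left\lfloor \dfrac{d_{\ii}}{q^{\ii}}\right\rfloor + (\ii-1)\left( 1+\frac{\ii}{6}\cd \left(3q-4 + 2\ii\right) \right) + 1 \right).
\]
The assumption of the corollary is exactly that the right-hand side is strictly less than $\rho$, so I conclude $\mathrm{rate}(C_{\ii}) < \rho$, as desired.

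There is no real obstacle here: the work was done in the previous corollary where $d_0$ was controlled in terms of $d_{\ii}$, $q$, and $\ii$ via the accumulation of the $2g_{i-1}$ terms at each folding step. The only care needed is to check that the hypothesis also guarantees $d_0 < n_0$ (so that Riemann-Roch gives dimension $d_0 + 1$), but this follows since $\rho < 1$ and the expression bounding $d_0 + 1$ is less than $\rho n_0 < n_0$.
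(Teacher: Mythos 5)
Your proposal is correct and matches the paper's intent: the paper states (just before the corollary) that the rate of $C_{\ii}$ equals $\frac{d_0+1}{n_0}$, and the corollary follows immediately by combining this with the bound on $d_0$ from the preceding corollary, exactly as you do. Your extra remark that the hypothesis also guarantees $d_0 < n_0$ (so that the dimension is indeed $d_0+1$) is a correct and worthwhile detail the paper leaves implicit; the only thing to keep in mind is that the whole chain silently inherits the assumption $2(\ii-1) < q$ from the previous corollary, which governs the validity of the bound on $d_0$.
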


\subsubsection{Foldable codes with constant rate which are endowed with an IOPP with designed soundness}
In this paragraph, we focus on foldable codes of the form 
\begin{equation}\tag{\ref{eq:def_Cimax}}
	C_0= C\left(\calX_{\ii},\calX_{\ii}(\F_{q^2})\setminus \{P^{(\ii)}_{\infty} , (2\alpha+1)g_{\ii})P^{(\ii)}_{\infty}\right)
\end{equation}
for some $\alpha > 1/2$, as in Section \ref{subsec:foldable-towers-codes}. The evaluation set $\calP_{\ii}$ is the whole set of rational points of $\calX_{\ii}$ minus the point of at infinity $P_\infty^{(\ii)}$, \textit{i.e.} $n_{\ii}=q^{\ii +2}$.

\begin{proposition}
	Let us fix $\rho \in (0,1)$. The rate of the RS code $C_{\ii}$ below $C_0$ is less than $\rho$ if 
	\[2\ii^3 + 3\ii^2(2\alpha+q-1) + \ii (6\alpha(q-1)+7) - 6\rho q^2 <0.\]
\end{proposition}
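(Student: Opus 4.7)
The plan is to directly apply Corollary \ref{cor:cdt_deg_herm} to the foldable code defined by \eqref{eq:def_Cimax}. I would first identify the relevant parameters: by construction, $d_{\ii} = (2\alpha+1)g_{\ii}$, and the evaluation set $\calP_0$ at the bottom of the tower has size $n_0 = q^2$, since $\size{\calP_{\ii}} = q^{\ii+2}$ and each point of $\calP_0$ admits exactly $q^{\ii}$ preimages under $\Pi_{\ii}$. Substituting into the condition of Corollary \ref{cor:cdt_deg_herm} and using $\lfloor x\rfloor \le x$ reduces the task to establishing
\[
\frac{(2\alpha+1)g_{\ii}}{q^{\ii}} + (\ii-1)\left(1 + \frac{\ii(3q-4+2\ii)}{6}\right) + 1 < \rho q^2.
\]

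Next I would invoke Proposition \ref{prop:maj_genus_herm} from Appendix \ref{app:gen_HT} to replace $g_{\ii}/q^{\ii}$ by its explicit polynomial upper bound in $\ii$ and $q$ (the asymptotics quoted in the proof of Lemma \ref{lem:limit-rate} show this bound is quadratic in $\ii$ with leading behaviour $\ii q/2$). Once this bound is inserted, the claim becomes a purely polynomial inequality in the variables $\ii$, $q$, $\alpha$ and $\rho$.

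The remaining step is algebraic cleanup: multiply both sides by $6$, expand the two relevant products — in particular $(\ii-1)\ii(3q-4+2\ii) = 2\ii^3 + (3q-6)\ii^2 + (4-3q)\ii$ and $3(2\alpha+1)\ii(q-1+\ii) = (6\alpha+3)\ii q - (6\alpha+3)\ii + (6\alpha+3)\ii^2$ — and then regroup by degree in $\ii$. The additive contributions $6(\ii-1) + 6 = 6\ii$ feed into the linear term, the genus bound supplies the missing $\ii q$ and $\ii^2$ contributions, and after collecting coefficients one recovers exactly
\[
2\ii^3 + 3\ii^2(2\alpha+q-1) + \ii(6\alpha(q-1)+7) < 6\rho q^2.
\]
The main obstacle is not conceptual but pure bookkeeping: care is required to track the sign of each term when coalescing the contributions from the three sources (the genus bound, the constant remainder $+1$, and the cubic product), and to ensure the genus bound is tight enough so that the resulting linear-in-$\ii$ coefficient exactly equals $7 - 6\alpha$ rather than a slightly larger constant.
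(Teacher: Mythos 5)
Your proposal is correct and follows essentially the same route as the paper: apply Corollary \ref{cor:cdt_deg_herm} with $n_0 = q^2$, bound $d_{\ii}/q^{\ii} = (2\alpha+1)g_{\ii}/q^{\ii}$ via Proposition \ref{prop:maj_genus_herm} (which does require the standing assumption $2(\ii-1)<q$, handled in the paper by taking $q$ large enough), then multiply by $6$ and expand; your intermediate expansions and the cancellation of the constant terms all check out against the stated inequality.
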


\begin{proof}
	For $q$ large enough, we can assume that $2\ii-1<q$. Using Proposition \ref{prop:maj_genus_herm}, we get an upper bound over $d_{\ii}$: 
	\[ 
	d_{\ii} \leq (2\alpha+1)\dfrac{\ii}{2}q^{\ii}(q+(\ii-1)).
	\]
	From Corollary \ref{cor:cdt_deg_herm}, a sufficient condition for the underlying RS code to have a rate less than $\rho$ is    
	\[
	(2\alpha+1)\dfrac{\ii}{2}(q+(\ii-1)) + (\ii-1)\left(1+\frac{\ii}{6}(3q-4+2\ii)\right)+1 < \rho q^2
	\]    
	Multiplying the inequality by $6$, expanding and simplifying, we get our condition.
\end{proof}

Now assume that $\ii=q^{\varepsilon}$ for $\varepsilon \in (0,1)$. In the constant rate regime described in Lemma \ref{lem:limit-rate}, we have $\alpha \ii = Rq$. The condition above becomes 
\[2\ii^3 + 3\ii^2(q-1) + \ii(6Rq+7)< 6q^2\left(\rho-R\left(1-\frac{1}{q}\right)\right) .\]

If $R\left(1-\frac{1}{q}\right) < \rho$, the right handside is positive. Let us give a rough estimation of the largest $\epsilon$ such that $\ii=q^\epsilon$ satisfies this inequality. The left handside begin equivalent to $3q^{1+2\epsilon}$, we have $\epsilon \simeq \frac{1}{2}(1+\log_q\left(\rho-R\left(1-\frac{1}{q}\right)\right)$.

Table \ref{table:param-HT} displays some examples of level $\ii$ and initial rate $R$ of foldable codes for which the AG-IOPP reduce the proximity test to testing RS codes of rate $\rho$. In terms of the soundness of the protocol, it means that $\lambda$ as defined in Theorem \ref{thm:properties} is greater than $1-\rho$.

\begin{table}[H]
	\centering
	\renewcommand{\arraystretch}{1.2}
	\begin{tabular}{|c|c|c|c|c|}
		\hline
		$q$      & $\ii$ & $n$  & $R$    & $1-\rho>$   \\\hline
		$2^4$      & 3  & $2^{20}$ & \multirow{2}{*}{$1/8$} & \multirow{2}{*}{$1/3$} \\ \cline{1-3}
		$2^5$      & 5  & $2^{35}$ &         &          \\\hline
		$2^4$      & 4  & $2^{24}$ & \multirow{6}{*}{$1/16$} & $1/3$ \\ \cline{1-3} \cline{5-5}
		\multirow{2}{*}{$2^5$}  & 3  & $2^{25}$ &  & $3/4$ \\ \cline{2-3} \cline{5-5}
		& 5  & $2^{35}$ &  & $1/2$ \\ \cline{1-3} \cline{5-5}
		\multirow{3}{*}{$2^6$}      & 4  & $2^{36}$ &  & $3/4$ \\ \cline{2-3} \cline{5-5}
		& 5  & $2^{42}$ &  & $2/3$ \\ \cline{2-3} \cline{5-5}
		& 7  & $2^{54}$ &  & $1/2$ \\ \hline
		$2^4$      & 3  & $2^{20}$ & 1/32 & $1/2$ \\ \hline
	\end{tabular}
	\renewcommand{\arraystretch}{1}
	\caption{Example of parameters of foldable codes of rate $R$ along the Hermitian tower. Alphabet is $\F_q^2$ and block length is $n$. The last column gives a bound on the minimal distance of the RS code.}\label{table:param-HT}
\end{table}

%%
%% Bibliography
%%

%% Please use bibtex, 

\bibliography{biblio-final}

\appendix

\section{Proof of Proposition \ref{prop:BKS18-eta}}\label{app:prop-ldc}

Proposition \ref{prop:BKS18-eta} is a weighted version of \cite[Theorem~4.5]{BKS18}. We only highlight the changes to be made in the proof of \cite[Theorem~4.5]{BKS18}.

For $z \in \F$ and $(v_0,\dots,v_{l-1}) \in V^l$, let us set $\displaystyle{v_z \mydef \sum_{i = 0}^{l - 1} z^i v_i}$. Rewriting the proof of Theorem 4.5 \cite{BKS18} with
\[A \mydef\left\{z\in \F \mid \omega_\eta\left(u_z, V\right)>1-\delta \right\}\]
provides $v_0, \ldots, v_{l-1} \in V$ and a set
\[C\mydef \left\{z \in \F \mid \omega_\eta\left(u_z,v_z \right) > 1-\delta\right\} \subset A \]
with cardinality $|C| > \frac{l-1}{\epsilon}$. Let us set $T\mydef \{P \in \calP \mid \rest{u_i}{T}=\rest{v_i}{T} \text{ for all } i\}$. Therefore
\begin{align*}
	1-\delta &< \frac{1}{|C|} \sum_{z \in C} \omega_\eta\left(u_z, v_z\right)\\
	& =\frac{1}{|C|\times |\calP|} \sum_{z \in C} \sum_{P \in \calP} \eta(P) \mathds{1}_{u_z(P) = v_z(P)}\\
	& =\frac{1}{|\calP|} \sum_{P \in \calP} \eta(P) \frac{1}{|C|} \sum_{z \in C}  \mathds{1}_{u_z(P) = v_z(P)}
\end{align*}
Notice that if there exists $i \in\range{l}$ such that $u_i$ which does not coincide with $v_i$, the number of $z \in \F$ such that $u_z(P)= v_z(P)$ is at most $l-1$. Then 

\begin{align*}
	1-\delta & \leq \frac{1}{|\calP|} \sum_{P \in T} \eta(P) + \frac{1}{|\calP|} \sum_{P \in C\setminus T} \eta(P) \frac{l-1}{|C|}\\
	& \leq \frac{1}{|\calP|} \sum_{P \in T} \eta(P) + \epsilon,
\end{align*}
which gives the first item of the proposition.

\section{Properties of the genera of the curves in the Hermitian tower}\label{app:gen_HT}

To estimate the parameters of the foldable codes we define along the Hermitian tower, we need to handle the genera of the curves in this tower. From the formulae \eqref{eq:genus_herm}, we deduce a bound on the genus of the curve $\calX_i$ for small $i$ (Proposition \ref{prop:maj_genus_herm}) and the asymptotic behaviour of the ratio of $g_i$ by $q^{i+2}$ for $i=q^{\epsilon}$ when $q$ goes to infinity (Lemma \ref{lem:gen_sympt}).

\begin{proposition}\label{prop:maj_genus_herm}
	For $i \geq 1$, we have
	\[g_i \leq \dfrac{q^{i+1}}{2} \sum\limits_{k=1}^i \binom{i}{k} \dfrac{1}{q^{k-1}} \leq \dfrac{iq^{i+1}}{2} \sum\limits_{k=1}^i \left(\dfrac{i}{q}\right)^{k-1} \leq \dfrac{i}{2}q^{i+1} + \dfrac{i(i-1)}{2}q^i,\]
	the last inequality holding only if $2(i-1) < q$.
\end{proposition}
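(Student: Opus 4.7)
The starting point is the closed-form identity
\[g_i = \tfrac{1}{2}\left[(q^2-1)\bigl((q+1)^i - q^i\bigr) + 1 - q^i\right]\]
from \eqref{eq:genus_herm}. Since $1 - q^i \leq 0$ and $q^2 - 1 \leq q^2$, I would first bound $g_i \leq \tfrac{q^2}{2}((q+1)^i - q^i)$, then apply the binomial theorem $(q+1)^i - q^i = \sum_{k=1}^{i}\binom{i}{k}q^{i-k}$ and factor out $q^{i+1}$ to obtain the first claimed bound $\tfrac{q^{i+1}}{2}\sum_{k=1}^{i}\binom{i}{k}/q^{k-1}$. The second bound follows immediately from the crude estimate $\binom{i}{k} \leq i^k$ (valid for $k \geq 1$ since $\binom{i}{k} \leq i^k/k!$), applied termwise to give $\binom{i}{k}/q^{k-1} \leq i\cdot(i/q)^{k-1}$.

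For the final bound, rather than trying to chain from the middle expression, I would return to the first bound and isolate the $k=1$ and $k=2$ contributions explicitly:
\[\tfrac{q^{i+1}}{2}\sum_{k=1}^{i}\binom{i}{k}/q^{k-1} = \tfrac{i}{2}q^{i+1} + \tfrac{i(i-1)}{4}q^i + \tfrac{q^i}{2}\sum_{k=3}^{i}\binom{i}{k}/q^{k-2}.\]
It then suffices to prove the tail estimate $\sum_{k=3}^{i}\binom{i}{k}/q^{k-2} \leq \tfrac{i(i-1)}{2}$, since that absorbs the third term into $\tfrac{i(i-1)}{4}q^i$ and upgrades the coefficient of $q^i$ from $\tfrac{i(i-1)}{4}$ to $\tfrac{i(i-1)}{2}$, matching the target bound.

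For this tail I would use the identity $\binom{i}{k} = \binom{i}{2}\cdot\tfrac{2}{k(k-1)}\binom{i-2}{k-2}$, together with $\tfrac{2}{k(k-1)} \leq \tfrac{1}{3}$ for $k \geq 3$ and the standard $\binom{i-2}{k-2} \leq (i-2)^{k-2}$. Reindexing with $j = k-2$, the tail is bounded by $\tfrac{\binom{i}{2}}{3}\sum_{j=1}^{i-2}((i-2)/q)^j$. The hypothesis $2(i-1) < q$ forces $(i-2)/q < 1/2$, so the geometric series sums to at most $2(i-2)/q$, yielding an overall bound of $\tfrac{i(i-1)(i-2)}{3q}$. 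This is at most $\tfrac{i(i-1)}{2}$ because $2(i-2) \leq 3q$ also follows from the hypothesis, which completes the proof. The main obstacle is precisely this tail estimate: the bound $\binom{i}{k} \leq i^k$ used for the middle inequality is too crude to produce the target coefficient $\tfrac{i(i-1)}{2}$, so the refined binomial identity together with the two-fold use of $2(i-1) < q$ is essential both to control the geometric series and to absorb the resulting cubic factor $i(i-1)(i-2)$ into the quadratic coefficient.
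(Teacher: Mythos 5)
Your proof is correct and follows essentially the same route as the paper's: expand the genus formula binomially, peel off the leading term(s), and control the remaining tail by a geometric series whose ratio is below $1/2$ thanks to the hypothesis $2(i-1)<q$; the only difference is bookkeeping (the paper handles all $k\ge 2$ at once via $\binom{i}{k}\le \tfrac{i(i-1)^{k-1}}{2}$, while you treat $k=2$ exactly and bound the $k\ge 3$ tail separately with a refined identity). You were also right not to chain the last inequality from the middle expression: that expression is at least $\tfrac{i}{2}q^{i+1}+\tfrac{i^2}{2}q^{i}$ for $i\ge 2$, so, exactly as in the paper, the final bound has to be derived directly from the first sum.
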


\begin{proof}
	Starting from the second formula of \eqref{eq:genus_herm}, we can write
	
	\begin{align*} g_i  &= \dfrac{1}{2} \cd \left(q^{i+1}\sum\limits_{k=1}^i \left(1+\frac{1}{q}\right)^{k-1} +1 -(1+q)^i\right)\\ 
    &\leq \dfrac{q^{i+1}}{2} \cd q \cd ((1+1/q)^i-1) \\ &= \dfrac{q^{i+1}}{2} \cd \sum\limits_{k=1}^i \binom{i}{k} \dfrac{1}{q^{k-1}},
    \end{align*}
	using that the term outside the geometric sum is non-positive. Note that if $k \geq 2$, then we can bound the binomial coefficients as follows
	\[\binom{i}{k} = \dfrac{i(i-1) \cdots (i-k+1)}{k(k-1) \cdots 2} \leq \dfrac{i(i-1)^{k-1}}{2},\] 
	as the denominator is greater than $2$ and the factors $i-1,i-2,...,i-k+1$ are all lesser than $i-1$. Factoring and using this upper bound on the binomial coefficients, we get
	\[  g_i \leq \dfrac{q^{i+1}}{2} \cd \left( i + \frac{i}{2} \sum\limits_{k=2}^i \left(\dfrac{i-1}{q}\right)^{k-1}\right) = \dfrac{iq^{i+1}}{2} \left( 1 + \frac{1}{2} \cd \left(\frac{i-1}{q}\right) \cd \sum\limits_{k=2}^i \left(\dfrac{i-1}{q}\right)^{k-2}\right).\]
	Assuming that $2(i-1) < q$, we can bound the sum  $\sum\limits_{k=2}^i \left(\dfrac{i-1}{q}\right)^{k-2}$ by $2$, which concludes the proof.
\end{proof}

\begin{lemma}\label{lem:gen_sympt}
	Fix $\epsilon \in (0,1)$ and set $i = q^{\epsilon}$. Then
	\[\frac{g_i}{q^{i+2}} \underset{q \rightarrow \infty}{\sim} \frac{1}{2q^{1-\epsilon}}.\]
\end{lemma}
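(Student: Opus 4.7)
The plan is to work directly from the closed-form expression for $g_i$ and carry out an elementary asymptotic expansion. Recall from \eqref{eq:genus_herm} that
\[
2g_i = q^2\bigl((q+1)^i - q^i\bigr) - (q+1)^i + 1,
\]
so that
\[
\frac{2g_i}{q^{i+2}} = \left(1+\tfrac{1}{q}\right)^i - 1 \;-\; \frac{1}{q^2}\left(1+\tfrac{1}{q}\right)^i \;+\; \frac{1}{q^{i+2}}.
\]

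The whole question therefore reduces to understanding the asymptotic behaviour of $(1+1/q)^i - 1$ when $i = q^\epsilon$. First I would write $(1+1/q)^i = \exp\bigl(i\log(1+1/q)\bigr)$ and use $\log(1+1/q) = 1/q + O(1/q^2)$, so that with $i = q^\epsilon$,
\[
i\log(1+1/q) = q^{\epsilon-1} + O\bigl(q^{\epsilon-2}\bigr).
\]
Since $\epsilon < 1$, this exponent tends to $0$, so a single-term expansion of $\exp$ gives
\[
\left(1+\tfrac{1}{q}\right)^i - 1 \;=\; q^{\epsilon-1} + O\bigl(q^{2(\epsilon-1)}\bigr) + O\bigl(q^{\epsilon-2}\bigr) \;\underset{q\to\infty}{\sim}\; q^{\epsilon-1}.
\]

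It then remains to check that the two other terms in the expression for $2g_i/q^{i+2}$ are negligible compared to $q^{\epsilon-1}$. The term $\frac{1}{q^2}(1+1/q)^i$ is bounded by $\frac{1}{q^2}\cdot e^{q^{\epsilon-1}} = O(1/q^2)$, which is indeed $o(q^{\epsilon-1})$ since $\epsilon > 0$ forces $1/q^2 \ll 1/q^{1-\epsilon}$; the last term $q^{-(i+2)}$ is obviously even smaller. Combining these estimates yields
\[
\frac{2g_i}{q^{i+2}} \;\underset{q\to\infty}{\sim}\; q^{\epsilon-1},
\]
which is exactly the claimed equivalent $\frac{g_i}{q^{i+2}} \sim \frac{1}{2q^{1-\epsilon}}$.

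There is no real obstacle here since the formula for $g_i$ is explicit and $i = q^\epsilon$ with $\epsilon < 1$ keeps the exponent $i/q$ small; the only place where one must be slightly careful is ensuring that the correction $O(q^{\epsilon-2})$ coming from the second-order term of $\log(1+1/q)$ and the quadratic term of $\exp$ does not pollute the leading behaviour, which is immediate because $q^{\epsilon-2}$ and $q^{2(\epsilon-1)}$ are both $o(q^{\epsilon-1})$ when $\epsilon \in (0,1)$.
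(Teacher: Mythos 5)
Your proof is correct and follows essentially the same route as the paper: both start from the closed-form expression \eqref{eq:genus_herm} for $g_i$, divide by $q^{i+2}$ to isolate $(1+1/q)^{q^\epsilon}-1$ as the dominant term, and extract the equivalent $q^{\epsilon-1}$ by expanding $\exp(i\log(1+1/q))$ with $i=q^\epsilon$. The only cosmetic difference is that the paper keeps the factor $(1-1/q^2)$ in front of $[(1+1/q)^i-1]$ while you expand the product and absorb the extra $-\frac{1}{q^2}(1+1/q)^i$ term separately, which changes nothing in substance.
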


\begin{proof}
	From the first formula of \eqref{eq:genus_herm}, we get
	\[\frac{2g{\ii}}{q^{\ii+2}}=\left(1-\frac{1}{q^2}\right)\left[\left(1+\frac{1}{q}\right)^{q^\epsilon}-1\right]+\frac{1}{q^{\ii+2}}-\frac{1}{q^2}\]
	Let us examine the asymptotic behaviour of $\left(1+\frac{1}{q}\right)^{q^\epsilon}$ when $q$ goes to infinity. Set $h = q^{-1}$.
	
	\[\left(1+\frac{1}{q}\right)^{q^\epsilon}=\exp\left(h^{1-\epsilon}\cdot\frac{\log(1+h)}{h}\right)=\exp\left(h^{1-\epsilon}\left(1-\frac{h}{2}+o(h)\right)\right)=1+h^{1-\epsilon}+o(h^{1-\epsilon})\]
	Therefore, we have
	\[\left(1+\frac{1}{q}\right)^{q^\epsilon}-1 \sim \frac{1}{q^{1-\epsilon}}.\]
\end{proof}

\end{document}